\newtheorem{remark}{\bfseries Remark}
\newtheorem{theorem}{\bfseries Theorem}
\newtheorem{lemma}{\bfseries Lemma}
\newcommand{\qed}{\hfill \ensuremath{\Box}}
\newcommand{\Ex}{\mathop{\bf E\/}}
\title{\LARGE{Distributed Algorithms for Consensus and Coordination in the Presence of Packet-Dropping Communication Links\\
\Large{Part I: Statistical Moments Analysis Approach }}}
 \author{\authorblockN{Alejandro~D.~Dom\'{i}nguez-Garc\'{i}a,~\IEEEmembership{Member,~IEEE}}\\
\authorblockN{Christoforos N. Hadjicostis,~\IEEEmembership{Senior Member,~IEEE}}\\
\authorblockN{Nitin H. Vaidya,~\IEEEmembership{Fellow,~IEEE}} \\
\authorblockN{~} \\
\authorblockN{September 28, 2011}
\thanks{University of Illinois at Urbana-Champaign.  Coordinated Sciences Laboratory technical report UILU-ENG-11-2207 (CRHC-11-05)}
\thanks{A. D. Dom\'{i}nguez-Garc\'{i}a and N. H. Vaidya are with the Department of Electrical and Computer Engineering at the University of Illinois at Urbana-Champaign, Urbana, IL 61801, USA. E-mail:  \{aledan, nhv\}@ILLINOIS.EDU.}
\thanks{C. N. Hadjicostis is with the Department of Electrical and Computer Engineering at the University of Cyprus, Nicosia, Cyprus, and also with the Department of Electrical and Computer Engineering at the University of Illinois at Urbana-Champaign, Urbana, IL 61801, USA. E-mail:  chadjic@UCY.AC.CY.}
\thanks{The work of A. D. Dom\'{i}nguez-Garc\'{i}a was supported in part by  NSF under Career Award ECCS-CAR-0954420. The work of C. N. Hadjicostis was supported in part by the  
European Commission (EC) Seventh Framework Programme (FP7/2007-2013) 
under grant agreements INFSO-ICT-223844 and PIRG02-GA-2007-224877. The work of N. H. Vaidya was supported in part by Army Research Office grant W-911-NF-0710287 and NSF award 1059540. Any opinions, findings, and conclusions or recommendations expressed here are those of the authors and do not necessarily reflect the views of the funding agencies, the U.S. government, or the European Commission.}

}
\begin{document}

\markboth{Coordinated Sciences Laboratory technical report UILU-ENG-11-2207 (CRHC-11-05)}{}

\maketitle

\begin{abstract}
This two-part paper discusses robustification methodologies for linear-iterative distributed algorithms for consensus and coordination problems in multicomponent systems, in which unreliable communication links may drop packets. We consider a setup where  communication links between  components can be asymmetric (i.e., component $j$ might be able to send information to component $i$, but not necessarily vice-versa), so that the information exchange between components in the system is in general described by a directed graph that is assumed to be strongly connected. In the absence of communication link failures, each component $i$ maintains two auxiliary variables and updates each of their values to be a linear combination of their corresponding previous values and the corresponding previous values of neighboring components (i.e., components that send information to node $i$). By appropriately initializing these two (decoupled) iterations, the system components can asymptotically calculate variables of interest in a distributed fashion; in particular, the average of the initial conditions can be calculated as a function that involves the ratio of these two auxiliary variables. The focus of this paper  to robustify this double-iteration algorithm against communication link failures. We achieve this by modifying the double-iteration algorithm (by introducing some additional auxiliary variables) and prove that the modified double-iteration converges almost surely to average consensus. In the first part of the paper, we study the first and second moments of the two iterations, and use them to establish convergence, and illustrate the performance of the algorithm with several numerical examples. In the second part, in order to establish the convergence of the algorithm, we use coefficients of ergodicity commonly used in analyzing inhomogeneous Markov chains.

\end{abstract}
\newpage

\onehalfspace

\section{Introduction}
The design of protocols and algorithms for distributed computation and control/decision tasks has attracted significant attention by the computer science, communication, and control communities (e.g., \cite{Ly:96, OlFa:07, Koetter03, Rabbat04, Giridhar05, Hromkovic05, Ts:84} and references therein). For example, given i) a collection of robots moving in the plane, ii) a collection of sensors in a sensor network, or iii) a collection of distributed energy resources in an electrical grid, the components may be interested in, respectively, i) agreeing on a common direction to follow (this common direction could be provided by a leader robot), ii) measurement averaging (with each sensor providing a local measurement of a global quantity), or iii) collectively providing a predetermined total amount of active power subject to the constraints of each distributed resource. In the control literature, the first and second problems are respectively known as consensus and  average consensus (see, e.g., \cite{OlFa:07}), whereas the third problem can be considered as a distributed resource coordination problem \cite{DoHa:11c,DoHa:11b}. 

In this two-part paper, we consider multicomponent systems in which each component  can exchange information with other components in its neighborhood in order to compute, in a distributed fashion, some quantity of interest.  In our setup, communication links between  components (nodes) can be asymmetric (i.e., component $j$ might be able to send information to component $i$, but not necessarily vice-versa), a situation that arises in a wireless setting if the transmission power available to different nodes are also different. In this setting, the information exchange between components in the system can be described by a directed graph which is assumed to be strongly connected. Through an iterative process, nodes in the network are required to compute (using only information made available by their neighbors) the quantity of interest. In particular, we study linear-iterative algorithms in which each node $j$ maintains a value (or a set of values) that is updated to be a weighted linear combination of node $j$'s own previous value and the previous values of its neighboring nodes (i.e., nodes that  transmit information to node $j$). The main focus of the paper is to develop strategies to robustify the linear-iterative algorithms described above against communication links that may drop packets. 

In the context of consensus  and average-consensus problems,  an extensive literature in the control community  focuses on the linear-iterative algorithms described above (e.g., \cite{Ts:84, Ja:03, OlFa:07,SundaramHadjicostis08, XiaoBoyd04, Moreau05,OlsTsi:09}  and references therein).  These works have revealed that if the network topology satisfies certain conditions, the weights for the linear iteration can be chosen so that all   the nodes asymptotically converge to the same value (even if the network connections are time-varying). Additionally, if the interconnection topology is invariant and bidirectional (i.e., if node $j$ can send information to node $i$, then node $i$ can send information to node $j$), simple techniques can be used to choose the weights of the linear iteration  so as to ensure that, after running the linear iteration, the nodes will asymptotically reach consensus to the average of their initial values \cite{OlFa:07,SundaramHadjicostis08, XiaoBoyd04}.  Other works have looked at the consensus and average-consensus problems when the interconnection topology is described by a {\em directed} graph. In particular, the authors of \cite{SaMu:2003}  focus on  continuous-time linear iterations and state necessary and sufficient conditions for a network of integrators to asymptotically reach agreement to a common value (but not necessarily the average of their initial values). Similarly, the authors of \cite{XiaoBoyd04} consider discrete-time iterations, and provide necessary and sufficient conditions on the weights that allow the nodes to asymptotically reach consensus to the average of their initial values. Additionally, the work in \cite{DoHa:11d,DoHa:11a} discusses how average-consensus can be reached asymptotically with linear-iterative algorithms in which the nodes use fixed weights in their linear updates and also develops linear-iterative algorithms where the nodes adapt their weights in a distributed fashion so that asymptotically average-consensus is reached. In the context of resource coordination, there is some recent work \cite{DoHa:10, DoHa:11c,DoHa:11b} that also focuses on linear-iterative algorithms, similar to those used to address consensus and average-consensus problems. Other recent work has addressed the related problem of achieving consensus and average-consensus in a multicomponent system where some nodes can exhibit malicious behavior   \cite{SuHa:11,PaBi:11}. These works assume fault-free communication links, but are related to what we do in this paper in the sense that they can be used to handle unreliable nodes (as opposed to links).

In our development, we adopt a very general model for the communication modality between nodes, which allows asymmetric information structures, in the sense that if node $i$ can transmit information to another node $j$, it is not necessarily true that node $j$ can transmit information to node $i$.  We only require that each node, apart from seeing incoming transmissions sent to it by neighboring nodes, knows the number of nodes that it can transmit information to, which in graph-theoretic terms is referred to as the out-degree of that node.  In fact, in the proposed algorithm, each node will broadcast the same quantity to all receiving nodes, which simplifies the communication scheme between sending and receiving nodes (as it is not necessary for each sending node to separately communicate with each receiving node). 

When the communication network is perfectly reliable (no packet drops),  the collective dynamics of the linear iterations can be described by a discrete-time transition system with no inputs in which the transition matrix is column stochastic and primitive. Then, each node will run two identical copies of the linear iteration each of which, however is initialized differently depending on the problem to be solved.In this paper we mostly focus on the average consensus problem. Under proper initialization, it can be shown that each node will asymptotically calculate the desired value as a function of the outcomes of the two iterations. The details of these double-iteration approach are provided in \cite{DoHa:11d,DoHa:11a} for the average consensus case and in \cite{DoHa:11c,DoHa:11b} for the resource coordination problem. For the average-consensus problem, the double-iteration algorithm is a particular case of the algorithm in \cite{Benezit:10} (which is a generalization of the algorithm proposed in  \cite{KeDo:03}), where the matrices describing each linear iteration are allowed to vary as time evolves, whereas in our setup (for the ideal case when there are no communication link failures) the  transition matrix is fixed over time. 

The focus of this paper is to robustify the double-iteration algorithm (informally described above and formally described in Section~\ref{preliminaries}) so that it can tolerate failures in communication links and converge to the average value. Our communication link reliability model assumes that at each time step, a communication link is unavailable with some probability. In other words, a packet containing information from node $i$ to node $j$ is dropped with some probability. Next we informally describe our robustification approach. Consider two nodes $i$ and $j$, and assume that $j$ receives information from node $i$ but not necessarily vice-versa. Let us refer to $j$ as the receiving node (or receiver) and $i$ as the sending node (or sender). An important requirement is for the graph describing the communication network to be strongly connected, which implies every node must be able to act  both as a sender and as a receiver. Then, for each of the two iterations each node performs, node $i$ (the sender) will keep track of the following quantities of interest: i) its own internal state (as captured by the state variables maintained in the double iteration scheme of \cite{Benezit:10,DoHa:10}; ii) the total mass broadcasted so far (to be described in detail soon); and  iii) the total received mass from each node $l$ that sends information to node $i$. Similarly, for both iterations, each node $j$ (the receiver) updates the value 
of its internal state to be a  linear combination of its own previous internal state value (weighted by the inverse of the number of nodes that have $j$ as a neighbor) and the difference between the two most recently  received mass values from each of its neighbors (also weighted by the inverse of the number of nodes that have $j$ as a neighbor). At time instant $k$, the total broadcasted mass by node $j$ is the sum up to  (and including) time step $k$ of the sequence of values of node $j$'s internal  value, weighted by the inverse of the number of nodes that receive values  from node $j$). Additionally, node $j$ updates the value of the  received mass from each node $l$ that sends information to node $j$ as  follows: the received mass from node $l$ is the total broadcasted mass sent  by node $l$ up to time $k$ if the communication link from node $l$ to node $j$ is available at time step $k$; otherwise, the received mass remains the same as the most recently received mass from node $l$. An implicit assumption here is that messages broadcasted by node $l$ are tagged with the sender's identity so that the receiving node $j$ can determine where different packages have originated from.

Recent work that has addressed the consensus and average-consensus problems in the presence of unreliable communication links \cite{PaBa:07,FaZa:09,ChTr:10} has employed a  communication link availability model  similar to ours. The work in \cite{PaBa:07} assumes that the graph describing the communication network is undirected and that when a communication link fails it affects communication in both directions. Additionally, nodes have some mechanism to detect link unavailability and compensate for it by rescaling their other weights (so that the resulting transition matrix remains column stochastic). Following this strategy, the authors show asymptotic convergence to the average of initial conditions and also calculate the rate at which the variance of the total deviation from the average converges to zero. The work in \cite{FaZa:09} does not require   the graph describing the communication network to be undirected and proposes two compensation methods to account for communication link failures. In the first method, the so-called biased compensation method,  the receiving node   compensates for the unavailability of an incoming link by adding the weight associated to the unavailable link to its own weight (so that the resulting matrix remains row stochastic). In the second method, called the balanced compensation method, the receiving node  compensates for link unavailability by rescaling all the incoming link weights so that the resulting matrix remains row stochastic. The key in both methods is the fact that at each time step, the resulting weight matrix is row stochastic; the authors  show that the nodes  converge almost surely to the same value, but this value is not necessarily the average of the initial conditions. The work in  \cite{ChTr:10}, which does not require the  communication graph to be undirected, proposes a correction strategy that corrects the errors in the quantity (state)   iteratively calculated by each node, so that the nodes obtain the average of their  initial values. This correction strategy is based on each node maintaining some auxiliary variable that accounts for the amount by which node $i$ changes its state due to the updates from its neighbors, i.e., the nodes that can send information to node $i$. For their strategy to work and ensure that the nodes converge almost surely to the average consensus, the authors rely on the nodes sending acknowledgment messages and retransmitting information an appropriate number of times. 

In \cite{Benezit:10}, the authors  proposed a gossip-based algorithm for average-consensus ver a directed graph where the transition matrices describing the nodes' collective dynamics change at every iteration step (depending on which node awakes). This scheme requires the node that is awake to perform an internal state update and send  its internal state (weighted by the corresponding out-going link weight) to its neighbors. This approach results in generates a sequence of column stochastic matrices (not necessarily primitive) with the property that all the diagonal entries remain positive. The authors prove that by running two such iterations in parallel, one of them initialized with the values on which the average operation is to be performed and the other with the all-ones vector, each node  will asymptotically achieve average consensus by taking the ratio of the two values in maintains. A key premise in their proof is that column stochasticity of the transition matrix is maintained over time, which requires sending nodes to know the number of nodes that are listening. This suggests that i) either the communication links are perfectly reliable, or ii) there is some acknowledgment and retransmission mechanism that ensures messages are delivered to the listening nodes at every round of information exchange. In this paper, we remove such assumptions and robustify the double-iteration algorithm against unreliable communication links using  a pure broadcast-message model without any requirement for an acknowledgment/retransmission mechanism. Thus, despite the reliance  of our algorithm on the ratio of two linear iterations, it is different both in the communication model we assume---a broadcast model in our case---and also in the nature of the protocol itself---our focus is on ensuring convergence in the presence of communication link failures. 

An additional assumption made in  \cite{Benezit:10} is that the diagonal entries of the transition matrix (at every step) remain positive. In our model, we originally consider that nodes do not drop self-packets. However, to ease the analysis, we remove this assumption and consider the case where self-packet drops are also allowed at every time step, which i) allows us to handle  intermittent faults in the node processing device, and ii)  removes the assumption that all diagonal entries must be positive at every step. Finally, the analysis machinery in \cite{Benezit:10} is quite different from the one used in this paper. We employ moment analysis of the two iterations to establish that they are linearly related as the number of steps goes to infinity, while  \cite{Benezit:10} relies on establishing weak ergodicity of the product of the transition matrices as the number of steps goes to infinity. Finally, as it will be shown in the second part of this paper, our algorithm can be re-casted into a similar framework as the one in \cite{Benezit:10} by augmenting the dimension of the vector describing the collective dynamics to account for the packets that get dropped once there is a communication failure. Note, However, that the resulting matrices will be column stochastic but will not necessarily have  strictly positive entries on their diagonals. In the second part of the paper, we provide an analysis framework to establish the convergence of our algorithm and generalizes the ideas in \cite{Benezit:10} to the case when the matrices describing the system collective dynamics do not have strictly positive diagonals. In this regard, we will show that even in the case where self-packet drops are not allowed, the resulting transition matrices might still have zero diagonal entries.

The remainder of this paper is organized as follows. Section~\ref{preliminaries} provides background on graph theory, introduces the communication model, and briefly describes the non-robust version of the double-iteration algorithm we use in this work. Section~\ref{algorithm} describes the proposed strategy to robustify the double-iteration algorithm against communication link failures and illustrates the use/performance of the algorithms via several examples. The convergence analysis of the robustified double-iteration algorithm is provided in Section~\ref{convergence_analysis}. Concluding remarks are presented in Section~\ref{concluding_remarks}.

\section{Preliminaries} \label{preliminaries}
This section provides background of  graph-theoretic notions that are used to describe  the communication network and the distributed consensus/coordination setup,   introduces the basic communication link availability model, and  reviews a previously proposed  two-iteration algorithm that can be used to solve the class of problems addressed in this paper  when the communication network is perfectly reliable.

%
%

\subsection{Network Communication Model}
Let discrete time instants be indexed $k=0,1,\dots$; then, the information exchange between nodes (components) at each time instant $k$ can be described by a directed graph $\mathcal{G}[k]=\{\mathcal{V}, \mathcal{E}[k]\}$, where $  \mathcal{V}=\{1,2,\dots,n\} $ is the vertex set (each vertex---or node---corresponds to a system component), and $  \mathcal{E}[k] \subseteq \mathcal{V} \times \mathcal{V} $ is the set of edges, where $(j,i) \in \mathcal{E}[k]$ if node $j$ can receive information from node $i$ at instant $k$. It is assumed that $\mathcal{E}[k] \subseteq \mathcal{E},~\forall k \geq 0$, where $\mathcal{E}$ is the set of edges that describe all possibly available communication links between nodes; furthermore, the graph $(\mathcal{V},\mathcal{E})$ is  assumed to be strongly connected. All nodes that can possibly transmit information to node $j$  are called its  in-neighbors, and are represented by the set $\mathcal{N}_j^{-}=\{i \in \mathcal{V}: (j,i) \in \mathcal{E} \}$. Note that there are  self-loops for all nodes in  $\mathcal{G}$ (i.e., $(j,j) \in \mathcal{E} $ for all $j \in \mathcal{V}$).   The number of neighbors of $j$ (including itself) is called the in-degree of  $j$ and is denoted by $\mathcal{D}_j^-=|\mathcal{N}_j^-|$. The  nodes that have $j$ as neighbor (including itself) are called its out-neighbors and  are denoted by $\mathcal{N}_j^+=\{l \in \mathcal{V}: (l,j) \in \mathcal{E} \}$; the out-degree of node $j$ is $\mathcal{D}_j^+=|\mathcal{N}_j^+|$.

The existence of a communication link  from node $i$ to node $j$ can be described in probabilistic terms as follows.
%
At instant $k$, let $x_{ji}[k],~\forall i,j \in \mathcal{V}$ be an indicator variable that takes value 1 with probability $q$ and takes value zero with probability $1-q$, i.e.,
\begin{align}
& \Pr\{x_{ji}[k]=m\}=\left\{\begin{array}{cc} q, & \text{if } m=1,\\
1-q, & \text{if } m=0.\end{array}\right. \label{Bernouilli_model}
\end{align}
Then, for all $k \geq 0$, the existence of a communication link between nodes $i$ and $j$  can be described be another indicator variable $\ell_{ji}[k]$ defined as
\begin{align}
& \ell_{ji}[k]=\left\{\begin{array}{cc} x_{ji}[k], &   \text{if } (j,i) \in \mathcal{E},\\
0, &\text{if } (j,i) \notin \mathcal{E}.\end{array}\right. \label{link_ji}
\end{align}
It follows that  $\mathcal{E}[k]$ contains the elements of $\mathcal{E}$ for which $\ell_{ji}[k]=x_{ji}[k]=1$.

\subsection{Double-Iteration Algorithm Formulation in Perfectly Reliable Communication Networks} \label{non_robustified_algorithm}
When the communication network  of a multi-component system is perfectly reliable, i.e.,  
$\Pr\{\ell_{ji}[k]=1\}=1, ~\forall  (j,i) \in \mathcal{E},~\forall k \geq 0$, it was shown in \cite{DoHa:11a,DoHa:11b} that the components of the multi-component system can asymptotically solve average consensus  and  resource coordination problems in a distributed fashion by running two separate appropriately initialized  linear iterations  of the form
\begin{align}
& y_j[k+1]=\sum_{i \in \mathcal{N}_j^-} \frac{1}{\mathcal{D}_i^+}y_i[k], \label{splitting11} \\
&  z_j[k+1]=\sum_{i \in \mathcal{N}_j^-} \frac{1}{\mathcal{D}_i^+}z_i[k], \label{splitting21}
\end{align}
where $\mathcal{D}_j^+$ ($\mathcal{D}_i^+$) is the out-degree of node  $j$ ($i$). A requirement in all cases is that the underlying communication graph $(\mathcal{G},\mathcal{E})$ is strongly connected.  

\subsubsection{Average Consensus Problem} In this problem, the nodes aim to obtain the average of the values $v_j,~j=1,\dots,n$, they each posses. In  \cite{DoHa:11a}, it was shown that if the initial conditions in \eqref{splitting11} (referred to as iteration 1) are set to $y_j[0]=v_j$, and the initial conditions in  \eqref{splitting21} (referred to as iteration 2) are set to $z_j[0]=1$, then the nodes can asymptotically calculate $\overline{v}:=\sum_{j=1}^n v_j/n$ as
\begin{align}
\overline{v}=\lim_{k \rightarrow \infty}\frac{y_j[k]}{z_j[k]},
\end{align}
by running the two iterations  in \eqref{splitting11} and \eqref{splitting21}.

\subsubsection{Resource Coordination Problem} In this problem, each node $j$ can contribute
a certain amount $\pi_j \geq 0$ of a given resource, which is upper and lower bounded by known capacity limits $\pi_j^{max}$ and $\pi_j^{min}$ respectively. The challenge is to coordinate the components so that they collectively provide a pre-determined total amount $\rho_d= \sum_{j=1}^n \pi_j$ of the resource\footnote{In the development in \cite{DoHa:10,DoHa:11b}, it is assumed that $ \sum_{j=1}^n \pi_j^{min} \leq \rho_d  \leq \sum_{j=1}^n \pi_j^{max}$;  this is not a restrictive assumption because in the proposed algorithms, all nodes will be able to know if $\rho_d  < \sum_{j=1}^n \pi_j^{min}$ or if $\rho_d> \sum_{j=1}^n \pi_j^{max}$ (which means that no feasible solution exists).} as specified by an external ``leader." In \cite{DoHa:11b}, it was shown that i) if the initial conditions in \eqref{splitting11} are set to $y_j[0]=\rho_d/m-\pi_j^{min}$ if $j$ is an out-neighbor of the leader (where $m \geq 1$ is the number of nodes contacted initially by the external leader) and $y_j[0]=-\pi_j^{min}$ otherwise, and ii) if the initial conditions in  \eqref{splitting21} are set to $z_j[0]=\pi_j^{max}-\pi_j^{min}$, then the nodes can asymptotically calculate their own resource contribution $\pi_j$ as
\begin{align}
& \pi_j:=\lim_{k \rightarrow \infty} \big( \pi_j^{min}+ \frac{y_j[k]}{z_j[k]}(\pi_j^{max}-\pi_j^{min})\big)=\pi_j^{min}+  \frac{\rho_d-\sum_{l=1}^n \pi_l^{min}}{\sum_{l=1}^n\ell_l} (\pi_j^{max}-\pi_j^{min}), \label{splitting3}
\end{align}
which satisfies 
\begin{align}
&\pi_j^{min} \leq \pi_j \leq \pi_j^{max},~\forall j, \nonumber\\
&  \sum_{j=1}^n \pi_j=\rho_d. \label{constraints}
\end{align}

In this paper, we start with a double iteration of the form in \eqref{splitting11}--\eqref{splitting21} that is used for either average consensus or coordination, and develop systematic methodologies to handle packet drops in the communication links.

\section{Robustification of Double-Iteration Algorithm} \label{algorithm}
In this section, the algorithm described in Section~\ref{non_robustified_algorithm} is modified so as to make it robust against communication link failures. As in \eqref{splitting11}--\eqref{splitting21}, each node will run two iterations (which we  refer to as iterations 1 and 2) to calculate  quantities of interest and eventually  solve the average consensus or coordination problems.

Consider the setup described in the previous section: we are given a (possible directed) strongly connected graph $(\mathcal{G},\mathcal{E})$ representing a multicomponent system and its communication links between its components.  For the sake of generality, let us refer to $j$ as the receiving node (or receiver) and $i$ as the sending node (or sender). For each of the two iterations, node $i$ (the sender) will calculate several quantities of interest, which we refer to as: i) internal state; ii) total broadcasted mass; and  iii)  total received mass from each in-neighbor $l$ of node $i$, i.e., for each node $l \in \mathcal{N}_i^-$.  For both iterations 1 and 2, each node $j$ updates the value of its internal state to be a  linear combination of its own previous internal state value (weighted by the inverse of the number of nodes that have $j$ as a neighbor, i.e., $1/\mathcal{D}_j^+$) and the sum  (over all its in-neighbors) of the difference between the two most recently  received mass values. At instant time $k$,  the total broadcasted mass   is the sum up to (and including) step $k$ of the weighted value of node $j$'s internal state (used to update the  internal state of node $j$). Additionally, node $j$ (the receiver) updates the value of the received mass from node $l$ to be either the total broadcasted mass sent by node $i$ if the communication link from  $i$ to $j$ is available at instant $k$, or the most recently received mass value from node $i$, otherwise.  An implicit assumption here is that messages broadcasted  by node $i$ are tagged with the sender's identity so that the receiving node $j$ can determine where messages originated from.


For iteration 1, let $y_j[k]$ denote node $j$'s internal state at time instant $k$, $\mu_{lj}[k]$ denote the mass broadcasted from node $j$ to each of its out-neighbors $l$  (this is a single value ad it is the quantity is the same for each out-neighbor $l$ of node $j$, i.e., for each $l \in \mathcal{N}_j^-$), and $\nu_{ji}[k]$ denote  the total mass received at node $j$ from node $i \in \mathcal{N}_j^-$. Similarly, let $z_j[k]$ denote node $j$'s internal state takes at time instant $k$, $\sigma_{lj}[k]$ denote node $j$'s  broadcasted mass for each out-neighbor $l,~l \in \mathcal{N}_j^+$, and $\tau_{ji}[k]$ denote the total mass received  from node $i \in \mathcal{N}_j^-$. Then, the progress of iteration 1 is described by
%
%
%
%
%
%
%
\begin{align}
  y_j[k+1] &=\frac{1}{\mathcal{D}_j^+}y_j[k]+\sum_{i \in \mathcal{N}_j^-} \big(\nu_{ji}[k]-\nu_{ji}[k-1]\big), \quad k \geq 0, \nonumber \\
  \mu_{lj}[k] &=\mu_{lj}[k-1]+\frac{1}{\mathcal{D}_j^+}y_j[k]=\sum_{i=0}^{k}\frac{1}{\mathcal{D}_j^+}y_j[i],  \quad k \geq 0, \label{splitting1}  
\end{align}
where 
\begin{align}
\nu_{ji}[k]=\left\{\begin{array}{cc} \mu_{ji}[k], & \text{if } (j,i) \in \mathcal{E}[k],  \quad k \geq 0, \\
\nu_{ji}[k-1], & \text{if } (j,i) \notin \mathcal{E}[k], \quad k \geq 0.\nonumber
\end{array}\right.
\end{align}
Recall that $\mathcal{D}_j^+$ ($\mathcal{D}_i^+$) is the number of nodes that node $j$ ($i$) can transmit information to. Similarly, the progress of iteration 2 is described by
\begin{align}
  z_j[k+1]&=\frac{1}{\mathcal{D}_j^+}z_j[k]+\sum_{i \in \mathcal{N}_j^-}\big(\tau_{ji}[k]-\tau_{ji}[k-1]\big), \quad k \geq 0, \nonumber\\
   \sigma_{lj}[k]&=\sigma_{lj}[k-1]+\frac{1}{\mathcal{D}_j^+}z_j[k]=\sum_{i=0}^{k}\frac{1}{\mathcal{D}_j^+}z_j[i],  \quad k \geq 0,  \label{splitting2}
\end{align}
where 
\begin{align}
\tau_{ji}[k]=\left\{\begin{array}{cc} \sigma_{ji}[k], & \text{if } (j,i) \in \mathcal{E}[k], \quad k \geq 0,\\
\tau_{ji}[k-1], & \text{if } (j,i) \notin \mathcal{E}[k], \quad k \geq 0. \nonumber
\end{array}\right.
\end{align}

As mentioned earlier, for solving the average consensus problem,  the initial conditions in \eqref{splitting1} are set to $y_j[0]=v_j$, whereas the initial conditions in \eqref{splitting2} are set to $z_j[0]=1$. Similarly, for solving the resource coordination problem, the initial conditions in \eqref{splitting1} are set to $y_j[0]=\rho_d/m-\pi_j^{min}$ if $j$ is initially contacted by the leader and $y_j[0]=-\pi_j^{min}$ otherwise, whereas the initial conditions in \eqref{splitting2} are set to $z_j[0]=\pi_j^{max}-\pi_j^{min}>0$. In both the average consensus and coordination problems, $\mu_{ji}[-1]=0$ and $\nu_{ji}[-1]=0$ for all $(j,i) \in \mathcal{E}$, and $\sigma_{ji}[-1]=0$   and $\tau_{ji}[-1]=0$ for all $(j,i) \in \mathcal{E}$.

\textbf{\textit{Main Result:}} We shall argue that with the proposed robustification strategy, despite the presence of unreliable communication links (at each time step, each link  $(j,i) \in \mathcal{E}$, fails independently from other links and independently between time steps, with some probability $1-q_{ji}$), nodes can asymptotically estimate the exact solution $\overline{v}$ to the average consensus by calculating, whenever $z_j[k] >0$  the ratio $y_j[k]/z_j[k]$, i.e., 
\begin{align}
\overline{v}=\lim_{k \rightarrow \infty}\frac{y_j[k]}{z_j[k]}, \label{main_result1}
\end{align}
whenever $z_j[k>0]$. Similarly, exact solution to the resource coordination problem can be obtained as
\begin{align}
& \pi_j=\lim_{k \rightarrow \infty} \big( \pi_j^{min}+ \frac{y_j[k]}{z_j[k]}(\pi_j^{max}-\pi_j^{min})\big),  \label{main_result2} \end{align}
whenever $z_j[k]>0$. In both cases, we run the   iterations in \eqref{splitting1} and \eqref{splitting2} and using the corresponding initial conditions as outlined above. In particular, we will show that, for every $j$, $z_j[k]-  \frac{\sum_{j=1}^n z_0(j)}{\sum_{j=1}^n y_0(j)}  y_j[k] \rightarrow 0$ as $k\rightarrow \infty$ almost surely. Additionally, we will show that $z_j[k]>0$ occurs infinitely often. 

\begin{figure}[b!]
\centering
{\includegraphics[scale=.2]{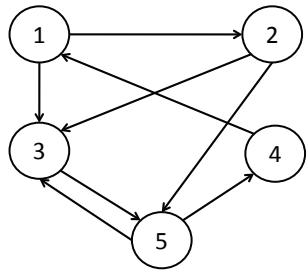}}
\caption{Small directed graph used for illustration of the ratio algorithm for obtaining average consensus in the presence of packet-dropping communication links.}
\label{FIGsmallgraph}
\end{figure}

\subsection{Examples}
We now illustrate how the proposed algorithm works for the case of average consensus in the presence of packet-dropping communication links. We start with the rather small network shown in Fig.~\ref{FIGsmallgraph} and assume that the packets on each link (including the self-links which are not drawn in the figure\footnote{We make this assumption later in the paper for the purposes of simplifying notation.}) can be dropped with probability $1-q$, independently between different links and independently between different iterations. We also assume that the initial values of the five nodes are given by $v=[-4, 5, 6, -3, 1]^T$, with their average equal to $1$. Thus, in the iterations  \eqref{splitting1} and  \eqref{splitting2}
$$
y[0] = [-4, 5, 6, -3, 1]^T, \mbox{ and } z[0] = [ 1, 1, 1, 1, 1]^T \; ,
$$
with $\mu_{ji}[-1] = v_{ji}[-1] = \sigma_{ji}[-1] = \tau_{ji}[-1] = 0$ for all $(j,i) \in \cal{E}$.

We run the iterations in \eqref{splitting1} and \eqref{splitting2}   and plot the ratio $\frac{y_j[k]}{z_j[k]}$ as a function of the iteration step $k$ for each node $j$ ($j = 1, 2, 3, 4, 5$). Figure~\ref{FIGsmallplots} shows the typical behavior that we observe for $q=0.99$ (i.e., for a probability of a packet drop equal to $0.01$). As can be seen in the figure, the ratio at each node quickly converges to the correct average, though the individual values for $y_j[k]$ and $z_j[k]$ do not  converge. 

\begin{figure*}[t!]
    \centering
    \hspace{-0.2in}
    \subfigure[]{
         \includegraphics[width=2.4in]{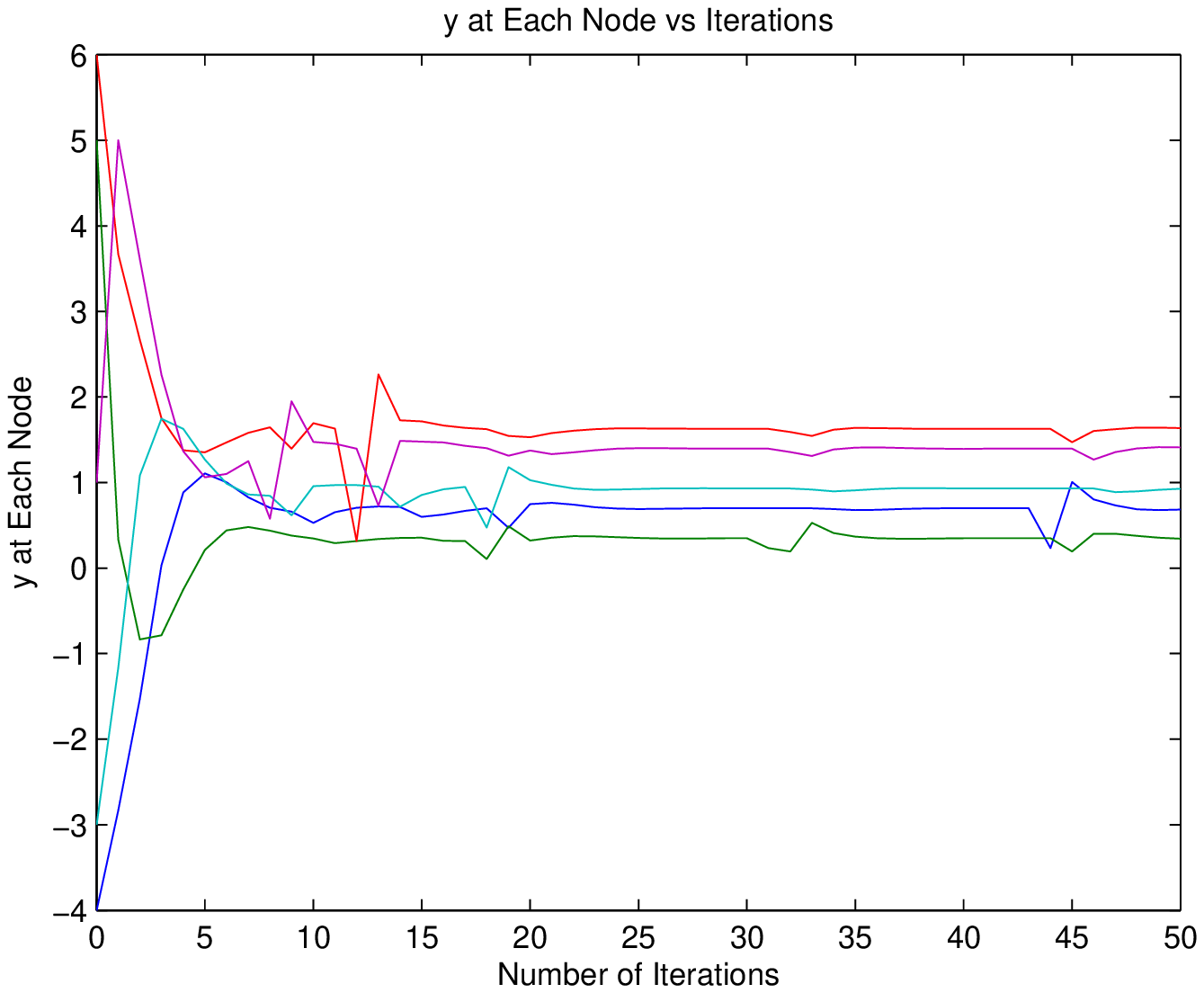}}
             \hspace{-0.2in}
    \subfigure[]{
         \includegraphics[width=2.4in]{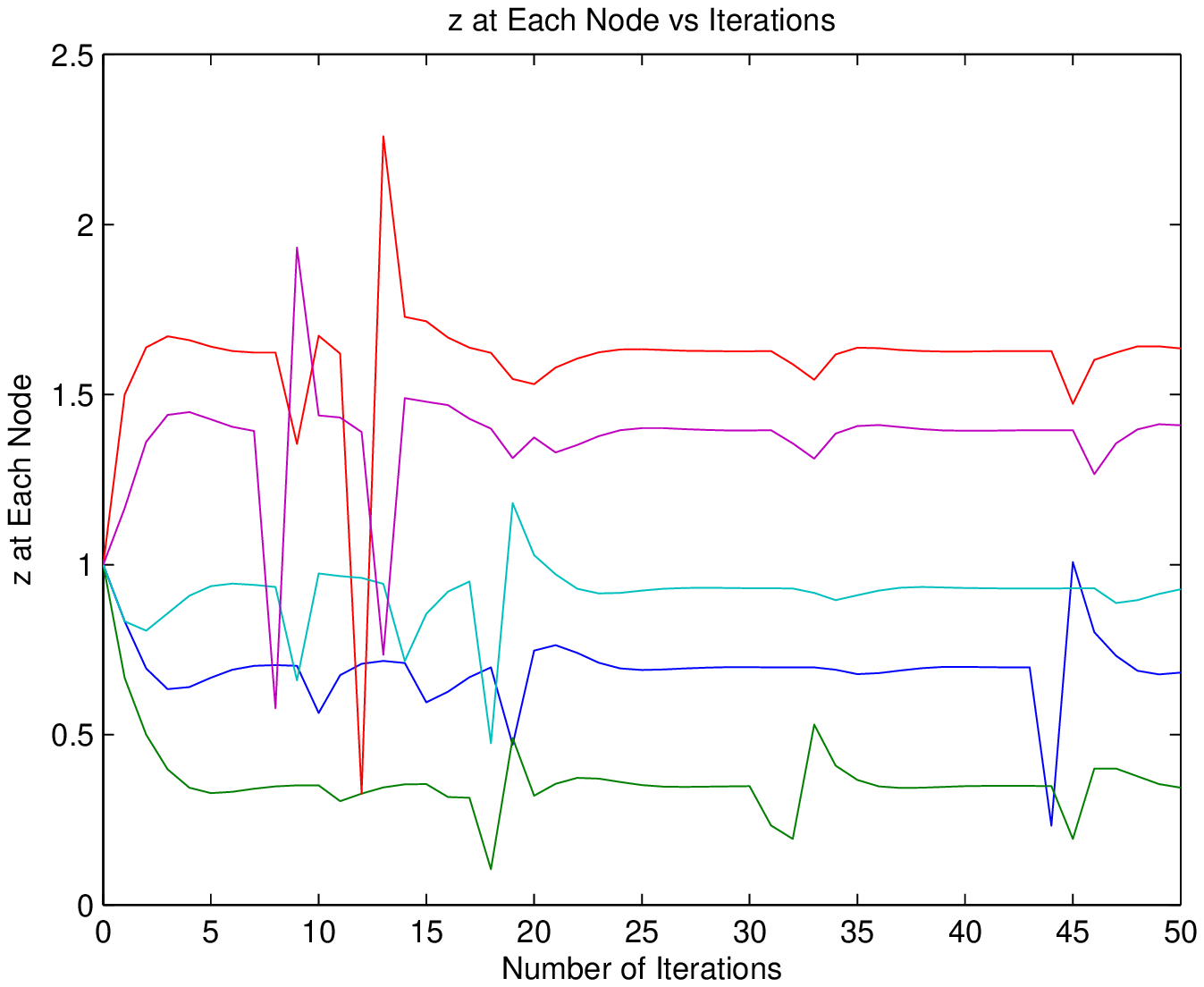}}
             \hspace{-0.2in}
    \subfigure[]{
         \includegraphics[width=2.4in]{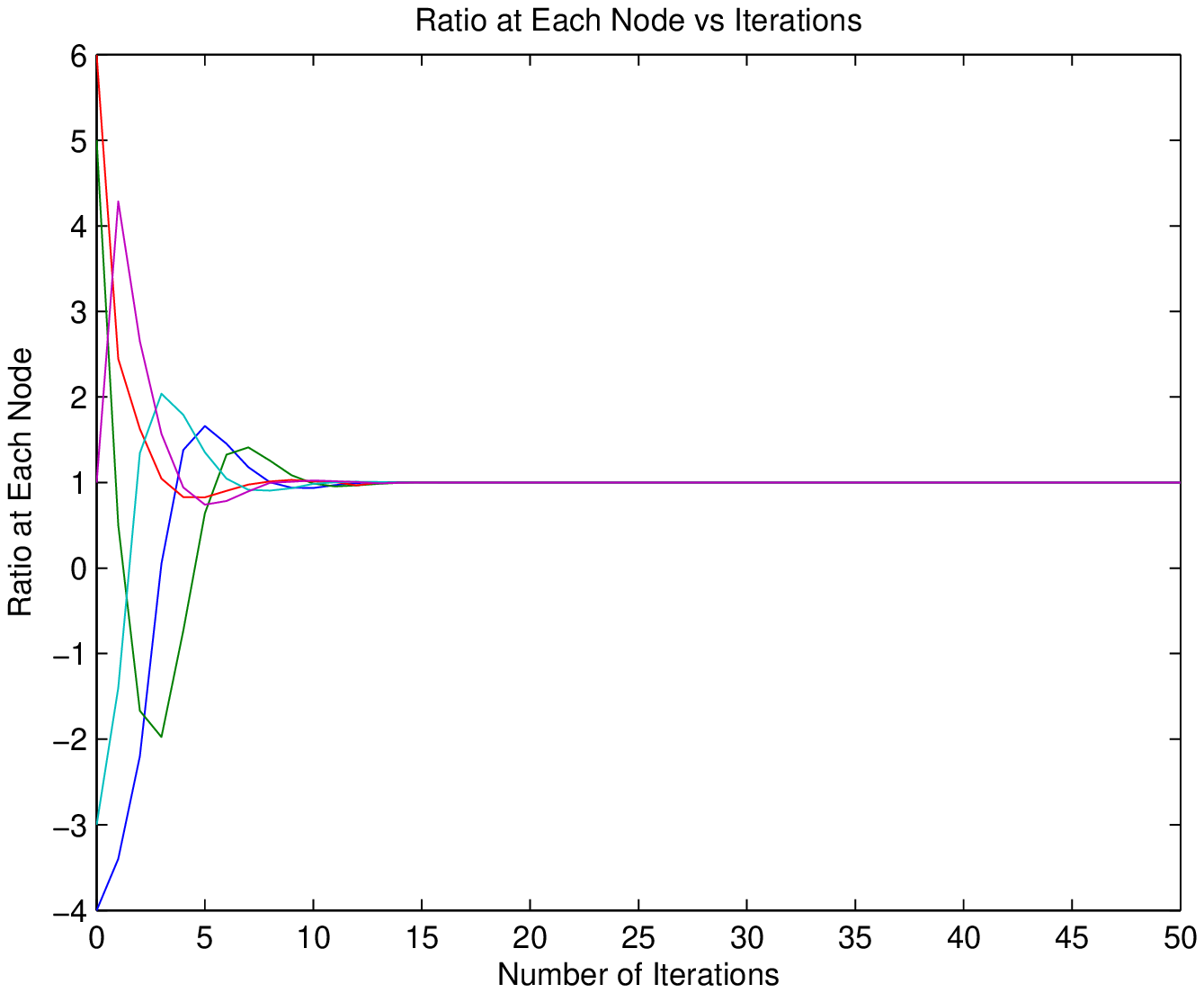}}
             \hspace{-0.2in}
    \caption{Evolution of the values of $y_j[k]$ (left), $z_j[k]$ (middle) and $\frac{y_j[k]}{z_j[k]}$ (right) for $q=0.99$.}
    \label{FIGsmallplots}
\end{figure*}

\begin{figure*}[t!]
    \centering
    \hspace{-0.2in}
    \subfigure[]{
         \includegraphics[width=2.4in]{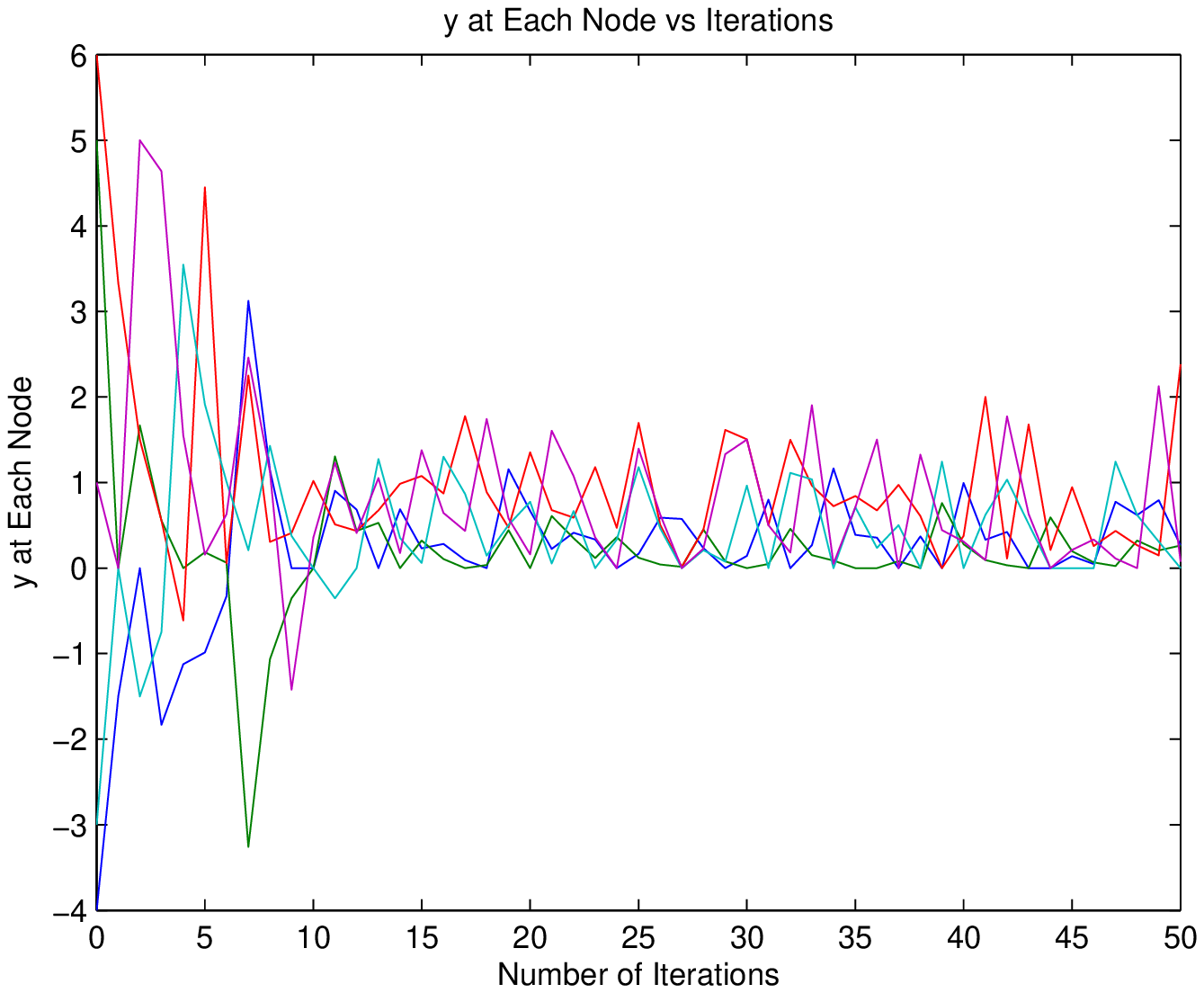}}
             \hspace{-0.2in}
    \subfigure[]{
         \includegraphics[width=2.4in]{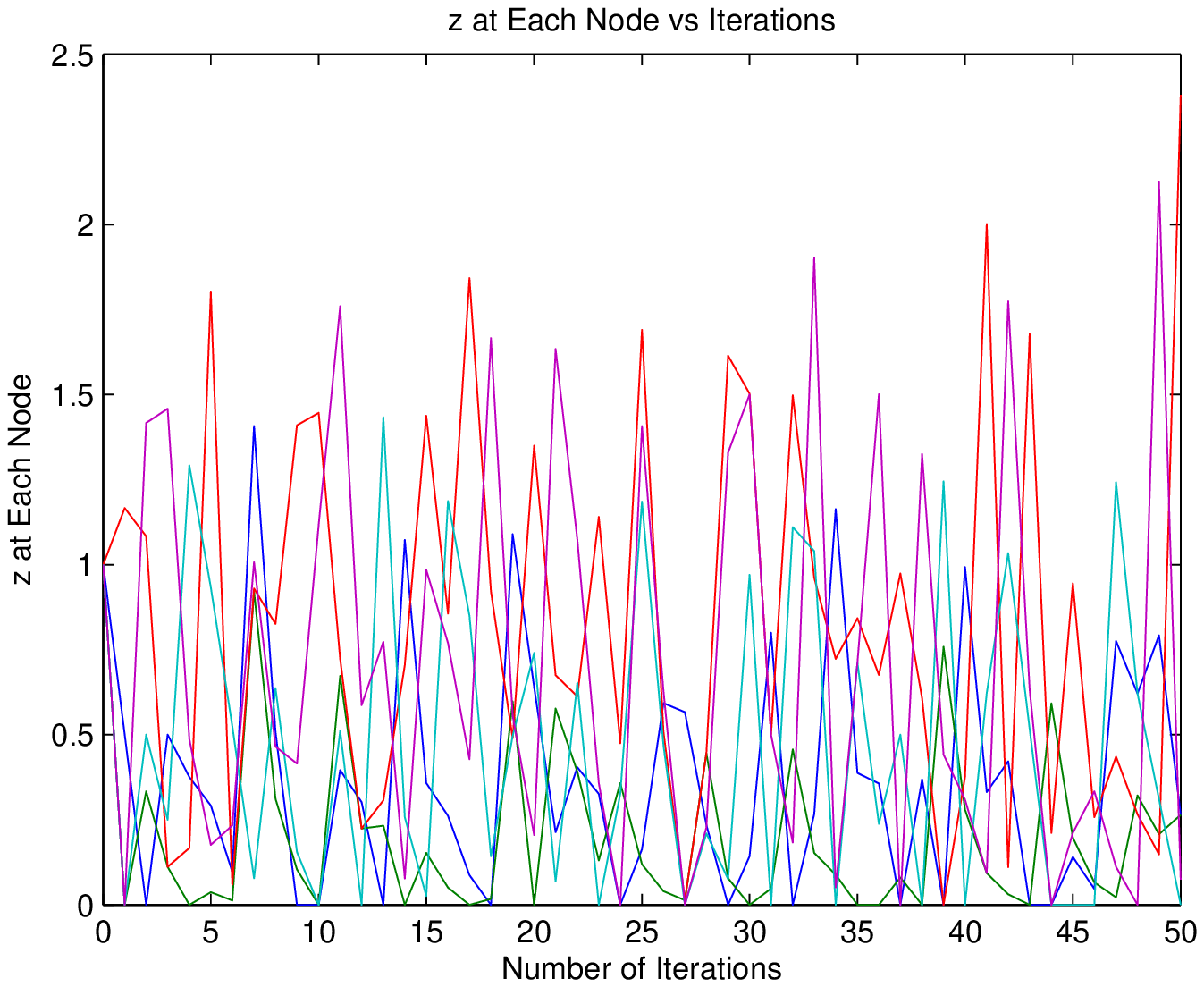}}
             \hspace{-0.2in}
    \subfigure[]{
         \includegraphics[width=2.4in]{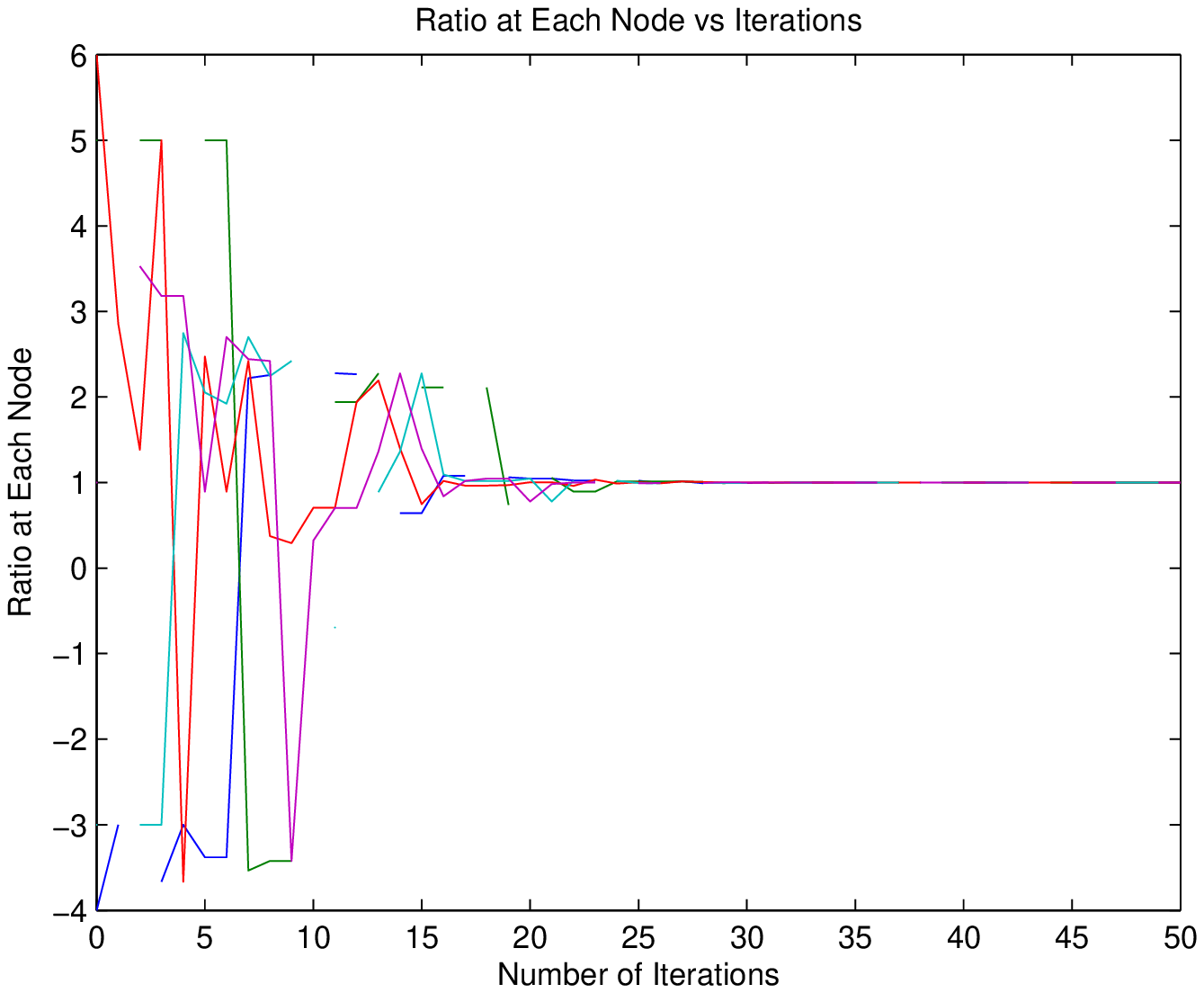}}
             \hspace{-0.2in}
    \caption{Evolution of the values of $y_j[k]$ (left), $z_j[k]$ (middle), and $\frac{y_j[k]}{z_j[k]}$ (right) for $q=0.5$, $j=1,2,3,4,5$.}
    \label{FIGsmallplotsq5}
\end{figure*}

In Figs.~\ref{FIGsmallplotsq5} and~\ref{FIGsmallplotsq9} we show typical behaviors of the same multicomponent system for $q=0.5$ and $q=0.1$. The behavior remains similar to the one observed before: even though $y_j[k]$ and $z_j[k]$ do not converge (in fact, they seem to behave more radically with  decreasing $q$), the ratio $\frac{y_j[k]}{z_j[k]}$ does converge to the average of the initial values. Note that the plot of the ratio in Fig.~\ref{FIGsmallplotsq9} is quite different than the rest: in this case, $q$ is small enough so that $z_j[k]$ (and simultaneously $y_j[k]$) can take the value zero (e.g., when all packets destined to node $j$ are dropped at iteration $k$); thus, the ratio in such cases is not defined and is not plotted, resulting in a discontinuous set of points in the plot. Nevertheless, we can see that when packets are received (which happens frequently enough for each node), the ratio has the correct value. This is a point  addressed  later   in the paper.

\begin{figure*}[t!]
    \centering
    \hspace{-0.2in}
    \subfigure[]{
         \includegraphics[width=2.4in]{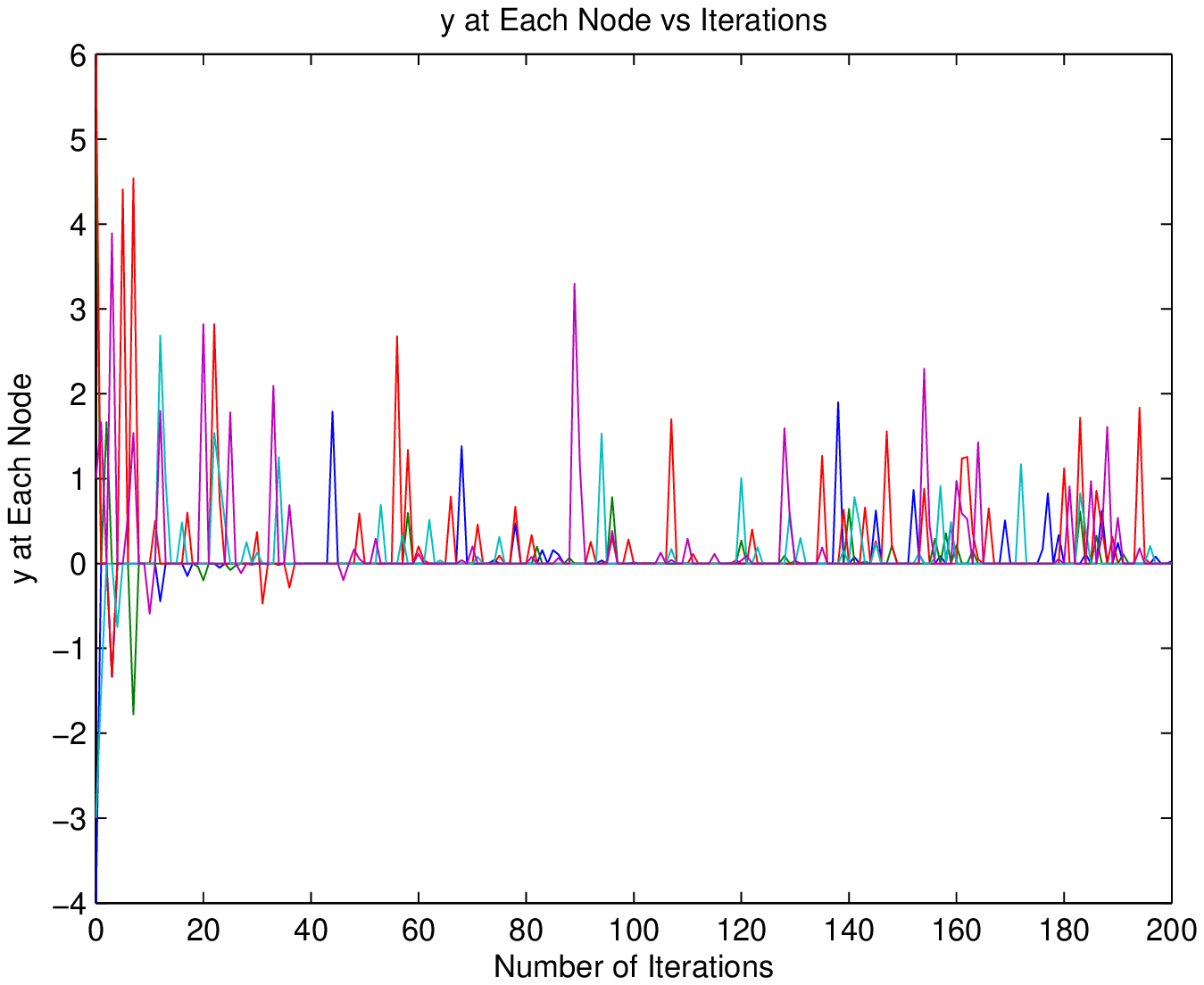}}
             \hspace{-0.2in}
    \subfigure[]{
         \includegraphics[width=2.4in]{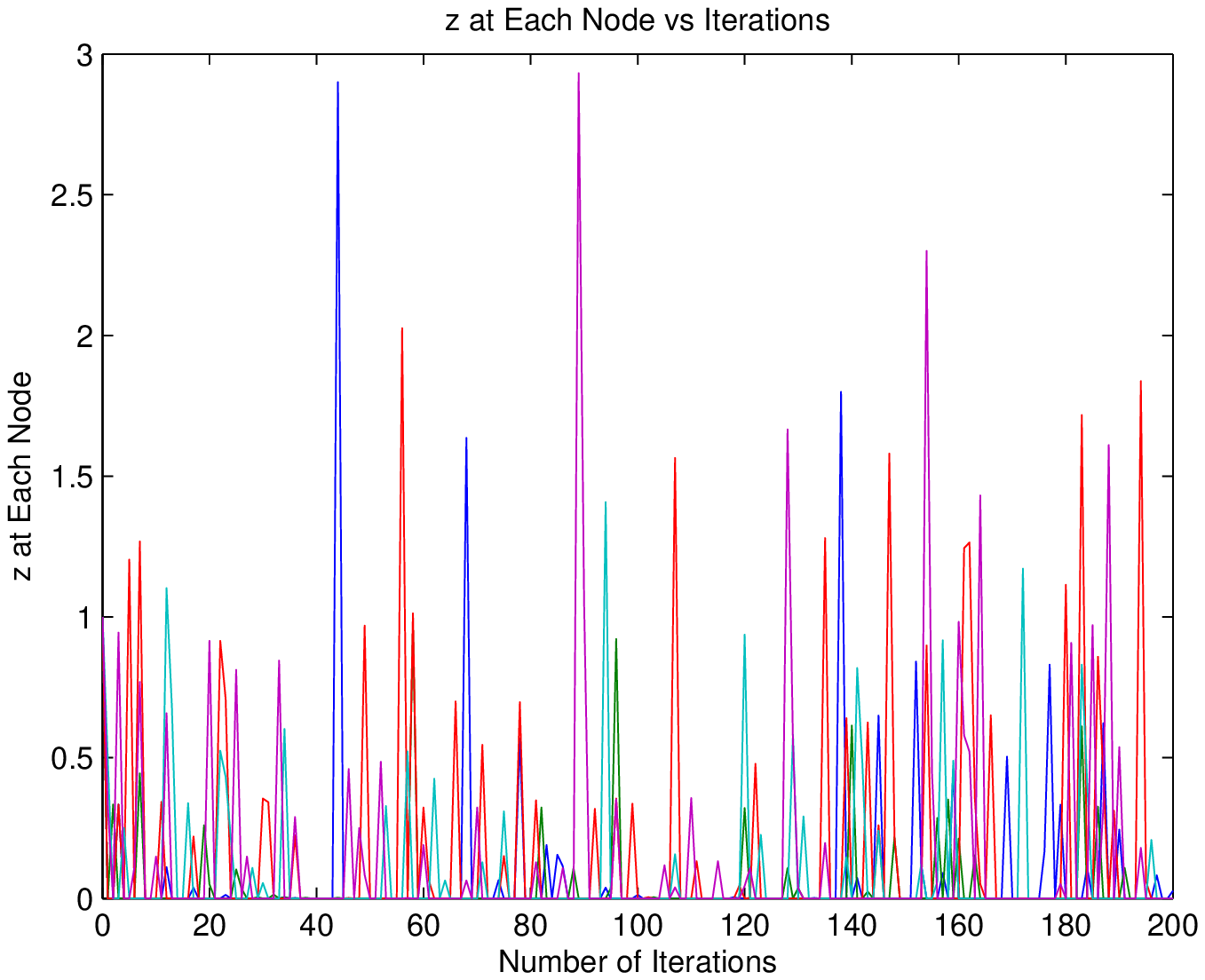}}
             \hspace{-0.2in}
    \subfigure[]{
         \includegraphics[width=2.4in]{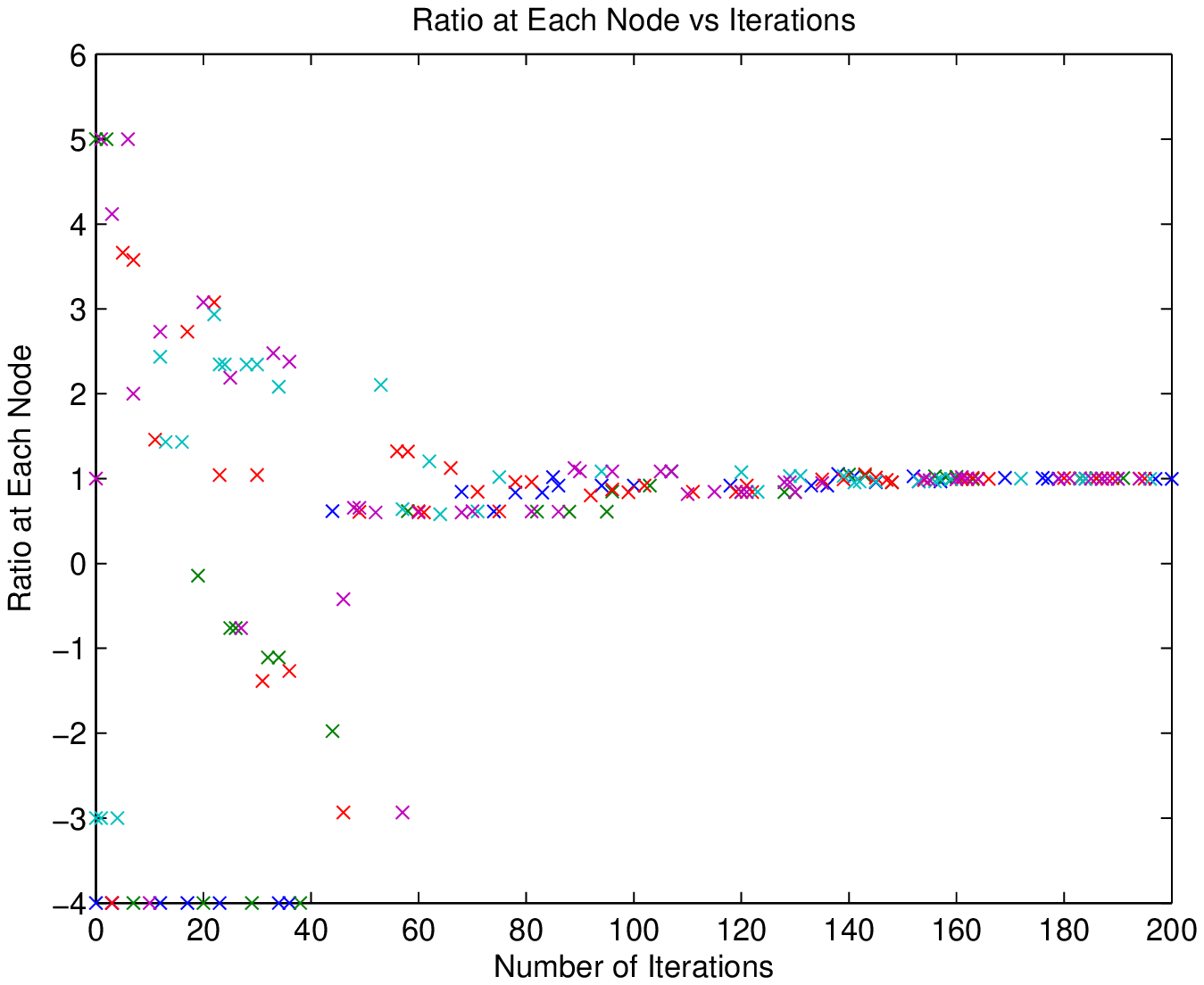}}
             \hspace{-0.2in}
    \caption{Evolution of the values of $y_j[k]$ (left), $z_j[k]$ (middle), and $\frac{y_j[k]}{z_j[k]}$ (right) for $q=0.1$, $j=1,2,3,4,5$.}
    \label{FIGsmallplotsq9}
\end{figure*}

\begin{figure*}[t!]
    \centering
    \hspace{-0.2in}
    \subfigure[]{
         \includegraphics[width=2.4in]{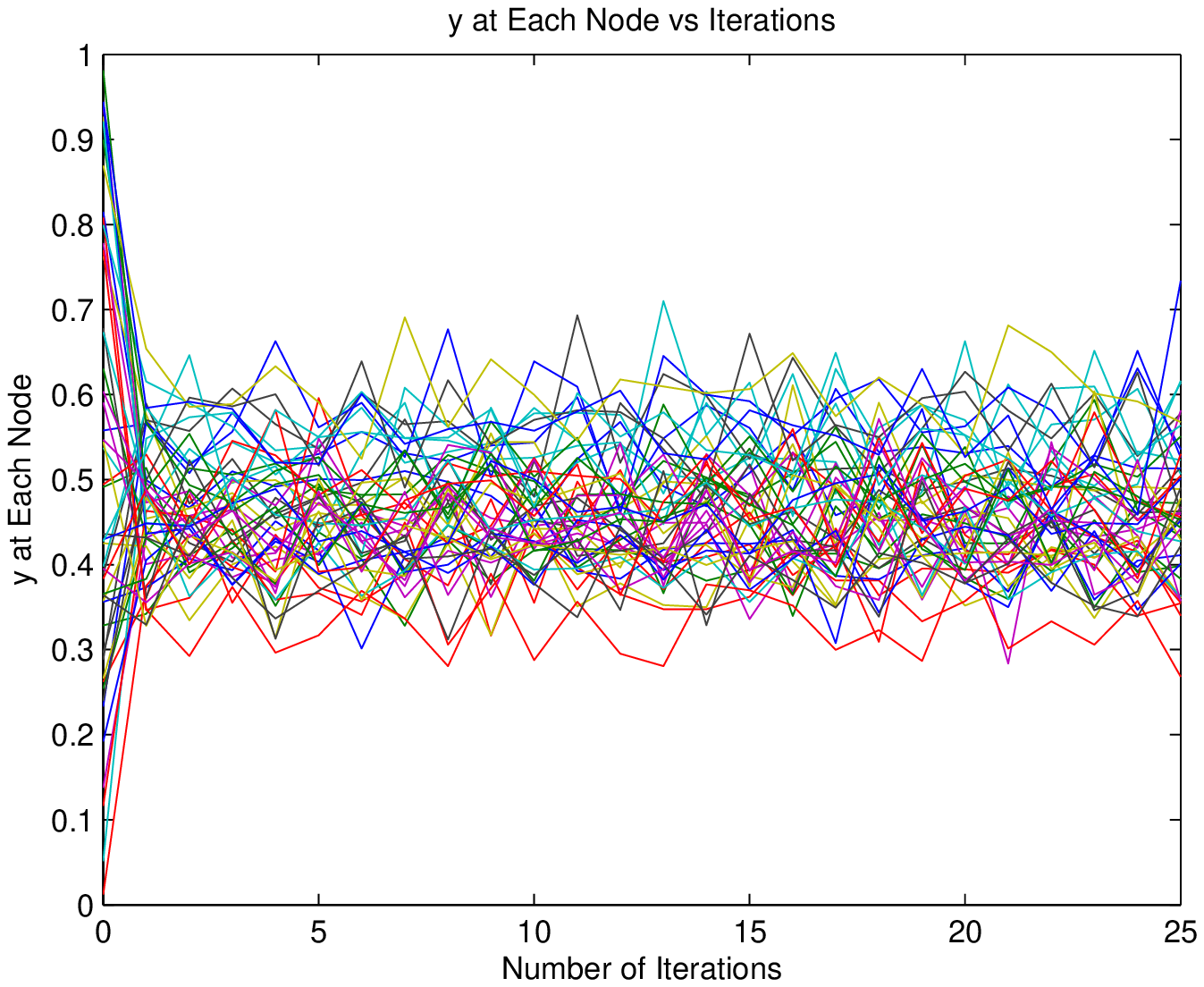}}
             \hspace{-0.2in}
    \subfigure[]{
         \includegraphics[width=2.4in]{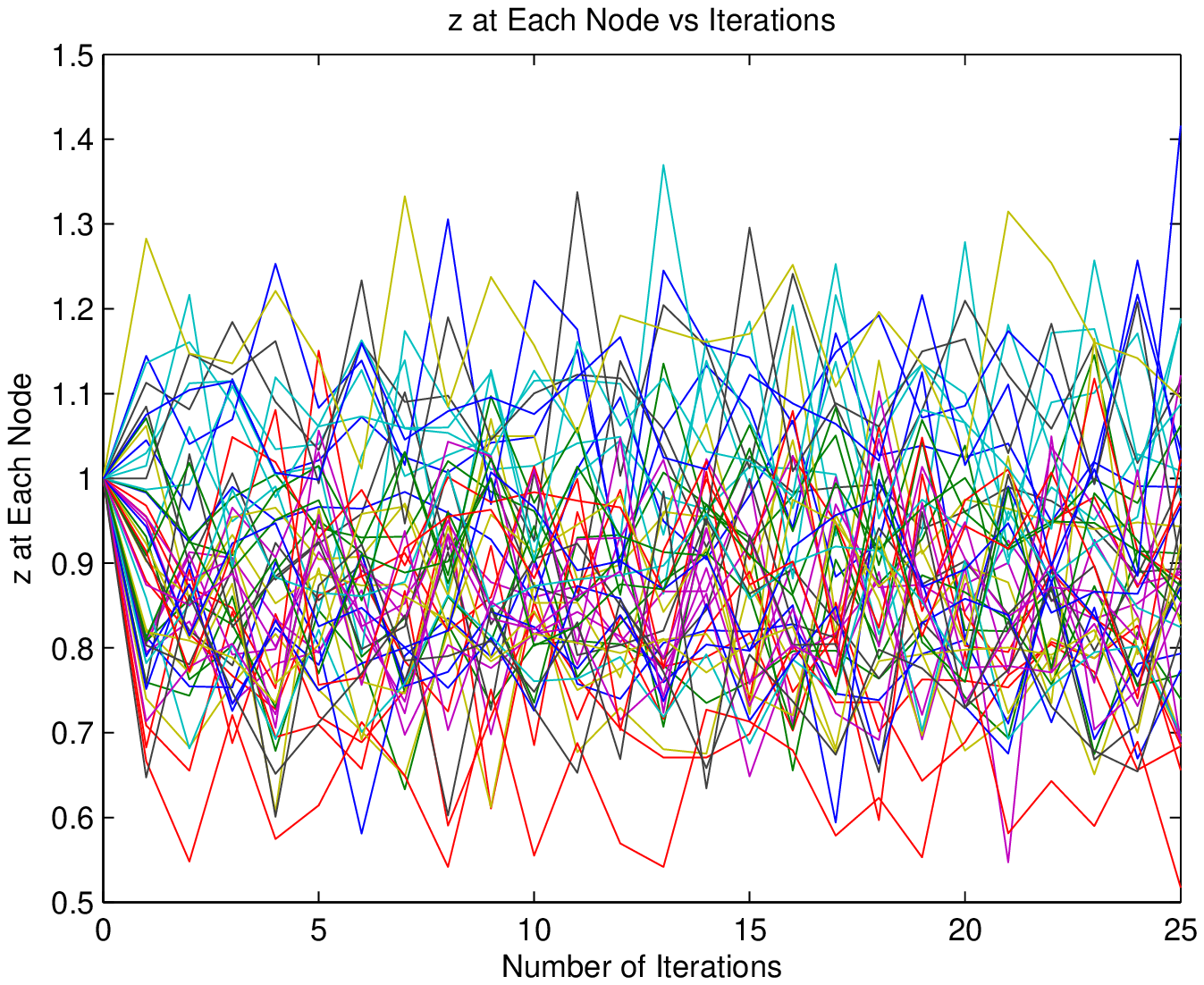}}
             \hspace{-0.2in}
    \subfigure[]{
         \includegraphics[width=2.4in]{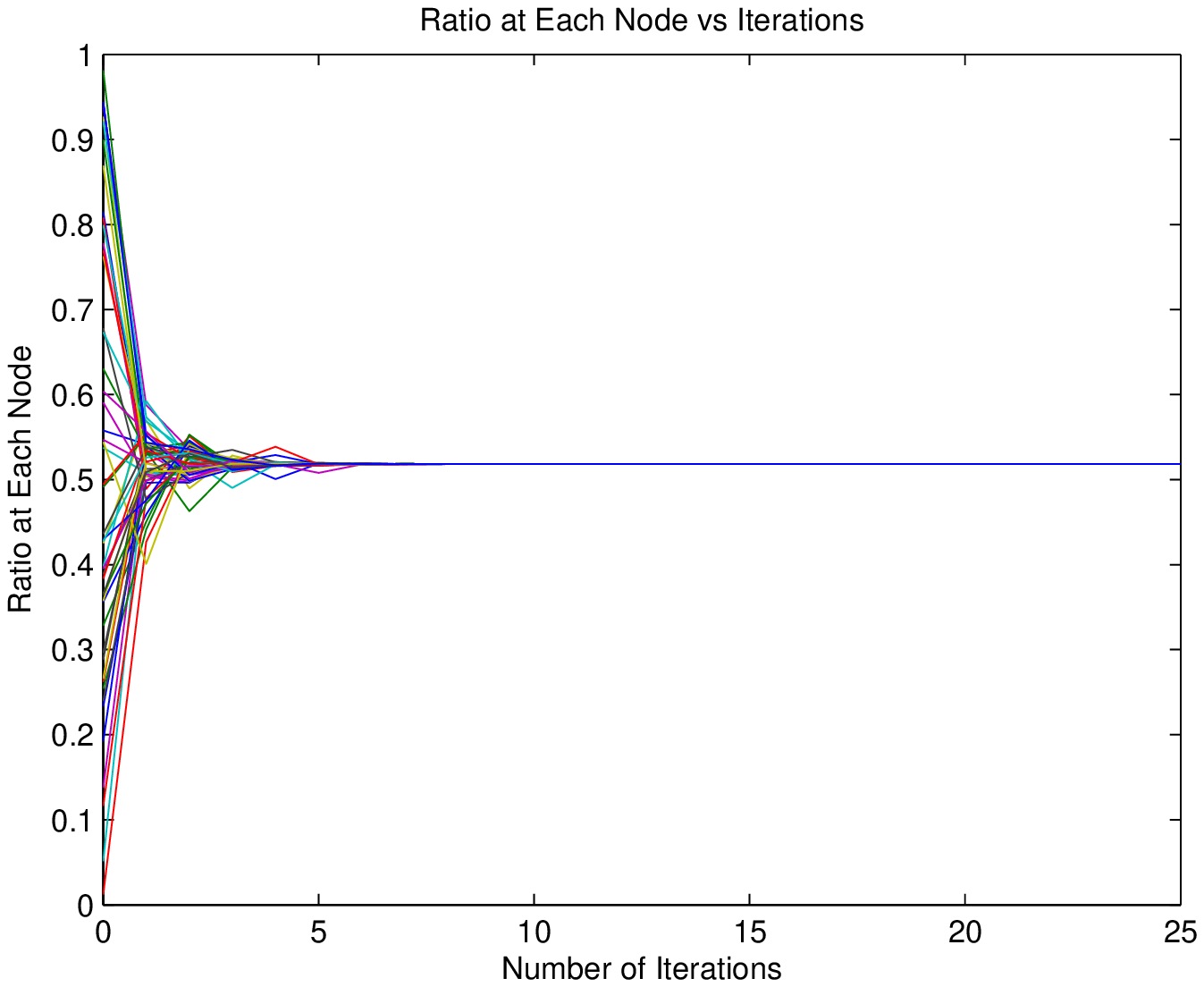}}
             \hspace{-0.2in}
    \caption{Evolution of the values of $y_j[k]$ (left), $z_j[k]$ (middle), and $\frac{y_j[k]}{z_j[k]}$ (right) for $q=0.1$ for a 50-component system.}
    \label{FIGlargeplotsp1}
\end{figure*}

An example of what happens in larger graphs is shown in Fig.~\ref{FIGlargeplotsp1}. Here we consider a graph with $50$ nodes, randomly generated by choosing a directed edge from node $i$ to node $j$, $1 \leq i,j \leq 50$, $i \neq j$, independently with probability $1/2$, and ensuring that the resulting graph is strongly connected. As can be seen the behavior remains similar to what we observed for the smaller graph: the ratio $\frac{y_j[k]}{z_j[k]}$ converges quickly to the average even though the individual $y_j[k]$ and $z_j[k]$ do not converge. For this particular plot, we used $q=0.1$, which also justifies the fluctuation in the values of $y_j[k]$ and $z_j[k]$.

%
%
%
%
%

\section{First and Second Moment Analysis}
In this section, we obtain recurrence relations that describe the first and second moment of the iterations after \eqref{splitting1} and \eqref{splitting2}; this analysis is used in Section~\ref{convergence_analysis} to establish the claims in \eqref{main_result1} and \eqref{main_result2}. In order to ease the moment calculations, the expressions in  \eqref{splitting1}--\eqref{splitting2} will be rewritten more compactly in vector form. Also,  in order to facilitate notation, we will allow each node $j$ to drop the packet carrying its own previous value when updating its  value. This way, node $j$ handles its own value in the same way as its neighbors' values and notation is simplified significantly. 

\subsection{Vectorized Description of Double-Iteration Algorithm} 
Using the definition for the indicator variable $x_{ji}[k]$ given in \eqref{Bernouilli_model} and the resulting indicator variable $\ell_{ji}[k]$ given in \eqref{link_ji}, which describes the successful transmission of information from node $i$ to node $j$ over an existing, unreliable communication link, iterations \eqref{splitting1}  and \eqref{splitting2} can be rewritten as
\begin{align}
 \mu_{lj}[k]&=\left\{\begin{array}{cc} \mu_{lj}[k-1]+\frac{1}{\mathcal{D}_j^+}y_j[k], & \text{if } l \in \mathcal{N}_j^+, \quad k\geq 0, \\
0, & \text{if } l \notin \mathcal{N}_j^+,  \quad k\geq 0, \label{iter_1_gen_1}   
\end{array}\right . \\
\nu_{ji}[k]&=\left\{\begin{array}{cc} \mu_{ji}[k]x_{ji}[k]+\nu_{ji}[k-1](1-x_{ji}[k]), &  \text{if } i \in \mathcal{N}_j^-, \quad k\geq 0, \\
0, & \text{if } i \notin \mathcal{N}_j^-,  \quad k\geq 0, \label{iter_1_gen_2} 
\end{array}\right. \\
y_j[k+1] &=\sum_{i =1}^n \big(\nu_{ji}[k]-\nu_{ji}[k-1]\big), \quad k\geq 0, \label{iter_1_gen_3}  
\end{align}
and
\begin{align}
 \sigma_{lj}[k]&=\left\{\begin{array}{cc} \sigma_{lj}[k-1]+\frac{1}{\mathcal{D}_j^+}z_j[k], & \text{if } l \in \mathcal{N}_j^+,  \quad k\geq 0, \\
0, & \text{if } l \notin \mathcal{N}_j^+,   \quad k\geq 0,  \label{iter_2_gen_1}   
\end{array}\right. \\
\tau_{ji}[k]&=\left\{\begin{array}{cc} \sigma_{ji}[k]x_{ji}[k]+\tau_{ji}[k-1](1-x_{ji}[k]), &  \text{if }  i \in \mathcal{N}_j^-,  \quad k\geq 0, \\
0, & \text{if }  i \in \mathcal{N}_j^-,  \quad k\geq 0,  \label{iter_2_gen_2} 
\end{array}\right. \\
z_j[k+1] &=\sum_{i =1}^n \big(\tau_{ji}[k]-\tau_{ji}[k-1]\big), \quad k\geq 0,  \label{iter_2_gen_3}  
\end{align}
where $\mu_{lj}[-1]= \nu_{ji}[-1]=  \sigma_{lj}[-1]= \tau_{ji}[-1]=0,~\forall j,i$.

Let $A\circ B$ denote the  Hadamard (entry-wise) product of a pair of matrices $A$ and $B$ of identical size. Then, for all $k\geq 0$, iteration  \eqref{iter_1_gen_1}--\eqref{iter_1_gen_3}   can be rewritten in matrix form as
\begin{align}
& M_{k}=M_{k-1}+P\text{diag}(y_k),\label{mu_dynamics_1}\\
& N_k=M_k \circ X_k+N_{k-1} \circ (U-X_k), \label{mu_dynamics_2}\\
& y_{k+1}=(N_{k}-N_{k-1}) e=\big[(M_{k}-N_{k-1}) \circ X_{k}\big] e,   \label{mu_dynamics_3}
\end{align}
where  $P=[p_{ji}] \in \mathbb{R}^{n \times n}$, with $p_{ji}=\frac{1}{\mathcal{D}_i^+},~\forall j \in\mathcal{N}_i^+$ and $p_{ji}=0$ otherwise; $M_{-1}=N_{-1}=0$; $y_k=y[k]$; $U \in \mathbb{R}^{n \times n}$, with $[U_{ji}]=1,~\forall i,j$; $\text{diag}(y_k)$ is the  diagonal matrix that results by having the entries of $y_k$ on the main diagonal; and $e=[1,1,\dots,1]'$ (note that $U=ee^{T}$). Similarly, for $k \geq 0$, \eqref{iter_2_gen_1}--\eqref{iter_2_gen_3}   can be rewritten in matrix form as
\begin{align}
& S_{k}=S_{k-1}+P\text{diag}(z_k),  \label{mu_dynamics_4}\\
& T_k=S_k \circ X_k+T_{k-1} \circ (U-X_k),  \label{mu_dynamics_5}\\
& z_{k+1}=(T_{k}-T_{k-1}) e=\big[(S_{k}-T_{k-1}) \circ X_{k}\big] e,  \label{mu_dynamics_6}
\end{align}
where $S_{-1}=T_{-1}=0$,  $z_k=z[k]$, and $\text{diag}(z_k)$ is the diagonal matrix that results by having  the entries of $z_k$ on the main diagonal.

By defining $A_{k}:=M_{k}-N_{k-1}$ and $B_{k}:=S_{k}-T_{k-1}$, iteration \eqref{mu_dynamics_1}--\eqref{mu_dynamics_3}  can be rewritten more compactly as  
\begin{align}
& A_{k} = A_{k-1} \circ (U-X_{k-1})+P\text{diag}(y_k), \quad k\geq 1, \label{y_dynamics_a} \\
&  y_{k+1}=(A_{k} \circ X_{k})e,\quad k\geq 0,   \label{y_dynamics_b}
\end{align}
and iteration \eqref{mu_dynamics_4}--\eqref{mu_dynamics_6} as 
\begin{align}
& B_{k} = B_{k-1} \circ (U-X_{k-1})+P\text{diag}(z_k), \quad k\geq 1,\label{z_dynamics_a} \\
&  z_{k+1}=(B_{k} \circ X_{k})e,  \quad k\geq 0, \label{z_dynamics_b}
\end{align}
where $A_0=M_0-N_{-1}=P\text{diag}(y_0)$, and $B_0=S_0-T_{-1}=P\text{diag}(z_0)$.

For analysis purposes, each matrix in \eqref{y_dynamics_a}--\eqref{y_dynamics_b} and \eqref{z_dynamics_a}--\eqref{z_dynamics_b} will be rewritten in vector form by stacking up the corresponding columns.\footnote{If we let $A=[A_{ij}] \in \mathbb{R}^{n \times n}$, then $a=[A_{11},A_{21},\dots,A_{n1},A_{12},A_{22},\dots,A_{n2},\dots,A_{1n},A_{2n},\dots,A_{nn}]^T$. Vectors defined by stacking the columns of a matrix will be denoted with the same small letter as the capital letter of  the corresponding matrix.}  Then, \eqref{y_dynamics_a}--\eqref{y_dynamics_b} and \eqref{z_dynamics_a}--\eqref{z_dynamics_b} can be rewritten in vector form as follows. Let $F=[I_n~ I_n~ \dots~ I_n] \in \mathbb{R}^{n \times n^2}$, where $I_n$ is the $n \times n$ identity matrix, and $\tilde{P}=[E_1P^T~E_2P^T~\dots~E_nP^T ]^T \in \mathbb{R}^{n^2 \times n}$,  where $E_i \in \mathbb{R}^{n \times n}$ has $E_i(i,i)=1$ and all other entries equal  zero. [The entries of $E_iP^T   \in \mathbb{R}^{n \times n}$  ($PE_i^T=PE_i$) are all zero except for the  $i^{th}$ row (column) entries, which are those of the $i^{th}$ row (column) of matrix $P^T$ ($P$).] Then,  \eqref{y_dynamics_a}--\eqref{y_dynamics_b} can be rewritten as
\begin{align}
& a_{k}=a_{k-1} \circ (u-x_{k-1}) +\tilde P y_k, \quad k \geq 1, \label{ay_dynamics_1} \\
& y_{k+1}=F(a_k \circ x_k),  \quad k \geq 0, \label{ay_dynamics_2}
\end{align}
where $a_{k} \in \mathbb{R}^{n^2}$, $x_k  \in \mathbb{R}^{n^2}$, and $x_{k-1}  \in \mathbb{R}^{n^2}$  result from stacking the columns of matrices $A_{k}$, $X_{k}$, and $X_{k-1}$, respectively. Similarly,  \eqref{z_dynamics_a}--\eqref{z_dynamics_b} can be rewritten as
\begin{align}
& b_{k}=b_{k-1} \circ (u-x_{k-1})+\tilde P z_k, \quad k \geq 1, \label{by_dynamics_1} \\
& z_{k+1}=F(b_k \circ x_k), \quad k \geq 0, \label{by_dynamics_2}
\end{align}
where $b_{k} \in \mathbb{R}^{n^2}$  results from stacking the columns of matrix $B_{k}$.

\begin{remark}
It is important to note that matrices $A_k$ and $B_k$, and their corresponding vectors $a_k$ and $b_k$, have some entries that remain at zero for all $k \geq 0$. Specifically, the $(j,i)$ entry of matrices $A_k$ and $B_k$ (and their corresponding entries in $a_k$ and $b_k$)  remain  zero if  there is no communication link from node  $i$ to node  $j$, i.e., $(j,i) \notin \mathcal{E}$. The reason we keep these entries (despite the fact  they are zero and do not play a role in the analysis) is because they facilitate matrix notation and calculations in subsequent developments. \qed
\end{remark}

Since it will appear later at several points of the analysis, it is worth noting that when premultiplying $\tilde{P}$ by $F$, we recover the matrix $P$, i.e., 
\begin{align}
P=F\tilde{P}. \label{FtildeP}
\end{align}

\subsection{First Moment Analysis}
\label{SUBfirstmoment}
In this section, we describe the first moment dynamics of \eqref{ay_dynamics_1}--\eqref{by_dynamics_2} via discrete-time transition systems with no inputs, where (as shown below) the corresponding transition matrices (which  are obtained from $P$ and $q$) are column stochastic and primitive. In both iterations, the sum of the entries of the first moment vectors for $y_k$ and $z_k$  is shown to remain constant over time and be respectively equal to the sum of $q\sum_iy_0(i)$ and  $q\sum_iz_0(i)$. Furthermore, both first moments $\Ex[y_k]$ and $\Ex[z_k]$ are shown to reach a steady-state value as $k$ goes to infinity. The above discussion is formalized in the following lemma.

\begin{lemma} \label{first_moment} 
Let $a_k$, $b_k$, $y_k$, and $z_k$ be described by the recurrence relations in \eqref{ay_dynamics_1}--\eqref{ay_dynamics_2}, and  \eqref{by_dynamics_1}--\eqref{by_dynamics_2} respectively. Let the first moments of   $a_k$, $y_k$, $b_k$, and $z_k$ (i.e., $\Ex[a_k]$, $\Ex[y_k]$, $\Ex[b_k]$, and $\Ex[z_k]$) be denoted by $\overline{a}_k$, $\overline{y}_k$, $\overline{b}_k$, and $\overline{z}_k$ respectively. Then the evolution of $\overline{a}_{k}$,  $\overline{y}_{k}$, $\overline{b}_{k}$, and $\overline{z}_{k}$, $\forall k\geq 1$, is governed by
\begin{align}
& \overline{a}_{k}=\big[q\tilde{P}F+(1-q)I_{n^2} \big]\overline{a}_{k-1}, \\
& \overline{y}_{k+1}=\big[qP+(1-q)I_{n } \big]\overline{y}_k, \\
& \overline{b}_{k}=\big[q\tilde{P}F+(1-q)I_{n^2} \big]\overline{b}_{k-1}, \\
& \overline{z}_{k+1}=\big[qP+(1-q)I_{n } \big]\overline{z}_k,
\end{align}
where $I_{m}$ is the $m \times m$ identity matrix, with $\overline{a}_0=\tilde{P}y_0$, $\overline{y}_1=qPy_0$, $\overline{b}_0=\tilde{P}z_0$, and $\overline{z}_1=qPz_0$.
\end{lemma}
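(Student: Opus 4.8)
The plan is to take expectations of the stacked recursions \eqref{ay_dynamics_1}--\eqref{ay_dynamics_2} (and, identically, \eqref{by_dynamics_1}--\eqref{by_dynamics_2}) after first exposing the correct conditional-independence structure. I would introduce the filtration $\mathcal{F}_k=\sigma(x_0,\dots,x_{k-1})$ for $k\geq 1$, with $\mathcal{F}_0$ trivial. Since $y_k=F(a_{k-1}\circ x_{k-1})$, unrolling \eqref{ay_dynamics_1} shows that $a_k$ is a function of $x_0,\dots,x_{k-1}$ alone, hence $\mathcal{F}_k$-measurable, while $x_k$ is independent of $\mathcal{F}_k$. Each entry of $x_k$ supported on an edge has mean $q$; the off-edge entries are immaterial because, by the Remark, $a_k$ and $b_k$ vanish on all $(j,i)\notin\mathcal{E}$, so that $a_k\circ\Ex[x_k]=q\,a_k$ and $a_k\circ(u-\Ex[x_k])=(1-q)\,a_k$.

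I would then establish two building-block identities. First, conditioning \eqref{ay_dynamics_2} on $\mathcal{F}_k$ and using $x_k\perp\mathcal{F}_k$ gives $\Ex[y_{k+1}\mid\mathcal{F}_k]=F(a_k\circ\Ex[x_k])=qFa_k$, so that $\overline{y}_{k+1}=qF\,\overline{a}_k$ for all $k\geq 0$. Second, taking the unconditional expectation of \eqref{ay_dynamics_1} and splitting by linearity, the term $\Ex[a_{k-1}\circ(u-x_{k-1})]$ equals $(1-q)\overline{a}_{k-1}$ (here I use only $a_{k-1}\perp x_{k-1}$), while the remaining term is simply $\tilde{P}\,\overline{y}_k$; hence $\overline{a}_k=(1-q)\overline{a}_{k-1}+\tilde{P}\,\overline{y}_k$ for $k\geq 1$. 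Note that I never assert $y_k\perp x_{k-1}$ --- which is false, since $y_k$ is itself built from $x_{k-1}$ --- because linearity of expectation carries the $\tilde{P}y_k$ summand through without any independence.

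The two identities are then combined. Substituting $\overline{y}_k=qF\,\overline{a}_{k-1}$ (the first identity shifted) into the second yields $\overline{a}_k=\big[q\tilde{P}F+(1-q)I_{n^2}\big]\overline{a}_{k-1}$. Feeding this back into $\overline{y}_{k+1}=qF\,\overline{a}_k$ and invoking $F\tilde{P}=P$ from \eqref{FtildeP} gives $\overline{y}_{k+1}=q\big[qPF+(1-q)F\big]\overline{a}_{k-1}=\big[qP+(1-q)I_n\big](qF\,\overline{a}_{k-1})=\big[qP+(1-q)I_n\big]\overline{y}_k$, as claimed. The initial conditions are immediate: $a_0=\tilde{P}y_0$ is deterministic, so $\overline{a}_0=\tilde{P}y_0$, and $\overline{y}_1=qF\,\overline{a}_0=qF\tilde{P}y_0=qPy_0$ by \eqref{FtildeP}. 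The statements for $\overline{b}_k$ and $\overline{z}_k$ follow verbatim from the identical argument applied to \eqref{by_dynamics_1}--\eqref{by_dynamics_2}, with $z_0$ and $b_0=\tilde{P}z_0$ in place of $y_0$ and $a_0$.

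I expect the main obstacle to be bookkeeping the shared randomness correctly: the naive move of averaging $x_{k-1}$ in \eqref{ay_dynamics_1} as though $y_k$ were independent of it is wrong, and the clean way around it is precisely to average only the $a_{k-1}\circ(u-x_{k-1})$ term (which does enjoy the required independence) and carry $\tilde{P}\overline{y}_k$ through by linearity, closing the loop via the separately derived $\overline{y}_k=qF\overline{a}_{k-1}$. The one remaining technical point --- that the off-support entries of $x_k$ do not contaminate the averaging --- is dispatched by the Remark.
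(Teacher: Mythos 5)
Your proposal is correct and follows essentially the same route as the paper: take expectations of the stacked recursions, use independence of $x_{k-1}$ from $a_{k-1}$ to get $\overline{a}_k=(1-q)\overline{a}_{k-1}+\tilde{P}\overline{y}_k$ and $\overline{y}_{k+1}=qF\overline{a}_k$, then cross-substitute and invoke $F\tilde{P}=P$. Your extra care with the filtration and with the fact that only linearity (not independence) is needed for the $\tilde{P}y_k$ term makes explicit a point the paper's proof passes over silently, but the argument is the same.
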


\begin{proof}
Since the development  for obtaining $\overline{a}_k$ and $\overline{y}_k$ is parallel to that for obtaining $\overline{b}_k$ and $\overline{z}_k$, our analysis focuses on the first case. For $k=0$ in \eqref{ay_dynamics_1}--\eqref{ay_dynamics_2}, by taking expectations of both sides and noting that packet drops at time step $k=0$ are independent of the initial values for $a_0$, it follows that
\begin{align}
& \overline{a}_{0}=\tilde{P}y_0, \label{mu_dynamics_1a}\\
& \overline{y}_1=qF\overline{a}_{0}. \label{mu_dynamics_3a}
\end{align}
Substituting \eqref{mu_dynamics_1a} into \eqref{mu_dynamics_3a}, we obtain $\overline{y}_1=qF\tilde{P}\overline{y}_0=qP\overline{y}_0$. 

For $k\geq 1$ in   \eqref{ay_dynamics_1}--\eqref{ay_dynamics_2}, noting that packet drops at time step $k$ are independent of previous packet drops and the initial values of $a_0$,  it follows, by taking expectations on both sides, that
\begin{align}
& \overline{a}_{k}=\overline{a_{k-1} \circ (u-x_{k-1})}+\overline{\tilde{P}y_k}=\overline{a}_{k-1} \circ (u-\overline{x}_{k-1})+ \tilde{P} \overline{y}_k=(1-q)\overline{a}_{k-1}+\tilde{P}\overline{y}_{k}, \label{mu_avg_dyn_1} \\
&\overline{y}_{k+1} =\overline{F(a_{k} \circ x_{k})}= F (\overline{a}_{k}  \circ \overline{x}_{k} )=qF\overline{a}_{k}. \label{mu_avg_dyn_2}
\end{align}
Substituting  \eqref{mu_avg_dyn_2} into  \eqref{mu_avg_dyn_1}, we obtain
\begin{align}
\overline{a}_{k}&=(1-q)\overline{a}_{k-1}+q\tilde{P}F\overline{a}_{k-1}  \\
& =[q\tilde{P}F+(1-q)I_{n^2}]\overline{a}_{k-1},
\end{align}
Similarly, substituting \eqref{mu_avg_dyn_1} into \eqref{mu_avg_dyn_2}, we have
\begin{align}
\overline{y}_{k+1}&=(1-q)qF\overline{a}_{k-1} +qF\tilde{P}\overline{y}_k  \\
& =(1-q)\overline{y}_k+qF\tilde{P}\overline{y}_k  \\
& =[qP+(1-q)I_n]\overline{y}_k,
\end{align}
where $I_n$ is the $n \times n$ identity matrix.
\end{proof}

\subsection{Second Moment Analysis}
\label{SUBsecondmoment}
In the order to calculate the second moment  dynamics for  \eqref{ay_dynamics_1}--\eqref{by_dynamics_2}, we utilize in the following lemma.

\begin{lemma} \label{exp_lemma}
Let $x$, $c$ and $d$ be random vectors of  dimension $n$. Furthermore, assume that the entries of $x$ are Bernoulli \textit{i.i.d.} random variables such that $\Pr \{ x_{i}=1\}=q $ and $\Pr \{x_{i}=0\}=1-q $, $\forall i=1,2,\dots n$, and are independent from $c$ and $d$. Then
\begin{align}
& S:=\Ex  \big [ (c  \circ x) (x  \circ d )^T ]=q^2 \Ex  [c d^T]+q(1-q)\Ex \big[\text{diag}(cd^T)], \label{exp_lemma_1} \\
& T:=\Ex  \big [ \big(c \circ x\big)\big((u-x)  \circ d\big)^T ]=q(1-q) \Ex  [cd^T]-q(1-q)\Ex \big[\text{diag}(cd^T)], \label{exp_lemma_2}
\end{align}
where $\text{diag}(cd^T)$ is a diagonal matrix with the same diagonal as matrix $cd^T$.
\end{lemma}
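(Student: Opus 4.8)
The plan is to prove both identities entrywise: I would compute the $(i,j)$ entry of each matrix expectation directly, then recognize the resulting pattern as the claimed combination of $\Ex[cd^T]$ and its diagonal part. The only probabilistic ingredients are the independence of $x$ from $(c,d)$ and the Bernoulli structure of the entries of $x$, so all expectations factor into a deterministic-in-$(c,d)$ factor times an elementary moment of the indicator variables.

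First I would expand $S$. The $(i,j)$ entry of $(c\circ x)(x\circ d)^T$ is $c_i x_i \, x_j d_j = c_i d_j \, x_i x_j$, so by independence $S_{ij}=\Ex[c_i d_j]\,\Ex[x_i x_j]$. The crux is evaluating $\Ex[x_i x_j]$: since each $x_i$ is Bernoulli we have $x_i^2=x_i$, hence $\Ex[x_i^2]=q$, whereas for $i\neq j$ independence gives $\Ex[x_i x_j]=q^2$. This is the single place where the diagonal behaves differently from the off-diagonal, and it is exactly this gap ($q$ versus $q^2$) that the $\text{diag}$ correction term absorbs. Assembling, off the diagonal $S_{ij}=q^2\Ex[c_i d_j]$, matching $q^2\Ex[cd^T]$; on the diagonal $S_{ii}=q\,\Ex[c_i d_i]=q^2\Ex[c_i d_i]+q(1-q)\Ex[c_i d_i]$, which is precisely the $(i,i)$ entry of $q^2\Ex[cd^T]+q(1-q)\Ex[\text{diag}(cd^T)]$, establishing \eqref{exp_lemma_1}.

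The computation for $T$ is entirely parallel. The $(i,j)$ entry of $(c\circ x)\big((u-x)\circ d\big)^T$ is $c_i d_j \, x_i(1-x_j)$, so $T_{ij}=\Ex[c_i d_j]\,\Ex[x_i(1-x_j)]$, and I would write $\Ex[x_i(1-x_j)]=q-\Ex[x_i x_j]$. Using the same two cases, the diagonal vanishes ($q-q=0$) while the off-diagonal equals $q-q^2=q(1-q)$; this reproduces $q(1-q)\Ex[cd^T]-q(1-q)\Ex[\text{diag}(cd^T)]$, where the subtracted diagonal term simply cancels the $(i,i)$ contributions, giving \eqref{exp_lemma_2}.

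There is no serious obstacle here; this is a routine second-moment calculation. The only point demanding care is the bookkeeping between diagonal and off-diagonal entries---in particular remembering that $x_i^2=x_i$ so that $\Ex[x_i^2]=q\neq q^2$---together with confirming that the scalar coefficients recombine correctly so that the diagonal correction carries the right sign and magnitude in each of the two identities.
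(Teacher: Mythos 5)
Your proof is correct and follows essentially the same route as the paper: an entrywise computation splitting into the off-diagonal case (where independence gives $\Ex[x_i x_j]=q^2$ and $\Ex[x_i(1-x_j)]=q(1-q)$) and the diagonal case (where $x_i^2=x_i$ gives $\Ex[x_i^2]=q$ and $\Ex[x_i(1-x_i)]=0$), with the $\text{diag}$ term absorbing the discrepancy. No gaps.
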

\begin{proof}
The $(i,j), i\neq j$, entry of $S$ can be obtained as follows:
\begin{align}
S_{ij}=\Ex \big[  c_{i }x_{i}d_{j }x_{j} \big].
\end{align}
Since $x_{i}$ and $x_{j}$ are pairwise independent, and independent from $c$ and $d$, it follows that
\begin{align}
\Ex \big [  c_{i  }x_{i }d_{j  }x_{j } \big]=q^2  \Ex \big [c_{i }d_{j }\big]  . \label{exp_1}
\end{align}
For $i=j$,  observing that $\Ex[x_{i}x_{i}]=\Ex[x_{i }]=q, ~\forall i=1, \dots, n$, we obtain the corresponding entry of $S$  as 
\begin{align}
S_{ii}&=\Ex \big  [c_{i}x_{i}d_{i  }x_{i } \big]=\Ex \big [ c_{i   }d_{i  }x_{i }\big] =q  \Ex \big [c_{i }d_{i  } \big]\label{exp_2}.
\end{align}
In \eqref{exp_1}, it is easy to see that $  \Ex \big [c_{i }d_{j }\big]$ is the $(i,j)$ entry of $\Ex  [cd^T]$. Similarly, in \eqref{exp_2}, it is easy to see that $ \Ex \big [c_{i }d_{i  } \big]$ is the $(i,i)$  entry of $\Ex  [cd^T ]$. From these observations,  the result in \eqref{exp_lemma_1} follows.

Similarly, the $(i,j), i\neq j$, entry of $T$ can be obtained as follows:
\begin{align}
T_{ij}=\Ex \big[ c_{i  }x_{i }d_{j  }(1-x_{j }) \big].
\end{align}
Since $x_{i }$ and $(1-x_{j })$ are independent, it follows that
\begin{align}
\Ex \big [ c_{i }x_{i }d_{j  }(1-x_{j }) \big]=  \Ex \big [c_{i  }d_{j   }\big]\Ex \big [x_{i   }(1-x_{j   })\big]  =q(1-q) \Ex \big [c_{i }d_{j }\big]    . \label{exp_4}
\end{align}
For $i=j$, and observing that $\Ex[x_{i}(1-x_{i })]=0, ~\forall i=1, \dots, n$, the corresponding entry of $T$ can obtained as follows;
\begin{align}
T_{ii}&=\Ex \big[ c_{i  }x_{i }d_{i   }(1-x_{i}) \big]= \Ex \big [c_{i   }d_{i}\big]\Ex \big [x_{i   }(1-x_{i   })\big]  =0 \label{exp_5}.
\end{align}
The result in \eqref{exp_lemma_2} follows from \eqref{exp_4} and \eqref{exp_5}. 
\end{proof}

%
%

The following lemma establishes that the evolution of $\Ex[a_ka_k^T]$, $\Ex[b_kb_k^T]$, and $\Ex[a_kb_k^T]$, and  can be expressed as linear iterations with identical dynamics but different initial conditions. Similarly, the evolution of  $\Ex[y_ky_k^T]$, $\Ex[z_kz_k^T]$, and  $\Ex[y_kz_k^T]$ can also be expressed as linear iterations with identical dynamics but different initial conditions. 
\begin{lemma} \label{second_moment}
Consider the second moments of $a_k$, $y_k$, $b_k$, and $z_k$, and let $\Ex[a_ka_k^T]$, $\Ex[y_ky_k^T]$, $\Ex[b_kb_k^T]$, $\Ex[z_kz_k^T]$, $\Ex[a_kb_k^T]$, and $\Ex[y_kz_k^T]$) be denoted by $\Gamma_k$, $\Phi_k$, $\Psi_k$, $\Lambda_k$, $\Xi_k$, and $\Upsilon_k$ respectively. Then, the evolutions of $\Gamma_k$, $\Phi_k$, $\Psi_k$, $\Lambda_k$, $\Xi_k$, $\Upsilon_k,~\forall k\geq 1$, are described by the following iterations (where all $I$ denote $n^2 \times n^2$ identity matrices):
\begin{align}
 \Gamma_{k}&=\big[q\tilde{P}F+(1-q)I \big]\Gamma_{k-1}\big[q\tilde{P}F+(1-q)I \big]^T+q(1-q)[I-\tilde{P}F]\text{diag}(\Gamma_{k-1})[I-\tilde{P}F]^T, \label{second_a_k}  \\
 \Phi_{k+1}&=F\big[q^2 \Gamma_k +q(1-q) \text{diag}(\Gamma_k)\big]F^T,\label{second_y_k} \\
 \Psi_{k}&=\big[q\tilde{P}F+(1-q)I \big]\Psi_{k-1}\big[q\tilde{P}F+(1-q)I \big]^T+q(1-q)[I-\tilde{P}F]\text{diag}(\Psi_{k-1})[I-\tilde{P}F]^T,  \label{second_b_k}  \\
 \Lambda_{k+1}&=F\big[q^2 \Psi_k +q(1-q) \text{diag}(\Psi_k)\big]F^T, \label{second_z_k} \\
 \Xi_{k}&=\big[q\tilde{P}F+(1-q)I \big]\Xi_{k-1}\big[q\tilde{P}F+(1-q)I \big]^T+q(1-q)[I-\tilde{P}F]\text{diag}(\Xi_{k-1})[I-\tilde{P}F]^T,  \label{second_ab_k}  \\
 \Upsilon_{k+1}&=F\big[q^2 \Xi_k +q(1-q) \text{diag}(\Xi_k)\big]F^T  \label{second_yz_k},
\end{align}
with initial conditions
\begin{align}
& \Gamma_0=\tilde{P}y_0y_0^T\tilde{P}^T, \label{second_a_1} \\
& \Phi_1=\overline{y}_1\overline{y}_1^T+q(1-q) F\text{diag}(\tilde{P}y_0y_0^T\tilde{P}^T)F^T, \label{second_y_1} \\
& \Psi_0=\tilde{P}z_0z_0^T\tilde{P}^T, \label{second_b_1}  \\
& \Lambda_1=\overline{z}_1\overline{z}_1^T+q(1-q) F\text{diag}(\tilde{P}z_0z_0^T\tilde{P}^T)F^T, \label{second_z_1} \\
& \Xi_0=\tilde{P}y_0z_0^T\tilde{P}^T, \label{second_ab_1}   \\
& \Upsilon_1=\overline{y}_1\overline{z}_1^T+q(1-q) F\text{diag}(\tilde{P}y_0z_0^T\tilde{P}^T)F^T. \label{second_yz_1} 
\end{align}

\end{lemma}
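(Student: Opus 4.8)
The plan is to derive in full the recurrence for $\Gamma_k=\Ex[a_ka_k^T]$ and then observe that $\Psi_k$ and the cross term $\Xi_k$ obey formally identical computations, the only change being which (sub)sequences enter the outer product. The decisive structural facts I would use are: (i) $a_{k-1}$ and $b_{k-1}$ are deterministic functions of $x_0,\dots,x_{k-2}$ and the initial data, hence are independent of the packet-drop vector $x_{k-1}$ (and likewise $a_k,b_k$ are independent of $x_k$); and (ii) both iterations are driven by the \emph{same} link-failure variables at each step, since the vectorized dynamics \eqref{ay_dynamics_1}--\eqref{by_dynamics_2} use a common $x_k$ (a single broadcast packet carries both masses), so that in $a_kb_k^T$ the same $x_{k-1}$ sits in both factors. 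A point requiring care is that $y_k=F(a_{k-1}\circ x_{k-1})$ is \emph{not} independent of $x_{k-1}$; I would therefore first eliminate $y_k$ from \eqref{ay_dynamics_1} and write, for $k\geq 1$,
\begin{align}
a_k = a_{k-1}\circ(u-x_{k-1}) + \tilde{P}F\,(a_{k-1}\circ x_{k-1}), \nonumber
\end{align}
expressing $a_k$ solely through $a_{k-1}$ and $x_{k-1}$ before any expectation is taken.

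Writing $G:=\tilde{P}F$ and forming $a_ka_k^T$ produces four terms. Conditioning on $a_{k-1}$ (independent of $x_{k-1}$), I would evaluate each expectation with Lemma~\ref{exp_lemma}: the core of $(a_{k-1}\circ x_{k-1})(a_{k-1}\circ x_{k-1})^T$ is $S$; the two mixed terms are handled by $T$ and its transpose; and the remaining term $(a_{k-1}\circ(u-x_{k-1}))(a_{k-1}\circ(u-x_{k-1}))^T$ needs the companion identity $\Ex[((u-x)\circ c)((u-x)\circ d)^T]=(1-q)^2\Ex[cd^T]+q(1-q)\Ex[\text{diag}(cd^T)]$, which follows from the same entrywise argument as Lemma~\ref{exp_lemma} using $(1-x_i)^2=1-x_i$. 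I would also record that $\Ex[((u-x)\circ c)(x\circ d)^T]$ coincides with $T$.

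The key step is then purely algebraic. Collecting the four conditional expectations, the parts proportional to $a_{k-1}a_{k-1}^T$ complete the square into $[qG+(1-q)I]\,a_{k-1}a_{k-1}^T\,[qG+(1-q)I]^T$, while the parts proportional to $\text{diag}(a_{k-1}a_{k-1}^T)$ regroup into $q(1-q)[I-G]\,\text{diag}(a_{k-1}a_{k-1}^T)\,[I-G]^T$. Taking the outer expectation over $a_{k-1}$ and using that $\text{diag}(\cdot)$ commutes with $\Ex[\cdot]$ (so $\Ex[\text{diag}(a_{k-1}a_{k-1}^T)]=\text{diag}(\Gamma_{k-1})$) yields \eqref{second_a_k}. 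The recurrence \eqref{second_y_k} follows at once by applying Lemma~\ref{exp_lemma} to $y_{k+1}y_{k+1}^T=F(a_k\circ x_k)(a_k\circ x_k)^TF^T$ with $x_k$ independent of $a_k$. Identities \eqref{second_b_k}--\eqref{second_z_k} are the verbatim analogues with $b,z$ replacing $a,y$, and \eqref{second_ab_k}--\eqref{second_yz_k} come from running the same expansion on $a_kb_k^T$ and $y_{k+1}z_{k+1}^T$, where the shared noise $x_{k-1}$ (resp.\ $x_k$) is exactly what makes the cross-dynamics identical in form to the auto-dynamics.

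Finally, the initial conditions reduce to one application of the above. Since $A_0=P\,\text{diag}(y_0)$ is deterministic, $a_0=\tilde{P}y_0$ is deterministic and $\Gamma_0=\tilde{P}y_0y_0^T\tilde{P}^T$; likewise $\Psi_0$ and $\Xi_0$. For $\Phi_1$ I would apply Lemma~\ref{exp_lemma} to $y_1=F(a_0\circ x_0)$ with deterministic $a_0$, then use $F\tilde{P}=P$ from \eqref{FtildeP} together with $\overline{y}_1=qPy_0$ from Lemma~\ref{first_moment} to identify the rank-one part as $\overline{y}_1\overline{y}_1^T$, giving \eqref{second_y_1}; the remaining initial conditions are identical in form. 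The main obstacle, and the only place needing genuine care, is the bookkeeping in the four-term expansion together with the correct independence accounting around $y_k$ versus $x_{k-1}$; once $y_k$ is substituted out, the remainder is the completion-of-square regrouping described above.
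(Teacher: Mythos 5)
Your proposal is correct and follows essentially the same route as the paper's proof: substitute $y_k=F(a_{k-1}\circ x_{k-1})$ so that $a_k$ is expressed through $a_{k-1}$ and $x_{k-1}$ alone, expand the outer product into four terms, evaluate each via Lemma~\ref{exp_lemma} using the independence of $a_{k-1}$ from $x_{k-1}$, and regroup into the completed-square form; the cross-moment and initial-condition computations are handled identically. Your explicit statement of the companion identity $\Ex[((u-x)\circ c)((u-x)\circ d)^T]=(1-q)^2\Ex[cd^T]+q(1-q)\Ex[\text{diag}(cd^T)]$ is a small point of added care, since the paper invokes Lemma~\ref{exp_lemma} for that term even though the lemma as stated only covers the $S$ and $T$ cases.
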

\begin{proof}
The derivation of  \eqref{second_a_k},   \eqref{second_y_k}, \eqref{second_a_1}, and \eqref{second_y_1}, is the same as the derivation of   \eqref{second_b_k},   \eqref{second_z_k}, \eqref{second_b_1}, and \eqref{second_z_1}, thus the developments in the proof will only address the former. For $k=0$, it follows from Lemma~\ref{first_moment} and \eqref{ay_dynamics_1} that $a_0 =\tilde{P}y_0$. Then, 
\begin{align}
& \Gamma_0=\Ex[a_0a_0^T]=\tilde{P}\Ex[y_0y_0^T]\tilde{P}^T=\tilde{P} y_0y_0^T \tilde{P}^T, 
\end{align}
and 
\begin{align}
& \Phi_1=\Ex[y_1y_1^T]=\Ex \big[F(a_0 \circ x_0)(x_0 \circ a_0)^TF^T\big]=F\Ex \big[(a_0 \circ x_0)(x_0 \circ a_0)^T\big]F^T \label{Phi_1}.
\end{align}
Applying the results in Lemmas~\ref{first_moment} and \ref{exp_lemma} to \eqref{Phi_1}, it follows that
\begin{align}
\Phi_1&=q^2F\Ex \big[ a_0  a_0^T \big]F^T+q(1-q)F\Ex \big[ \text{diag}(a_0   a_0^T) \big]F^T  \nonumber \\
&=(qF\tilde{P} y_0)(qF\tilde{P} y_0)^T+q(1-q)F\Ex[\text{diag}(\tilde{P}y_0y_0^T\tilde{P}^T) ]F^T \nonumber \\
&=(qP y_0)(qP y_0)^T+q(1-q) F\text{diag}(\tilde{P}y_0y_0^T\tilde{P}^T)F^T  \nonumber\\
& =\overline{y}_1\overline{y}_1^T+q(1-q) F\text{diag}(\tilde{P}y_0y_0^T\tilde{P}^T)F^T,
\end{align}
where we used the fact that $F\tilde{P}=P$ \big(refer to Eq.  \eqref{FtildeP}\big).

For $k \geq 1$, and taking into account that $y_k=F(a_{k-1} \circ x_{k-1})$, it follows that
\begin{align}
 \Gamma_{k} =& \Ex \big[\big(a_{k-1} \circ (u-x_{k-1}) +\tilde{P}y_k\big)\big(a_{k-1} \circ (u-x_{k-1})+\tilde{P}y_k\big)^T \big] \nonumber \\
=& \Ex \big[\big(a_{k-1} \circ (u-x_{k-1})\big)\big(a_{k-1} \circ (u-x_{k-1})\big)^T \big]+
\Ex \big[ \big(a_{k-1} \circ (u-x_{k-1})\big)\big(\tilde{P}y_k\big)^T  \big] \nonumber\\
&+\Ex \big[\big(\tilde{P}y_k\big)\big(a_{k-1} \circ (u-x_{k-1})\big)^T  \big]+\Ex \big[  \big(\tilde{P}y_k\big)\big(\tilde{P}y_k\big)^T  \big] \nonumber \\
=&  \Ex \big[\big(a_{k-1} \circ (u-x_{k-1})\big)\big(a_{k-1} \circ (u-x_{k-1})\big)^T \big]+\Ex \big[ \big(a_{k-1} \circ (u-x_{k-1})\big)\big(a_{k-1} \circ x_{k-1}\big)^T  \big]F^T\tilde{P}^T  \nonumber \\
&+\tilde{P}F\Ex \big[\big(a_{k-1} \circ x_{k-1}\big)\big(a_{k-1} \circ (u-x_{k-1})\big)^T  \big]+\tilde{P}F\Ex \big[ \big(a_{k-1} \circ x_{k-1}\big)\big(a_{k-1} \circ x_{k-1}\big)^T  \big]F^T\tilde{P}^T.  \label{Gamma_k+1a}
\end{align}
Then, from Lemma~\ref{exp_lemma},  \eqref{Gamma_k+1a} can be rewritten as
\begin{align}
 \Gamma_{k} = & (1-q)^2 \Ex \big[ a_{k-1}a_{k-1}^T \big] +q(1-q) \Ex \big[\text{diag}(a_{k-1}a_{k-1}^T)]  \nonumber \\
   &+q(1-q)\Ex \big[ a_{k-1} a_{k-1}^T \big]F^T\tilde{P}^T-q(1-q) \Ex \big[\text{diag}(a_{k-1}a_{k-1}^T)]F^T\tilde{P}^T  \nonumber\\
 &+q(1-q)\tilde{P}F\Ex \big[ a_{k-1}   a_{k-1}^T \big]-q(1-q)\tilde{P}F\Ex \big[\text{diag}(a_{k-1}a_{k-1}^T)] \nonumber\\
  &+ q^2\tilde{P}F\Ex \big[ a_{k-1} a_{k-1}^T \big]F^T\tilde{P}^T+q(1-q)\tilde{P}F\Ex \big[\text{diag}(a_{k-1}a_{k-1}^T)] F^T\tilde{P}^T.  \label{Gamma_k+1b}
 \end{align}
By re-arranging terms in \eqref{Gamma_k+1b} and observing that   $\Gamma_{k-1}=\Ex \big[ a_{k-1} a_{k-1}^T \big]$ and $\text{diag}(\Gamma_{k-1})=\Ex \big[ \text{diag}(a_{k-1}  a_{k-1}^T) \big]$, the result in \eqref{second_a_k} follows. 
 
Additionally, from Lemma~\ref{exp_lemma}, it follows that
 \begin{align}
 \Phi_{k+1}= \Ex \big[y_{k+1}y_{k+1}^T\big ] &= F\Ex \big[(a_k \circ x_k)(x_k \circ a_k )^T \big ]F^T  \nonumber \\
 &=F\Big[ q^2\Ex \big[ a_k a_k ^T \big ]+q(1-q)\Ex \big[\text{diag}(a_{k}a_{k}^T) \big] \Big]F^T \nonumber \\
 &=F\big[q^2\Gamma_{k}+q(1-q)\text{diag}(\Gamma_k) \big]F^T.
 \end{align}

To obtain the iterations for   $\Xi_k$ and $\Upsilon_k$, the developments are very similar to the ones above. For $k=0$, 
 \begin{align}
 \Xi_0=\Ex \big[ a_0   b_0^T \big]=\tilde{P}\Ex[y_0z_0^T]\tilde{P}^T=\tilde{P} y_0z_0^T \tilde{P},
 \end{align}
 and
 \begin{align}
 \Upsilon_1 =\Ex[y_1z_1^T]&=\Ex \big[F(a_0 \circ x_0)(x_0 \circ b_0)^TF^T\big]=F\Ex \big[(a_0 \circ x_0)(x_0 \circ b_0)^T\big]F^T \nonumber \\
&=(qF\tilde{P} y_0)(qF\tilde{P} z_0)^T+q(1-q)F\Ex[\text{diag}(\tilde{P}y_0z_0^T\tilde{P}^T) ]F^T \nonumber \\
&=(qP y_0)(qP z_0)^T+q(1-q) F\text{diag}(\tilde{P}y_0z_0^T\tilde{P}^T)F^T  \nonumber\\
&=\overline{y}_1\overline{z}_1^T+q(1-q) F\text{diag}(\tilde{P}y_0z_0^T\tilde{P}^T)F^T ,
\end{align}
where again we used the fact that $F\tilde{P}=P$ \big(refer to Eq.  \eqref{FtildeP}\big).

For $k \geq 1$, from Lemma~\ref{exp_lemma} and \eqref{ay_dynamics_1}, and taking into account that $y_{k+1}=F(a_k \circ x_k)$ and $z_{k+1}=F(b_k \circ x_k)$, it follows that
\begin{align}
 \Xi_{k} =& \Ex \big[\big(a_{k-1} \circ (u-x_{k-1}) +\tilde{P}y_k\big)\big(b_{k-1} \circ (u-x_{k-1})+\tilde{P}z_k\big)^T \big] \nonumber \\
=& \Ex \big[\big(a_{k-1} \circ (u-x_{k-1})\big)\big(b_{k-1} \circ (u-x_{k-1})\big)^T \big]+
\Ex \big[ \big(a_{k-1} \circ (u-x_{k-1})\big)\big(\tilde{P}z_k\big)^T  \big] \nonumber\\
&+\Ex \big[\big(\tilde{P}y_k\big)\big(b_{k-1} \circ (u-x_{k-1})\big)^T  \big]+\Ex \big[  \big(\tilde{P}y_k\big)\big(\tilde{P}z_{k}\big)^T  \big] \nonumber \\
=&  \Ex \big[\big(a_{k-1} \circ (u-x_{k-1})\big)\big(b_{k-1} \circ (u-x_{k-1})\big)^T \big]+\Ex \big[ \big(a_{k-1} \circ (u-x_{k-1})\big)\big(b_{k-1} \circ x_{k-1}\big)^T  \big]F^T\tilde{P}^T  \nonumber \\
&+\tilde{P}F\Ex \big[\big(a_{k-1} \circ x_{k-1}\big)\big(b_{k-1} \circ (u-x_{k-1})\big)^T  \big]+\tilde{P}F\Ex \big[ \big(a_{k-1} \circ x_{k-1}\big)\big(b_{k-1} \circ x_{k-1}\big)^T  \big]F^T\tilde{P}^T  \nonumber \\
=&   (1-q)^2\Ex \big[ a_{k-1}b_{k-1}^T \big]+q(1-q)\Ex \big[\text{diag}(a_{k-1}b_{k-1}^T)] \nonumber \\
  &+q(1-q)\Ex \big[ a_{k-1} b_{k-1}^T \big] F^T\tilde{P}^T-q(1-q) \Ex \big[\text{diag}(a_{k-1}b_{k-1}^T)]F^T\tilde{P}^T  \nonumber\\
 &+q(1-q)\tilde{P}F\Ex \big[ a_{k-1}   b_{k-1}^T \big]-q(1-q)\tilde{P}F\Ex \big[\text{diag}(a_{k-1}b_{k-1}^T)] \nonumber\\
  &+ q^2\tilde{P}F\Ex \big[ a_{k-1} b_{k-1}^T \big]F^T\tilde{P}^T+q(1-q)\tilde{P}F\Ex \big[\text{diag}(a_{k-1}b_{k-1}^T)] F^T\tilde{P}^T.\label{Gamma_k+1c}
\end{align}
By re-arranging terms in \eqref{Gamma_k+1c} and observing that   $\Xi_{k-1}=\Ex \big[ a_{k-1} b_{k-1}^T \big]$ and $\text{diag}(\Xi_{k-1})=\Ex \big[ \text{diag}(a_{k-1}  b_{k-1}^T) \big]$, the result in \eqref{second_ab_k} follows. Finally,
 \begin{align}
 \Upsilon_{k+1}= \Ex \big[y_{k+1}z_{k+1}^T\big ] &= F\Ex \big[(a_k \circ x_k)(x_k \circ b_k )^T \big ]F^T  \nonumber \\
 &=F\Big[ q^2\Ex \big[ a_k b_k ^T \big ]+q(1-q)\Ex \big[\text{diag}(a_{k}b_{k}^T) \big] \Big]F^T \nonumber \\
 &=F\big[q^2\Xi_{k}+q(1-q)\text{diag}(\Xi_k) \big]F^T,
 \end{align}
which completes the proof.
 \end{proof}
 
Although omitted in the statement of Lemma~\ref{second_moment}, it is easy to see that the dynamics of $\Delta_k=\Ex[b_ka_k^T]$ and $\Theta_k=\Ex[z_ky_k^T]$ can also be obtained by noting that $\Delta_k=\Psi_k^T$ and $\Theta_k=\Upsilon_k^T$.

 \section{Convergence Analysis of Robustified Double-Iteration Algorithm} \label{convergence_analysis}
 The previous Section established that the iterations governing the evolution of $\Gamma_k$, $\Psi_k$ and $\Xi_k$ are identical except for the initial conditions. We will show next that the steady-state solutions of these iterations are  also identical up to a multiplicative constant. To see this, we will rewrite \eqref{second_a_k}, \eqref{second_b_k}, and \eqref{second_ab_k} in vector form using  Kronecker products. For given matrices $C$, $A$, and $B$ of appropriate dimensions, the matrix equation $C=AXB$ (where $X$ is an unknown matrix) can be rewritten as a set of linear equations of the form $(B^T\otimes A)x=c$, where $x$ and $c$ are the vectors that result from stacking the columns of matrices $X$ and $C$ respectively, and $\otimes$ denotes the Kronecker product\footnote{The Kronecker product of matrices $A=[a_{ij}] \in \mathbb{R}^{m \times n}$ and $B=[b_{ij}] \in \mathbb{R}^{p \times q}$  is defined  (see, e.g.,  \cite{HoJo:91}) as the block matrix 
\begin{align}
A \otimes B := \begin{bmatrix}a_{11}B & \dots & a_{1n}B \\
\vdots & \ddots & \vdots \\
a_{m1}B & \dots & a_{mn}B \end{bmatrix} \in \mathbb{R}^{mp \times nq}. \nonumber
\end{align}
} of matrices \cite{HoJo:91}. Let $\gamma_k$ be the vector that results from stacking the columns of  $\Gamma_k$ and $\tilde{\gamma}_k$ the vector that results from stacking the columns of $\text{diag}(\Gamma_k)$. Then, it can be easily seen that \eqref{second_a_k} can be rewritten as
 \begin{align}
 \gamma_{k}=&\big[[q\tilde{P}F+(1-q)I] \otimes [q\tilde{P}F+(1-q)I]  \big]\gamma_{k-1}+\big[q(1-q)[I-\tilde{P}F] \otimes [I-\tilde{P}F] \big]\tilde{\gamma}_{k-1},~k \geq 1. \label{full_dynamics_a}
 \end{align}
 Let $G$ be a diagonal matrix with entries  $G\big((l-1)n^2+l,(l-1)n^2+l\big)=1,~\forall l=1,2,\dots,n^2$, and zero otherwise. Then, the second term on the right hand side of \eqref{full_dynamics_a} can be written as 
 \begin{align}
& q(1-q) \big( [I-\tilde{P}F] \otimes [I-\tilde{P}F] \big) \tilde{\gamma}_{k-1}= q(1-q)\big([I-\tilde{P}F] \otimes [I-\tilde{P}F] \big) G \gamma_{k-1}, ~k \geq 1,
 \end{align}
which leads us to
\begin{align}
& \gamma_{k}= \big[[q\tilde{P}F+(1-q)I] \otimes [q\tilde{P}F+(1-q)I]  +q(1-q)\big([I-\tilde{P}F] \otimes [I-\tilde{P}F] \big)G \big ] \gamma_{k-1},~k \geq 1.  \label{gamma_dyn}
\end{align}
Let $\psi_k$ and $\xi_k$ and $\delta_k$ be the vectors that result from stacking the columns of $\Psi_k$, $\Xi_k$ and $\Delta_k$ respectively. Then, it is easy to see that the same recurrence relation as in \eqref{gamma_dyn} governs the evolution of $\psi_k$ and $\xi_k$. 

\begin{theorem} \label{the_matrix_thm}
Let $P \in \mathbb{R}^{n \times n}$ be a column stochastic and primitive weight matrix associated with a   directed graph $\mathcal{G}=\{\mathcal{V},\mathcal{E}\}$, with $\mathcal{V}=\{1,2,\dots,n\}$ and $\mathcal{E}\subseteq \mathcal{V} \times \mathcal{V}$. Let $F=[I_n~ I_n~ \dots~ I_n] \in \mathbb{R}^{n \times n^2}$, where $I_n$ is the $n \times n$ identity matrix, and $\tilde{P}=[E_1P^T~E_2P^T~\dots~E_nP^T ]^T \in \mathbb{R}^{n^2 \times n}$,  where each $E_i \in \mathbb{R}^{n \times n},~i \in \{1,2,\dots,n \}$, satisfies $E_i(i,i)=1$  and has all other entries equal to zero. Then, for any $q$, $0 < q \leq 1$, the matrix $\Pi$ defined as
\begin{align}
\Pi \equiv [q\tilde{P}F+(1-q)I] \otimes [q\tilde{P}F+(1-q)I]  +q(1-q)\big([I-\tilde{P}F] \otimes [I-\tilde{P}F] \big)G   \label{THE_MATRIX}
 \end{align}
 \end{theorem}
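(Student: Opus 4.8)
I take the assertion to be that, for $0<q\le 1$, the matrix $\Pi$ has spectral radius $1$, with $1$ a simple eigenvalue and every other eigenvalue strictly inside the unit disk, so that $\lim_{k}\Pi^k$ exists and is rank one; this is exactly what forces the steady states of $\Gamma_k$, $\Psi_k$, $\Xi_k$ (all governed by $\Pi$ through \eqref{gamma_dyn}) to coincide up to a scalar. The plan is to undo the vectorization and recognize $\Pi$ as a \emph{positive} operator driven by random column-stochastic matrices, which tames the awkward diagonal/$G$ term. Writing $M:=\tilde P F$, the recursion $\gamma_k=\Pi\gamma_{k-1}$ is the column-stacked form of $\Gamma_k=\mathcal L(\Gamma_{k-1})$ with $\mathcal L(\Gamma)=[qM+(1-q)I]\,\Gamma\,[qM+(1-q)I]^T+q(1-q)(I-M)\,\mathrm{diag}(\Gamma)\,(I-M)^T$. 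First I would show, by the very same expansion used in Lemma~\ref{exp_lemma}, that
\begin{align}
\mathcal L(\Gamma)=\Ex\big[\Phi\,\Gamma\,\Phi^T\big],\qquad \Phi:=I-(I-M)\,\mathrm{diag}(x), \nonumber
\end{align}
where $x\in\{0,1\}^{n^2}$ has independent Bernoulli($q$) entries (this is just $a_k=[I-(I-M)\mathrm{diag}(x_{k-1})]a_{k-1}$ read off from \eqref{ay_dynamics_1}--\eqref{ay_dynamics_2}). The payoff is twofold: the diagonal term is absorbed into a single expectation, and every realization of $\Phi$ is nonnegative and column stochastic, since column $j$ of $\Phi$ equals column $j$ of $M$ when $x_j=1$ and equals $e_j$ when $x_j=0$, both nonnegative with unit column sum. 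Here one uses that $M=\tilde P F$ is itself nonnegative and column stochastic, which follows from $P$ being column stochastic together with $F\tilde P=P$ in \eqref{FtildeP}.

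The second step pins down $\rho(\Pi)=1$ and the dominant eigenpair. Because $M=\tilde P F$ and $P=F\tilde P$, the matrices $M$ and $P$ share the same nonzero eigenvalues with identical algebraic multiplicities; as $P$ is primitive and column stochastic, $M$ has $1$ as a simple eigenvalue, all other nonzero eigenvalues strictly inside the unit disk, and $0$ with multiplicity $n^2-n$. For the upper bound I would use that a product $\Phi_k\cdots\Phi_1$ of column-stochastic matrices is again column stochastic, hence entrywise in $[0,1]$ and uniformly bounded; iterating the identity above gives $\mathcal L^k(\Gamma)=\Ex[(\Phi_k\cdots\Phi_1)\,\Gamma\,(\Phi_k\cdots\Phi_1)^T]$, so $\{\Pi^k\}$ is a bounded family and $\rho(\Pi)\le 1$. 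That $1$ is attained follows from $\mathbf 1^T\Phi=\mathbf 1^T$ for every realization, whence $\mathbf 1_{n^2}^T\,\mathcal L(\Gamma)\,\mathbf 1_{n^2}=\mathbf 1_{n^2}^T\Gamma\,\mathbf 1_{n^2}$; equivalently the adjoint of $\mathcal L$ fixes $\mathbf 1_{n^2}\mathbf 1_{n^2}^T$, so $1\in\mathrm{spec}(\Pi)$ and $\rho(\Pi)=1$.

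The final and hardest step is to show that $1$ is a \emph{simple} eigenvalue of $\Pi$ and the \emph{only} eigenvalue of modulus one, which is precisely what yields a rank-one limit $\lim_k\Pi^k$ and hence steady states that agree up to a scalar. I would argue this on the cone of positive-semidefinite matrices, exploiting the positivity of $\mathcal L$: with positive probability all entries of $x$ equal one, so $\Phi=M$ occurs with positive probability, and because $P$ is primitive the structured matrix $M$ drives products $\Phi_k\cdots\Phi_1$ toward a rank-one (consensus) action on the invariant coordinate subspace $W=\mathrm{span}\{e_{(j,i)}:(j,i)\in\mathcal E\}$ identified in the Remark. Concretely, one exhibits a window length $\ell$ and a positive-probability event on which $\Phi_\ell\cdots\Phi_1$ is strictly positive (scrambling) on $W$, so that a Birkhoff/coefficient-of-ergodicity contraction collapses all Perron directions onto the single eigenvector built from the Perron vector $\pi$ of $P$ (with $P\pi=\pi$, $\pi>0$); semisimplicity of $1$ then comes for free from the boundedness of $\{\Pi^k\}$, and no other unit-modulus eigenvalue can survive the contraction.

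I expect this last step to be the main obstacle. The subtlety is purely bookkeeping about structure: one must transfer the primitivity of the $n\times n$ matrix $P$ through $M=\tilde P F$ to the $n^2$-dimensional level and then through the Kronecker square to the $n^4$-dimensional operator $\Pi$, all while respecting the fixed zero pattern of $W$ (the entries that the Remark notes remain zero for all $k$). Care is also needed because $\Pi$ acts on all of $\mathbb R^{n^4}$ and may carry spurious eigenvalues off $W\otimes W$; the cleanest remedy is to state the simplicity of $1$ relative to the invariant subspace that contains the initial data $\tilde P y_0y_0^T\tilde P^T$, $\tilde P z_0z_0^T\tilde P^T$, and $\tilde P y_0z_0^T\tilde P^T$, on which the three iterations actually evolve, so that the rank-one conclusion applies directly to $\Gamma_k$, $\Psi_k$, and $\Xi_k$.
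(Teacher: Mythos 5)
Your reformulation $\Pi=\Ex[\Phi\otimes\Phi]$ with $\Phi=I-(I-\tilde PF)\,\mathrm{diag}(x)$ is correct (it is exactly the vectorized form of $a_k=[I-(I-\tilde PF)\mathrm{diag}(x_{k-1})]a_{k-1}$), and it disposes of column stochasticity more cleanly than the paper does: the paper instead verifies nonnegativity by an entry-by-entry case analysis of the six forms \eqref{c1}--\eqref{c6} and checks column sums of $C\otimes C$ and $D\otimes D$ separately. Likewise, your observation that $\tilde PF$ and $F\tilde P=P$ share nonzero eigenvalues with multiplicities is a genuinely slicker route to the spectrum of $C=q\tilde PF+(1-q)I$ and hence of $C\otimes C$ than the paper's decomposition of $\tilde PF$ into a recurrent class plus transient states. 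Up to that point your argument is complete and, in places, an improvement.

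The gap is in the step that carries the actual content of the theorem: showing that the perturbation $q(1-q)(D\otimes D)G$ does not destroy the simplicity and dominance of the eigenvalue $1$. Your spectral computation applies to $C\otimes C$, not to $\Pi$, and the bridge you propose --- a positive-probability scrambling window for the random products $\Phi_\ell\cdots\Phi_1$ followed by a Birkhoff/coefficient-of-ergodicity contraction --- is only sketched; as written it establishes a contraction property of random products rather than the spectral claim about the single deterministic matrix $\Pi$, and your own fallback (asserting simplicity of $1$ only relative to an invariant subspace containing the initial data) proves a weaker statement than the theorem asserts. The paper closes this gap combinatorially: it argues that $\Pi$ has the same zero/nonzero pattern as $C\otimes C$ (using \eqref{pi1}--\eqref{pi3} to check that the entries in the columns touched by $G$ stay nonnegative and that the pattern is preserved), so $\Pi$ inherits the decomposition \eqref{the_matrix_decomp_2} with a primitive block. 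Your own formulation actually offers a cleaner way to finish than either route: since every realization of $\Phi$ has each column equal to either a column of $\tilde PF$ or a standard basis vector, one gets $\mathrm{supp}(\Pi)\subseteq\mathrm{supp}(C\otimes C)$ and $\mathrm{supp}(\Pi)\supseteq\mathrm{supp}(\tilde PF\otimes \tilde PF)\cup\mathrm{supp}(I)$ for $0<q<1$; every column of $\tilde PF$ is supported inside the set $R$ of coordinates $(j,i)\in\mathcal E$, so one step of $\tilde PF\otimes \tilde PF$ sends every state into $R\times R$, which is closed under $C\otimes C$ (hence under $\Pi$) and on which $\tilde PF\otimes \tilde PF$ is primitive because $\tilde PF$ restricted to $R$ is irreducible with positive diagonal entries at the coordinates $(i,i)$. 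Until some version of this is written out, the claim that $\Pi$ itself has a unique eigenvalue of maximum modulus at $1$ is not established.
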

is column stochastic, and it has a single  eigenvalue of maximum magnitude at value one.

\begin{proof}
We  show first column stochasticity of matrix $\Pi$. Let $C= q\tilde{P}F+(1-q)I$ and $D=  I-\tilde{P}F$, so that $\Pi=C \otimes C+q(1-q)(D \otimes D)G$. We will establish that $C \otimes C$
is column stochastic and also show that the column sums of $D \otimes D$ are all zero. By construction, the entries of the $i^{th}$ column of  $\tilde{P}\in \mathbb{R}^{n^2 \times n }$ are all zero, with the possible exception of the ones indexed by $\big((i-1)n+j,i\big),~i,j=1,2,\dots,n$, each of which corresponds to the $(j,i)$ entry of matrix $P$. Then,  it follows that $\sum_{l=1}^{n^2}\tilde{P}_{li}=\sum_{j=1}^{n}P_{ji}=1,\forall i=1,2,\dots,n^2$. The matrix $\tilde{P}F \in \mathbb{R}^{n^2 \times n^2}$  is also column stochastic by construction, as it results from horizontally concatenating $n$ times the matrix $\tilde{P}$, i.e., $\tilde{P}F=[\tilde{P}~ \tilde{P}~\dots~\tilde{P}]$; therefore, the matrix $C$ is also column stochastic. The kronecker product of $C$ with itself, results in an $n^4 \times n^4$ block matrix of the form  $C  \otimes C=[C_1~C_2 \dots~C_{n^2}]$, where $C_j=[c_{1j}C^T~c_{2j}C^T~\dots~c_{n^2j}C^T ]^T$. Then, it follows that  the sum of the entries of the $l^{th}$ column of $C_j$ is $\sum_{m=1}^{n^4} C_j(m,l)=(\sum_{i=1}^{n^2} c_{ij})(\sum_{r=1}^{n^2 }c_{rl})$. Since $ \sum_{i=1}^{n^2} c_{ij} $ and $\sum_{r=1}^{n^2 }c_{rl}$ are the sum of the entries of the $j^{th}$ and $l^{th}$ columns of $C=q\tilde{P}F+(1-q)I_{n^4}$ (which is column stochastic), it follows that  $\sum_{m=1}^{n^4} C_j(m,l)=1$; therefore,   $C \otimes C$ is also   column stochastic. 

Since $\tilde{P}F$ is column stochastic, the column-sums of $D=I- \tilde{P}F$ are zero. The kronecker product of D with itself is of the form $D  \otimes D=[D_1~D_2 \dots~D_{n^2}]$, where $D_j=[d_{1j}D^T~d_{2j}D^T~\dots~d_{n^2j}D^T ]^T$. Using  similar arguments as above, it follows that $\sum_{m=1}^{n^4} D_j(m,l)=(\sum_{i=1}^{n^2} d_{ij})(\sum_{r=1}^{n^2 }d_{rl})=0$, which implies that   the column-sums of $D \otimes D$ are zero. The only thing left to establish that $\Pi$ is column stochastic is to show that all entries of $\Pi$ are nonnegative (from where it immediately follows that $\Pi=C \otimes C+q(1-q)(D \otimes D)G$ is column stochastic). We argue nonnegativity of $\Pi$ as follows: due to the sparsity structure of $G$ in \eqref{THE_MATRIX}, the only nonzero entries of $(D \otimes D)G$ will be in columns $(k-1)n^2+k,~k=1,2,\dots,n^2$; thus except for  entries in these columns, the   entries  of $\Pi$ will be identical to the corresponding entries   in $C \otimes C$.  From the structure of $\tilde{P}F$,  entries of $C \otimes C$ and $q(1-q)(D \otimes D)$ can, respectively, take one of the following three forms:
\begin{align}
&  \big(qp_{ij} +(1-q)\big)\big(qp_{lm} +(1-q)\big), \label{c1} \\
&  q(1-q)(1-p_{ij})(1-p_{lm}),  \label{c2}
\end{align}
or 
\begin{align}
&   qp_{ij}  \big(qp_{lm} +(1-q)\big),  \label{c3}  \\
&  -q(1-q)p_{ij}(1-p_{lm}),   \label{c4}
\end{align}
or 
\begin{align}
&  q^2p_{ij}   p_{lm},   \label{c5}  \\
&  q(1-q) p_{ij} p_{lm},    \label{c6}
\end{align}
where $p_{ij}\geq0$ and $p_{lm}\geq0$ are the $(i,j)$ and $(l,m)$ entries of matrix $P$. For \eqref{c1} and \eqref{c2}, the corresponding entry of $\Pi$ is of the form 
\begin{align}
&  \big(qp_{ij} +(1-q)\big)\big(qp_{lm} +(1-q)\big)+q(1-q)(1-p_{ij})(1-p_{lm})=qp_{ij}p_{lm}+(1-q) , \label{pi1}
\end{align}
and satisfies $0 \leq qp_{ij}p_{lm}+(1-q) \leq 1$
For \eqref{c3} and \eqref{c4}, the corresponding entry of $\Pi$ is of the form 
\begin{align}
& qp_{ij}  \big(qp_{lm} +(1-q)\big)-q(1-q)p_{ij}(1-p_{lm})=qp_{ij}p_{lm} , \label{pi2}
\end{align}
and satisfies $0 \leq qp_{ij}p_{lm}\leq 1$. For \eqref{c5} and \eqref{c6}, the corresponding entry of $\Pi$ is of the form 
\begin{align}
q^2p_{ij}   p_{lm}+q(1-q) p_{ij} p_{lm}=qp_{ij}p_{lm}, \label{pi3}
\end{align}
and satisfies $0\leq qp_{ij}p_{lm} \leq 1$.

To prove the second assertion, we will show first that matrix $\tilde{P}F$ can be written via a permutation of its indices in the form  
\begin{align}
\begin{bmatrix} U && V \\ 0 && W \end{bmatrix}, \label{the_matrix_decomp}
\end{align}
where $U$ is an irreducible column stochastic matrix and $\lim_{k \rightarrow \infty} W^k =0$. Since $\tilde{P}F$ is column stochastic, we can assume that it corresponds to the weight matrix of   some   graph $\tilde{\mathcal{G}}=\{\tilde{\mathcal{V}}, \tilde{\mathcal{E}} \}$. We will show that this graph has a single recurrent class plus a few transient states, from which the decomposition of $\tilde{P}F$ in  \eqref{the_matrix_decomp}  follows. Let 
\begin{align}
& \tilde{\mathcal{V}}=\{(1,1), (2,1), \dots, (n,1),(1,2), (2,2), \dots,(n,2), \dots ,(n,n-1), (1,n),(2,n), \dots, (n,n) \}.
\end{align}
From the structure of $\tilde{P}F$, it follows that for any node $(i,j) \in \tilde{\mathcal{V}}$, one-step transitions out of $(i,j)$ are to nodes of the form $(m,i)$, with $i \in \mathcal{N}_m^-$, where $ \mathcal{N}_m^-$ is the set in-neighbors of  node $m$ in the graph $\mathcal{G}$  (with weight matrix $P$).  From the structure of $\tilde{P}F$, it also follows that there are possibly several rows of $\tilde{P}F$ with all  entries equal to zero, which means that a node  $(i,j)$ that is associated with such row  cannot be reached from any other node; however, as already argued, from nodes of the form $(i,j)$, it is possible to reach nodes of the form $(m,i)$, where $i \in \mathcal{N}_m^-$. Clearly, the nodes corresponding to rows with all entries being zero are transient.  Note that  the possibility of individual nodes that cannot be reached from any other node being disconnected is ruled out as it is easy to see the only nonzero diagonal entries of  $\tilde{P}F$ correspond to  diagonal entries of $P$, which are strictly smaller than one.

Next we will show that from a node $(i,j)$ whose corresponding row in $\tilde{P}F$  has some nonzero entries one can reach any other node $(m,l)$ whose corresponding  row  in $\tilde{P}F$ has some nonzero entries. This means that all non-transient nodes form a single recurrent class (as already argued all nonzero diagonal entries are strictly smaller than one which means there cannot be absorbing nodes). This  follows from the fact that the graph $\mathcal{G}$ is strongly connected, which means that for any $j,l \in \mathcal{V}$, there exists a path between $j$ and $l$. Let $i_1,i_2,\dots,i_t$ denote the nodes traversed along the path between $j$ and $l$. We will show next that for any two non-transient nodes $(i,j),  (r,l) \in \tilde{\mathcal{V}}$ there exists a path. As already argued, from $(i,j)$ one can reach in a single hop any node of the form $(m,i)$, where $m$ is a neighbor of node $i$ in the graph $\mathcal{G}$. Since $i_1$ is the first node traversed in the path between $j$ and $l$, it follows that $(i_1,i) \in \tilde{\mathcal{V}}$ can be reached in one step from $(i,j)$. By repeatedly using this  argument, it follows that the sequence of nodes $(i_1,i),(i_2,i_1),\dots,(i_{t},i_{t-1}), (r,i_t)$ forms a path between $ (i,j)$ and  $(r,l) $, which means that any non-transient node can be reached by any other non-transient node; thus, the set of non-transient nodes forms a single recurrent class. Clearly, the vertex set $\tilde{\mathcal{V}}$ can be decomposed into a single recurrent class and possibly several transient nodes. By re-ordering the nodes, it follows that $\tilde{P}F$ can be rewritten as in \eqref{the_matrix_decomp} (see, e.g., \cite[p. 126]{Se:06}). Furthermore, since $Q$ in \eqref{the_matrix_decomp} is irreducible, it follows that $ qQ+(1-q)I$ (where $I$ is the identity matrix) is primitive. It follows that $C=q\tilde{P}F+(1-q)I$ has a unique largest eigenvalue of value one,  i.e., $\lambda_1=1$, and $1>|\lambda_2|\geq \dots \geq|\lambda_{n^2}|$. Let $\sigma(C)=\{\lambda_1,\lambda_2,\dots,\lambda_{n^2}\}$. Then, $\sigma(C\otimes C )=\{\lambda_i \lambda_j,~i=1,\dots,n,~j=1,\dots,n \}$, including algebraic multiplicities in both cases  \cite[p. 245]{HoJo:91}. Since $\lambda_1=1$ is unique (multiplicity one) and $|\lambda_i| <1,~i=2,\dots,n^2$, it follows that the  eigenvalue of $C\otimes C =[q\tilde{P}F+(1-q)I] \otimes [q\tilde{P}F+(1-q)I]$ of largest magnitude also takes  value~1 and is unique. Since $C\otimes C$ is column stochastic and   $\lambda_1=1$ is unique, we know that  either $C\otimes C$ is also primitive  or it can be decomposed following a permutation of indices to the form \cite[p. 126]{Se:06}:
\begin{align}
\begin{bmatrix} L && M \\ 0 && N \end{bmatrix}, \label{the_matrix_decomp_2}
\end{align}
where $L$ is a primitive matrix and $\lim_{k \rightarrow \infty} N^k =0$. 

We will show next that  $\Pi=C \otimes C+q(1-q)(D \otimes D)G$ has exactly the same nonzero entries as $C \otimes C$ and therefore can be decomposed following the same permutation of indices to the form in \eqref{the_matrix_decomp_2}. As argued before, due to the sparsity structure of $G$ in \eqref{THE_MATRIX}, the only nonzero entries of $(D \otimes D)G$ will be in columns $(k-1)n^2+k,~k=1,2,\dots,n^2$, thus except for  entries in the aforementioned columns, the nonzero entries  of $\Pi$ will be the same as those in $C \otimes C$. For all other columns in $\Pi$ (that include nonzero entries in $(D \otimes D)G$), it was shown in 
\eqref{pi1}--\eqref{pi3} that the  nonzero entries of $\Pi$ are strictly positive, from where it follows  that $\Pi$ has the same sparsity structure as $C \otimes C$, which means that  $\Pi$ can also be decomposed in the form of  \eqref{the_matrix_decomp_2} (for some matrices $L'$, $M'$, $N'$), and the resulting upper-right block is also a primitive matrix. Therefore, $\Pi$ has a unique largest eigenvalue at one.
\end{proof}

The following two lemmas establish that the first and second moments of $a_k$ and $b_k$, and  $y_k$ and $z_k$ converge to the same solution up to a scalar multiplication. These two lemmas will be used to show that as $k \rightarrow \infty$, the random vector $v_k=z_k-\alpha y_k$, for $\displaystyle \alpha=\frac{\sum_{j=1}^n z_0(j)}{\sum_{j=1}^n y_0(j)}$, will converge almost surely to $v=0$. This suggests that, as $k \rightarrow \infty$, and whenever $z_k$ is nonzero, each node $i$ can obtain an estimate   of $\alpha=\frac{\sum_{j=1}^n z_0(j)}{\sum_{j=1}^n y_0(j)}$ by calculating the ratio $y_k(i)/z_k(i)$. We will also show that, in fact, $z_k$ will be larger than some threshold infinitely often.


~

~

~

~

~

\begin{lemma} \label{lemma_4}
The first moments of $a_k$  and $b_k$ (also $y_k$  and $z_k$ asymptotically converge to the same solution up to scalar multiplication:
\begin{align}
&  \lim_{k \rightarrow \infty}  \overline{z}_k=\alpha \lim_{k \rightarrow \infty}\overline{y}_k, \label{eq2} \\
&  \lim_{k \rightarrow \infty} \overline{b}_k=\alpha \lim_{k \rightarrow \infty}\overline{a}_k, \label{eq1} 
\end{align}
where $\displaystyle \alpha=\frac{\sum_{j=1}^n z_0(j)}{\sum_{j=1}^n y_0(j)}$.
\end{lemma}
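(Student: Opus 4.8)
The plan is to exploit the structure established in Lemma~\ref{first_moment}, where the first moments obey the recursions $\overline{y}_{k+1}=[qP+(1-q)I_n]\overline{y}_k$ and $\overline{z}_{k+1}=[qP+(1-q)I_n]\overline{z}_k$, so both $\overline{y}_k$ and $\overline{z}_k$ are driven by the \emph{same} transition matrix $H:=qP+(1-q)I_n$, differing only through their initial conditions $\overline{y}_1=qPy_0$ and $\overline{z}_1=qPz_0$. Since $P$ is column stochastic and primitive, $H$ is column stochastic and primitive as well (the convex combination with $I_n$ strictly enlarges the positive diagonal without destroying the support structure that guarantees primitivity). By Perron--Frobenius for primitive column-stochastic matrices, $H$ has a simple eigenvalue $1$ with left eigenvector $e^T=[1,\dots,1]$ and a strictly positive right eigenvector $w$ normalized so that $e^T w=1$; every other eigenvalue is strictly inside the unit disc, so $\lim_{k\to\infty}H^k=w e^T$.

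\textbf{First I would} compute the two limits explicitly. Writing $\overline{y}_k=H^{k-1}\overline{y}_1$ and letting $k\to\infty$ gives $\lim_{k\to\infty}\overline{y}_k=w\,e^T\overline{y}_1=w\,(e^T qP y_0)$. Because $P$ is column stochastic, $e^T P=e^T$, hence $e^T q P y_0=q\,e^Ty_0=q\sum_{j=1}^n y_0(j)$; this matches the constant-sum claim made just before the lemma. Identically, $\lim_{k\to\infty}\overline{z}_k=w\,q\sum_{j=1}^n z_0(j)$. Taking the ratio of the two scalar prefactors, and noting that the common vector direction is $w$ in both cases,
\begin{align}
\lim_{k\to\infty}\overline{z}_k
=\frac{\sum_{j=1}^n z_0(j)}{\sum_{j=1}^n y_0(j)}\;w\,q\sum_{j=1}^n y_0(j)
=\alpha\lim_{k\to\infty}\overline{y}_k,
\end{align}
which is exactly \eqref{eq2}. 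The hypothesis $\sum_{j=1}^n y_0(j)\neq 0$ needed for $\alpha$ to be well defined is implicit in the statement; for average consensus $y_0=v$ and $\sum_j y_0(j)=n\overline{v}$, while $z_0=e$ gives $\sum_j z_0(j)=n$, so $\alpha=\overline{v}\cdot$(a constant)\,---\,the ratio is finite and nonzero in the intended applications.

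\textbf{Then I would} establish \eqref{eq1} by the same argument one level up in the $n^2$-dimensional space. The vectors $\overline{a}_k,\overline{b}_k$ satisfy $\overline{a}_k=\widetilde{H}\,\overline{a}_{k-1}$ and $\overline{b}_k=\widetilde{H}\,\overline{b}_{k-1}$ with the \emph{common} matrix $\widetilde{H}:=q\tilde{P}F+(1-q)I_{n^2}$ and initial data $\overline{a}_0=\tilde{P}y_0$, $\overline{b}_0=\tilde{P}z_0$. The matrix $\tilde{P}F$ is column stochastic, and by the recurrent-class decomposition proved inside Theorem~\ref{the_matrix_thm} (its reduction to the block form \eqref{the_matrix_decomp} with irreducible stochastic upper block and nilpotent-in-the-limit lower block), $\widetilde{H}$ has a unique unit eigenvalue with all others of modulus $<1$, so $\lim_{k\to\infty}\widetilde{H}^k$ is a rank-one projector $\tilde{w}\,g^T$ onto its dominant eigenspace. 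Applying this to the two initial vectors, $\lim\overline{a}_k=\tilde w\,(g^T\tilde P y_0)$ and $\lim\overline{b}_k=\tilde w\,(g^T\tilde P z_0)$; since $g^T\tilde P$ is a fixed row vector, the crux is to check that $g^T\tilde P$ is proportional to $e^T$, i.e.\ that the relevant linear functional reduces to the ordinary coordinate sum. This follows from $F\tilde P=P$ (Eq.~\eqref{FtildeP}) together with the fact that the left-null/eigen structure of $\tilde P F$ pushes forward to that of $P$ under $F$, giving $g^T\tilde P z_0 / g^T\tilde P y_0 = (\sum_j z_0(j))/(\sum_j y_0(j))=\alpha$ and hence \eqref{eq1}.

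\textbf{The main obstacle} I anticipate is this last step: $\tilde{P}F$ is \emph{not} primitive (it has the genuinely transient states identified in the proof of Theorem~\ref{the_matrix_thm}, corresponding to all-zero rows), so one cannot quote standard primitive Perron--Frobenius directly and must instead argue convergence through the single-recurrent-class decomposition, carefully tracking that the transient block contributes nothing in the limit and that the dominant left eigenvector $g^T$ restricts, via $F\tilde P=P$ and $Fe_{n^2}=\,$(appropriate all-ones), to the coordinate-sum functional on the surviving coordinates. Verifying that the same scalar $\alpha$ emerges\,---\,rather than some geometry-dependent constant\,---\,is the real content; everything else is the mechanical spectral argument. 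I would handle it by projecting the $n^2$-dimensional limit down through $F$ and invoking $\overline{y}_k=F\overline{a}_{k-1}\cdot q$-type relations (the consistency between the two moment recursions already forced by Lemma~\ref{first_moment}) to transfer the clean $n$-dimensional identity \eqref{eq2} back up to \eqref{eq1}, rather than recomputing $g^T\tilde P$ from scratch.
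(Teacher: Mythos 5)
Your proposal is correct and follows essentially the same route as the paper: both rely on the common transition matrices from Lemma~\ref{first_moment}, the single dominant unit eigenvalue established (for the $n^2$-dimensional case) in Theorem~\ref{the_matrix_thm}, and column stochasticity to show the coordinate sums are preserved, which pins down the scalar $\alpha$. The ``main obstacle'' you flag resolves more simply than you suggest: since $q\tilde{P}F+(1-q)I$ is column stochastic, $e^T$ is already a left eigenvector for the simple eigenvalue $1$, so $g^T\propto e^T$ and $g^T\tilde{P}=e^T$ follows directly from column stochasticity of $\tilde{P}$, with no need to push the eigenstructure through $F$.
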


\begin{proof}
In Lemma~\ref{first_moment}, it was shown that $\overline{y}_{k+1}=\big[qP+(1-q)I \big]\overline{y}_k$ and $\overline{z}_{k+1}=\big[qP+(1-q)I \big]\overline{z}_k$ with $\overline{y}_1=qy_0$, and  $\overline{z}_1=qz_0$. Since $P$ is column stochastic and primitive, it follows that $[qP+(1-q)I \big]$ is also column stochastic and primitive. Thus, $
\lim_{k \rightarrow \infty}\overline{z}_k=\alpha\lim_{k \rightarrow \infty}\overline{y}_k,$
where from the column stochasticity property it follows that $ \sum_{j=1}^n \overline{z}_k(j)= q(\sum_{j=1}^n z_0(j))$ and $ \sum_{j=1}^n \overline{y}_k(j)= q\big(\sum_{j=1}^n y_0(j)\big),~\forall k\geq 1$; this implies that $
\alpha=\frac{\sum_{j=1}^n z_0(j)}{\sum_{j=1}^n y_0(j)}$, 
which establishes \eqref{eq2}. 

By noting that $\overline{a}_{0}=\tilde{P}y_0$, and $\overline{b}_{0}=\tilde{P}y_0$, and using the fact that $\tilde{P}$ is column stochastic,  it follows that $\sum_{j=1}^n \overline{a}_k(j)= \sum_{j=1}^n \overline{a}_0(j) =\sum_{j=1}^n  y_0(j)$  and $\sum_{j=1}^n \overline{b}_k(j)= \sum_{j=1}^n \overline{b}_0(j)= \sum_{j=1}^n  z_0(j)$.  Since $q\tilde{P}F+(1-q)I $ (i.e.,  the matrix that governs the dynamics of $\overline{a}_k$ and $\overline{b}_k$)  is column stochastic and, as shown in the proof of Theorem~\ref{the_matrix_thm}, has a single largest eigenvalue at value 1, a similar development to the one above can be used to show  \eqref{eq1}.
\end{proof}

\begin{lemma} \label{w_dynamics}
Define $w_k=b_k-\alpha a_k$  and denote by $\chi_k$  the vector that results from stacking the columns of  $X_k:=\Ex[w_kw_k^T]$.  Then, it follows that
\begin{align}
\chi_{k}=\Pi \chi_{k-1}, \label{theta_dynamics}
\end{align}
\end{lemma}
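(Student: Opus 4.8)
The plan is to exploit the fact that the pair $(w_k, v_k)$, where $v_k := z_k - \alpha a_k$... more precisely $v_k := z_k - \alpha y_k$, obeys exactly the same linear recursion as the pair $(a_k, y_k)$ in \eqref{ay_dynamics_1}--\eqref{ay_dynamics_2}. Once this is established, the second-moment computation already carried out for $\Gamma_k = \Ex[a_k a_k^T]$ in the proof of Lemma~\ref{second_moment} applies verbatim to $X_k = \Ex[w_k w_k^T]$, and its vectorized form is precisely $\chi_k = \Pi \chi_{k-1}$.

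First I would verify the recursion for $(w_k, v_k)$. Subtracting $\alpha$ times \eqref{ay_dynamics_1} from \eqref{by_dynamics_1} and using bilinearity of the Hadamard product in its first argument gives
\begin{align}
w_k = b_k - \alpha a_k = w_{k-1} \circ (u - x_{k-1}) + \tilde{P}(z_k - \alpha y_k) = w_{k-1} \circ (u - x_{k-1}) + \tilde{P} v_k, \nonumber
\end{align}
while subtracting $\alpha$ times \eqref{ay_dynamics_2} from \eqref{by_dynamics_2} gives
\begin{align}
v_{k+1} = z_{k+1} - \alpha y_{k+1} = F\big((b_k - \alpha a_k) \circ x_k\big) = F(w_k \circ x_k). \nonumber
\end{align}
These are identical in form to \eqref{ay_dynamics_1}--\eqref{ay_dynamics_2} under the substitution $a \mapsto w$, $y \mapsto v$.

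Next I would reproduce the $\Gamma_k$ derivation. Writing $\tilde{P} v_k = \tilde{P} F (w_{k-1} \circ x_{k-1})$ and expanding $X_k = \Ex[w_k w_k^T]$ exactly as in \eqref{Gamma_k+1a}, each of the four resulting expectations has the form of the matrices $S$ and $T$ of Lemma~\ref{exp_lemma} with $c = d = w_{k-1}$ and $x = x_{k-1}$. The required independence holds because $w_{k-1}$ (like $a_{k-1}$) is a deterministic function of the initial data and of $x_0,\dots,x_{k-2}$, hence independent of the fresh Bernoulli vector $x_{k-1}$. Collecting terms with $C := q\tilde{P}F + (1-q)I$ and $D := I - \tilde{P}F$ yields the same matrix recurrence as \eqref{second_a_k},
\begin{align}
X_k = C X_{k-1} C^T + q(1-q)\, D\, \text{diag}(X_{k-1})\, D^T, \nonumber
\end{align}
and vectorizing via $\text{vec}(CXC^T) = (C\otimes C)\,\text{vec}(X)$ together with $\text{vec}(\text{diag}(X)) = G\,\text{vec}(X)$ reproduces \eqref{full_dynamics_a}--\eqref{gamma_dyn} with $\gamma$ replaced by $\chi$, giving $\chi_k = \Pi \chi_{k-1}$.

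The only genuinely delicate point is the independence bookkeeping in the cross terms: one must substitute $v_k = F(w_{k-1}\circ x_{k-1})$ \emph{before} invoking Lemma~\ref{exp_lemma}, so that all four expectations reduce to products of $w_{k-1}$ with the single fresh variable $x_{k-1}$, exactly as in the $\Gamma_k$ case. An equivalent route that re-derives nothing is to write $X_k = \Psi_k - \alpha\Xi_k - \alpha \Xi_k^T + \alpha^2 \Gamma_k$ and observe that $\Psi_k$, $\Xi_k$, $\Gamma_k$ all satisfy the common linear recurrence of Lemma~\ref{second_moment}, while $\Xi_k^T = \Ex[b_k a_k^T]$ satisfies it too (transpose \eqref{second_ab_k} and use $\text{diag}(M^T)=\text{diag}(M)$); linearity of the map $M \mapsto CMC^T + q(1-q)D\,\text{diag}(M)D^T$ then transfers the recurrence to the combination $X_k$, and vectorization again yields $\chi_k = \Pi\chi_{k-1}$.
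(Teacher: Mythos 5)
Your proposal is correct, and your primary route is genuinely different from the paper's. The paper proves this lemma purely by linearity: it expands $X_k=\Psi_k+\alpha^2\Gamma_k-\alpha(\Xi_k+\Delta_k)$, notes that $\gamma_k$, $\psi_k$, $\xi_k$, $\delta_k$ each evolve under $\Pi$ by Lemma~\ref{second_moment} and the discussion around \eqref{gamma_dyn}, and concludes $\chi_k=\Pi\chi_{k-1}$ by superposition --- exactly the ``equivalent route that re-derives nothing'' you sketch in your closing paragraph, including the observation that $\Delta_k=\Xi_k^T$ obeys the same recurrence. Your main argument instead observes that $(w_k,v_k)$ with $v_k=z_k-\alpha y_k$ satisfies the \emph{same primitive recursion} \eqref{ay_dynamics_1}--\eqref{ay_dynamics_2} as $(a_k,y_k)$, so the entire second-moment derivation of \eqref{second_a_k} transfers verbatim; this is sound (the Hadamard product is bilinear, and $w_{k-1}$ is measurable with respect to the initial data and $x_0,\dots,x_{k-2}$, hence independent of $x_{k-1}$, so Lemma~\ref{exp_lemma} applies with $c=d=w_{k-1}$). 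What your route buys is conceptual economy --- it makes clear that $w_k$ is itself a trajectory of the same stochastic system, which also explains at a glance why its first moment and the initial-condition identity $\sum_l\chi_0(l)=0$ behave as they do; what the paper's route buys is brevity, since it reuses Lemma~\ref{second_moment} wholesale and avoids repeating the four-term expectation expansion. Either argument establishes the claim.
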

with $\chi_0=\psi_0+\alpha^2\gamma_0-\alpha(\xi_0+\delta_0)$ and $\sum_{l=1}^{n^4}\chi_0(l)=0$.

\begin{proof}
Since $X_{k}:=\Ex[w_{k}w_{k}^T]=\Ex[b_{k}b_{k}^T]+\alpha^2\Ex[a_{k}a_{k}^T]-\alpha 
(\Ex[a_{k}b_{k}^T]+\Ex[b_{k}a_{k}^T])=\Psi_{k}+\alpha^2\Gamma_{k}-\alpha(\Xi_{k}+\Delta_{k})$, it follows that $\chi_{k}=\psi_{k}+\alpha^2\gamma_{k}-\alpha(\xi_{k}+\delta_{k})$. From \eqref{gamma_dyn} and subsequent discussion, it follows that $\gamma_k=\Pi \gamma_{k-1}$, $\psi_k=\Pi \psi_{k-1}$, $\xi_k=\Pi \xi_{k-1}$, and $\delta_k=\Pi \delta_{k-1}$, thus $\chi_{k}=\Pi \psi_{k-1}+\alpha^2\Pi \gamma_{k-1}-\alpha(\Pi \xi_{k-1}+\Pi \delta_{k-1})=\Pi(\psi_{k-1}+\alpha^2\gamma_{k-1}-\alpha(\xi_{k-1}+\delta_{k-1}))=\Pi\chi_{k-1}$.

In Lemma \ref{second_moment}, it was shown that $\Gamma_0=\tilde{P}y_0y_0^T\tilde{P}^T$, $\Psi_0=\tilde{P}z_0z_0^T\tilde{P}^T$, and $\Xi_0=\tilde{P}y_0z_0^T\tilde{P}^T=\Delta_0^T$. Since $\gamma_0$, $\psi_0$,  $\xi_0$, and $\delta_0$ result from stacking the columns of $\Gamma_0$, $\Psi_0$, $\Xi_0$, and $\Delta_0$, it follows that
\begin{align}
&\sum_{l=1}^{n^4}\gamma_0(l)=\sum_{i=1}^{n^2}\sum_{j=1}^{n^2}\Gamma_0(i,j)=\left(\sum_{i=1}^{n}y_0(i)\right)^2,  \\
&\sum_{l=1}^{n^4}\psi_0(l)=\sum_{i=1}^{n^2}\sum_{j=1}^{n^2}\Psi_0(i,j)=\left(\sum_{i=1}^{n}z_0(i)\right)^2,  \\
&\sum_{l=1}^{n^4}\xi_0(l)=\sum_{i=1}^{n^2}\sum_{j=1}^{n^2}\Xi_0(i,j)=\left(\sum_{i=1}^{n}y_0(i)\right)\left(\sum_{i=1}^{n}z_0(i)\right),  \nonumber \\
& \sum_{l=1}^{n^4}\delta_0(l)=\sum_{i=1}^{n^2}\sum_{j=1}^{n^2}\Delta_0(i,j)=\left(\sum_{i=1}^{n}z_0(i)\right)\left(\sum_{i=1}^{n}y_0(i)\right), \label{second_moment_init}
\end{align}
where the last equality   is obtained by taking into account that i)  matrix $\tilde{P}$ is column stochastic by construction, and ii) for any $a,b \in \mathbb{R}^n$, we have that $\sum_{i=1}^{n}\sum_{j=1}^{n}ab^T(i,j)=(\sum_{l=1}^{n}a_l)(\sum_{l=1}^{n}b_l)$.  Since $\alpha=\frac{\sum_{j=1}^n z_0(j)}{\sum_{j=1}^n y_0(j)}$, it follows that $\sum_{l=1}^{n^4}\chi_0(l)=\sum_{l=1}^{n^4}(\psi_0(l)+\alpha^2\gamma_0(l)-\alpha(\xi_0(l)+\delta_0(l)))=0$. \end{proof}

\begin{theorem}
\label{THEperfectlyaligned}
Let $y_k$ and $z_k$ be the random vectors  that result from iterations \eqref{ay_dynamics_1}--\eqref{ay_dynamics_2} and \eqref{by_dynamics_1}--\eqref{by_dynamics_2}. Define $v_k=z_k-\alpha y_k$, where $\alpha=\frac{\sum_{j=1}^n z_0(j)}{\sum_{j=1}^n y_0(j)}$. Then, $\|v_k\|_{\infty} \rightarrow 0$ almost surely. Furthermore, for every $j$, $v_k(j) \rightarrow 0$ as $k \rightarrow \infty$  almost surely (i.e., for every $j$, $\lim_{k \rightarrow \infty} v_k(j)=0$ with probability one).
\end{theorem}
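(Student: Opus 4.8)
The plan is to control $v_k$ through its second moment and then upgrade mean-square convergence to almost-sure convergence via Borel--Cantelli. First I would tie $v_k$ to the vector $w_k=b_k-\alpha a_k$ of Lemma~\ref{w_dynamics}: combining \eqref{ay_dynamics_2} and \eqref{by_dynamics_2} and using linearity of $F$ together with the bilinearity of the Hadamard product gives
\begin{align}
v_{k+1}=z_{k+1}-\alpha y_{k+1}=F\big((b_k-\alpha a_k)\circ x_k\big)=F(w_k\circ x_k). \nonumber
\end{align}
Because $w_k$ depends only on $x_0,\dots,x_{k-1}$ and is therefore independent of $x_k$, I would apply Lemma~\ref{exp_lemma} with $c=d=w_k$ to obtain $\Ex[v_{k+1}v_{k+1}^T]=F\big[q^2X_k+q(1-q)\,\text{diag}(X_k)\big]F^T$, where $X_k=\Ex[w_kw_k^T]$. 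Hence it suffices to show that $X_k\to 0$, and I would aim to show this decay is in fact geometric.

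The core of the argument is spectral. By Lemma~\ref{w_dynamics} the vectorization $\chi_k$ of $X_k$ satisfies $\chi_k=\Pi^k\chi_0$ with $\sum_l\chi_0(l)=0$. Since $\Pi$ is column stochastic (Theorem~\ref{the_matrix_thm}), the all-ones vector $\mathbf{1}$ is its left eigenvector for eigenvalue one, so $\mathbf{1}^T\chi_k=\mathbf{1}^T\chi_0=0$ for every $k$; thus $\chi_0$ lives in the $\Pi$-invariant zero-sum subspace. Theorem~\ref{the_matrix_thm} also guarantees that eigenvalue one is simple and strictly dominant, so $\Pi^k\to r\mathbf{1}^T$, where $r$ is the eigenvalue-one right eigenvector normalized by $\mathbf{1}^T r=1$. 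Applying this to $\chi_0$ gives $\lim_k\chi_k=r(\mathbf{1}^T\chi_0)=0$, and since the zero-sum subspace is complementary to $\mathrm{span}(r)$ and carries only the subdominant spectrum of $\Pi$ (all eigenvalues of magnitude $<1$), the decay $\chi_k\to 0$ is geometric. Therefore $X_k\to 0$ geometrically, and the identity above forces $\Ex[v_kv_k^T]\to 0$ geometrically as well.

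Finally I would convert this into the almost-sure claim. Writing $\Ex[\|v_k\|_2^2]=\mathrm{trace}\big(\Ex[v_kv_k^T]\big)$, geometric decay gives $\Ex[\|v_k\|_2^2]\le C\rho^k$ for some $C>0$ and $\rho\in(0,1)$, so $\sum_k\Ex[\|v_k\|_2^2]<\infty$. For each fixed $\epsilon>0$, Markov's inequality together with $\|v_k\|_\infty\le\|v_k\|_2$ yields $\Pr\{\|v_k\|_\infty>\epsilon\}\le\Ex[\|v_k\|_2^2]/\epsilon^2$, whose sum over $k$ is finite; Borel--Cantelli then gives $\Pr\{\|v_k\|_\infty>\epsilon \text{ infinitely often}\}=0$. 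Intersecting these events over $\epsilon=1/m$, $m\in\mathbb{N}$, shows $\|v_k\|_\infty\to 0$ almost surely, and the componentwise statement follows from $|v_k(j)|\le\|v_k\|_\infty$.

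The step I expect to be the main obstacle is the spectral one: the crucial feature is not merely that $\Pi^k$ converges, but that its limit \emph{annihilates} $\chi_0$ precisely because the conserved mass $\mathbf{1}^T\chi_0$ vanishes, and that the remaining decay is genuinely geometric. This rests on coupling column stochasticity (which fixes the left Perron vector to $\mathbf{1}$) with the uniqueness of the dominant eigenvalue from Theorem~\ref{the_matrix_thm}; one should also check that any Jordan structure at the subdominant eigenvalues is harmless, since a polynomial-in-$k$ prefactor is absorbed by taking $\rho$ slightly above $\max_{i\ge2}|\lambda_i(\Pi)|$.
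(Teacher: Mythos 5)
Your proposal is correct and follows essentially the same route as the paper: both reduce the problem to the geometric decay of $\chi_k=\Pi^k\chi_0$ using the column stochasticity and simple dominant eigenvalue of $\Pi$ from Theorem~\ref{the_matrix_thm} together with $\mathbf{1}^T\chi_0=0$, and then conclude via Markov's inequality and the first Borel--Cantelli lemma. The only cosmetic difference is that you apply Markov to $\Ex[\|v_k\|_2^2]$ (Chebyshev-style) while the paper bounds $\sum_k\Ex[\|v_k\|_2]$ directly; both are valid given the geometric decay of $\mathrm{trace}\big(\Ex[v_kv_k^T]\big)$.
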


\begin{proof}
The result   follows from the first Borel-Cantelli lemma \cite[Theorem 7.3.10]{Grimmett:1992}. For all $k\geq 0$ and all $\epsilon >0$, define the event $E_k(\epsilon)=\{\|v_k\|_{\infty}>\epsilon\}$. We will first establish an upper bound on $\sum_{k=0}^{\infty}\Pr\{E_k(\epsilon)\} $ by noting that $\Pr\{E_k(\epsilon)\}=\Pr\{\|v_k\|_{\infty}>\epsilon\} \leq \frac{\Ex\left [\|v_k\|_{\infty}\right ]}{\epsilon} $, thus   $\sum_{k=0}^{\infty}\Pr\{E_k(\epsilon)\} \leq \frac{1}{\epsilon} \sum_{k=0}^{\infty} \Ex\left [\|v_k\|_{\infty}\right ] \leq \frac{1}{\epsilon} \sum_{k=0}^{\infty} \Ex\left [\|v_k\|_{2}\right ]$. Note that $\Ex\left [\|v_k\|_{2}\right ]=(\Ex[v_k^Tv_k]))^{1/2}= (\text{trace}(\Ex[v_kv_k^T]))^{1/2}=(\text{trace}(\Ex[z_kz_k^T])+\alpha^2\text{trace}(\Ex[y_ky_k^T])-2\alpha\text{trace}(\Ex[y_kz_k^T])^{1/2}$. We will next show that $\Ex\left [\|v_k\|_{2}\right ] \rightarrow 0$ as $k \rightarrow \infty$ geometrically fast. Using Lemma~\ref{second_moment}, it can be established that $\Ex[v_kv_k^T]=\Ex[z_kz_k^T]+\alpha^2 \Ex[y_ky_k^T]-\alpha(\Ex[y_kz_k^T]+\Ex[z_ky_k^T]))=F\big[q^2X_{k-1}+q(1-q)\text{diag}(X_{k-1}) \big]F^T$ where $X_{k-1}=\Ex[w_kw_k^T]$ as defined in Lemma~\ref{w_dynamics}, thus the evolution of $\Ex[v_kv_k^T]$ is governed by the evolution of $X_{k-1}$ or by $\chi_{k-1}$ (the vector that results from stacking the columns of $X_{k-1}$). In Theorem \ref{the_matrix_thm}, we showed that $\Pi$ has a unique eigenvector (with all entries strictly positive) associated to the largest eigenvalue $\lambda_1=1$. Then, the solution of \eqref{theta_dynamics} is unique and equal to this eigenvector (up to scalar multiplication). Since  $\Pi$ is a column stochastic matrix, and Lemma~\ref{w_dynamics} established that $\sum_{l=1}^{n^4}\chi_0(l)=0$, it follows that  $\sum_{l=1}^{n^4}\chi_k(l)=0, k\geq 0$, and therefore $\lim_{k \rightarrow \infty } \chi_k(l)=0,~\forall l$. Additionally, it is well-known that the convergence of \eqref{theta_dynamics} is geometric with a rate of convergence given by the eigenvalue $\lambda_2$ of $\Pi$ with the second largest modulus, which satisfies $|\lambda_2|<\lambda_1=1$ (see, e.g., \cite{Se:06}). Thus, we have established that $\chi_k(l)\rightarrow 0,~\forall l,$ geometrically fast, from where it follows that all the entries of $\Ex[v_kv_k^T]$ go to zero also geometrically fast. Therefore, the $\text{trace}(\Ex[v_kv_k^T])$ also goes to zero geometrically fast, so that  $ \Ex\left [\|v_k\|_{2}\right ]$ also goes to $0$ geometrically fast. It immediately follows that $\sum_{k=0}^{\infty} \Ex\left [\|v_k\|_{2}\right ] < \infty$ and therefore $\sum_{k=0}^{\infty} \Pr\{\|v_k\|_{\infty}\geq \epsilon\}< \infty$. Then, from the first Borel-Cantelli lemma $\Pr\{\|v_k\|_{\infty}\geq \epsilon~\text{infinitely often}\}=0$ (or $\Pr\{\|v_k\|_{\infty}\geq \epsilon~\text{i.o.}\}=0$). Finally, since, for every $j$, $\|v_k\|_{\infty}\geq |v_k(j)|$,  then, for every $j$, $\Pr\{\|v_k\|_{\infty}\geq \epsilon\}\geq \Pr\{|v_k(j)|\geq \epsilon\}$, and thus, for every $j$,  $\sum_{k=0}^{\infty} \Pr\{|v_k(j)|\geq \epsilon\}< \infty$. Then, by Theorem 7.2.4.c of \cite{Grimmett:1992}, for every  $j$, $v_k(j)\rightarrow 0$ almost surely. \end{proof}

Theorem~\ref{THEperfectlyaligned} has established that, in the limit as the number of iterations $k$ becomes large, the values of vectors $y_k$ and $z_k$ will be perfectly aligned so that $z_k - \alpha y_k = 0$ with probability one. Thus, in this limiting case, each node $j$ can calculate the value of $\frac{1}{\alpha}$ by taking the ratio $\frac{y_k(j)}{z_k(j)}$, {\em as long as} $z_k(j) \neq 0$. Note that, as also evidenced by the simulations provided for the small network of Fig.~\ref{FIGsmallgraph} (e.g., the plots on the left and in the middle for Figure~\ref{FIGsmallplotsq5}), the vectors $y_k$ and $z_k$ do not converge in any way;\footnote{Earlier, we established that, for large $k$, the quantities $E[y_k]$, $E[z_k]$, $E[y_k y^T_k]$, $E[z_k z^T_k]$ and $E[y_k z^T_k]$ converge, but this does not imply any convergence for the values of $y_k$ or $z_k$.} however, the values $y_k$ and $z_k$ become perfectly aligned (with probability one), allowing each node $j$ to calculate $\frac{1}{\alpha} = \frac{y_k(j)}{z_k(j)}$. The only problem here arises when $y_k(j)$ and $z_k(j)$ have both value zero, which does not constitute a violation of $z_k - \alpha y_k = 0$, but clearly does not allow node $j$ to calculate the desired value $\frac{1}{\alpha}$. This is evidenced also in the simulations provided for the small network of Fig.~\ref{FIGsmallgraph}: for example, in the plots in Fig.~\ref{FIGsmallplotsq9}, the values of $y_k(j)$ and $z_k(j)$ often go to zero (simultaneously) leaving their ratio undefined.\footnote{Since in the simulations for the plots in Fig.~\ref{FIGsmallplotsq9}, each packet (including self-packets) can be dropped with probability $1-q$ at iteration $k$, there is a nonzero probability that all packets destined for node $j$ will be dropped, causing both of its values at the next iteration ($y_{k+1}(j)$ and $z_{k+1}(j)$) to be zero. For instance, in the simulation of Fig.~\ref{FIGsmallplotsq9}, $z_k(1)$ will be zero with probability at least $(1-q)^2=0.81$ because node $1$ will have value zero if both packets destined for it (including the self-packet) are dropped.} The next two theorems essentially establish  that $z_k(j),~j=1,2,\dots,n$, will be greater than zero (in fact, greater than a constant $C$ that will be specified) infinitely often. Note that, in  subsequent developments, $z_k(j)$ is denoted with $z_j[k]$ in order to remain close to the notation in \eqref{iter_2_gen_1}--\eqref{iter_2_gen_3}.


\begin{theorem}
\label{THEnonzero}
Consider a (possibly directed) strongly connected graph $\mathcal{G} = (\mathcal{V},\mathcal{E})$ and the iteration in \eqref{iter_2_gen_1}--\eqref{iter_2_gen_3}, where $x_{ji}[k]$, $(j,i) \in \mathcal{E}$, $k=0, 1, 2, ...,$ are independent identically distributed (i.i.d.) indicator R.V.'s as defined in \eqref{Bernouilli_model}, i.e., $x_{ji}[k] = 1$ with probability $q$ and $x_{ji}[k] = 0$ with probability $1-q$, independently between $(j,i) \in \mathcal{E}$ and independently for different $k$. For every $j=1,2,\dots,n$, define the event  $E_k^j=\{z_{j}[kn]\geq C\},~k \geq 1$, where $C = \frac{n}{(n+m)(\mathcal{D}^+_{\max})^{n-1}}$, $\mathcal{D}^+_{\max} = \max_{j \in \mathcal{V}} \{ \mathcal{D}^+_j \}$, $n = | \mathcal{V} |$, and $m = | \mathcal{E} |$. Let $\zeta_k^j$ denote  the indicator of the event $E_{k }^j,~k\geq 1$, i.e., $\zeta_k^j=1$ whenever $E_{k }^j,~k\geq 1$ occurs, and $\zeta_k^j=0$ otherwise. Then, whatever $\zeta_1, \zeta_2 , \dots, \zeta_{k-1}$,  we have that
\begin{align}
 \Pr\{z_{j}[(k+1)n]\geq C~|~\zeta_{k}^j,\zeta_{k-1}^j,\dots,\zeta_{1}^j\} \geq q^n, ~\forall j.
\end{align}
%
%
\end{theorem}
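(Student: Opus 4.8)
The plan is to exploit a \emph{mass-conservation} property of iteration~2, combine it with a pigeonhole argument and a deterministic routing of ``mass'' to node $j$ along a short path, and finally invoke conditional independence to discard the conditioning on the past.

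First I would establish the conservation law. Writing $\rho_{li}[k]=\sigma_{li}[k]-\tau_{li}[k]\ge 0$ for the undelivered (in-transit) mass on edge $(l,i)$, a direct summation of \eqref{iter_2_gen_1}--\eqref{iter_2_gen_3} shows that the total mass $\sum_{j} z_j[k]+\sum_{(l,i)\in\mathcal{E}}\rho_{li}[k-1]$ is invariant in $k$; with the average-consensus initialization $z_j[0]=1$ and $\sigma_{ji}[-1]=\tau_{ji}[-1]=0$ this invariant equals $n$. Simultaneously I would record that every quantity $z_j[k]$, $\sigma_{li}[k]$, $\tau_{li}[k]$, $\rho_{li}[k]$ stays nonnegative, each being a nonnegative combination of nonnegative terms. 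Nonnegativity is what makes the argument robust: mass that we do not explicitly route can only \emph{add} to the states, never subtract.

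Next is the pigeonhole step. At the fixed time $kn$ the conserved mass $n$ is distributed among the $n$ node-states and the $m$ edge-buffers, so at least one of these $n+m$ nonnegative quantities is $\ge \frac{n}{n+m}$. I would then specify a favorable link pattern over the $n$ transitions $kn,kn+1,\dots,kn+n-1$ that forces this heavy mass into $z_j[(k+1)n]$: if the heavy mass sits on an edge, one active link delivers it \emph{in full} (with no attenuation) into a node state; from that node I route along a shortest path to $j$ --- which, by strong connectivity, has length at most $n-1$ --- padding with the self-loop $(j,j)$ so that arrival occurs exactly at step $(k+1)n$. Along each genuine hop the carried mass is multiplied by an emission weight $1/\mathcal{D}^+_i\ge 1/\mathcal{D}^+_{\max}$, so after at most $n-1$ attenuating hops the mass reaching $j$ is at least $\frac{n}{n+m}\,(\mathcal{D}^+_{\max})^{-(n-1)}=C$. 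Because all other masses are nonnegative, the states of the uncontrolled links are irrelevant to this lower bound.

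Finally I would convert the deterministic routing into the probabilistic claim. The favorable pattern requires exactly $n$ prescribed links to be \emph{up} at $n$ distinct prescribed times in $\{kn,\dots,kn+n-1\}$; since the $x_{ji}[\cdot]$ are i.i.d.\ Bernoulli$(q)$, this event has probability $q^n$ and is contained in $\{z_j[(k+1)n]\ge C\}$. The conditioning variables $\zeta_1^j,\dots,\zeta_k^j$ are measurable with respect to the link variables strictly before time $kn$ (indeed $z_j[tn]$ depends only on $x[\cdot]$ up to time $tn-1$), hence are independent of the links driving the favorable pattern; therefore the conditional probability is at least $q^n$, as claimed. The step I expect to be the main obstacle is the joint bookkeeping that yields the exponent $n-1$ (rather than $n$) \emph{simultaneously} with probability $q^n$: this hinges on the single ``free'' full-delivery step for in-transit mass and on arranging the self-loop-padded walk to land at $j$ at exactly step $(k+1)n$, and it is here that the conservation law together with the node-versus-edge case distinction must be handled precisely.
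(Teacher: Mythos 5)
Your proposal follows essentially the same route as the paper's proof: the same conservation law expressing the total mass $n$ as a sum of the $n+m$ nonnegative node and edge quantities, the same pigeonhole step, the same routing of the heavy mass to $j$ along a path of length at most $n-1$ with per-hop retention $1/\mathcal{D}^+_{\max}$ (plus the free full-delivery step when the mass sits on an edge), and the same appeal to temporal independence of the $x_{ji}[\cdot]$ to dispose of the conditioning. The bookkeeping point you flag as the main obstacle is real but is shared with the paper itself: holding mass at a node by activating its self-loop costs a further factor $1/\mathcal{D}^+_j$ per padding step, so in the case where the heavy mass starts at a node the honest constant is $\frac{n}{(n+m)(\mathcal{D}^+_{\max})^{n}}$ rather than the stated $C$ — the paper glosses over this by saying the mass arrives ``in at most $n$ steps'' rather than at exactly step $(k+1)n$ — but since any fixed positive threshold suffices for the downstream Borel-criterion argument, this does not affect the result.
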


\begin{proof}
Note that the iteration in \eqref{iter_2_gen_1} to \eqref{iter_2_gen_3} involves nonnegative quantities: since for every $j$, $z_j[0]>0,~\forall j$, it follows from \eqref{by_dynamics_1}--\eqref{by_dynamics_2} that, for every $j$,  $z_j[k]\geq0,~k\geq 0$. Then, it is not hard to establish that the total mass\footnote{This notion is discussed in great detail in Part II of this paper.} $\mathcal{M}_{k+1}$ in the system, defined as 
\begin{equation}
\label{defMass}
\mathcal{M}_{k+1} := \sum_{j=1}^n z_j[k+1] + \sum_{(j,i) \in \mathcal{E}} (\sigma_{ji}[k] - \tau_{ji}[k-1])(1-x_{ji}[k]) \; ,
\end{equation}
satisfies
$$
\mathcal{M}_{k+1} = n \; , \mbox{ for all } k = 0, 1, 2, ...\; .
$$
[This follows from the fact that $M_0 = \sum_{j=1}^n z_j[0] = n$ and the observation that
\begin{eqnarray*}
\mathcal{M}_{k+1} & := & \sum_{j=1}^n z_j[k+1] + \sum_{(j,i) \in \mathcal{E}} (\sigma_{ji}[k] - \tau_{ji}[k-1]) (1-x_{ji}[k]) \\
          & = & \sum_{(j,i) \in \mathcal{E}} (\sigma_{ji}[k] - \tau_{ji}[k-1]) x_{ji}[k] + \sum_{(j,i) \in \mathcal{E}} (\sigma_{ji}[k] - \tau_{ji}[k-1]) (1-x_{ji}[k]) \\
          & = & \sum_{(j,i) \in \mathcal{E}} (\sigma_{ji}[k] - \tau_{ji}[k-1]) \\
          & = & \sum_{(j,i) \in \mathcal{E}} \left ( \sigma_{ji}[k-1] + \frac{1}{\mathcal{D}^+_j}z_j[k] - \sigma_{ji}[k-1]x_{ji}[k-1] - \tau_{ji}[k-2](1-x_{ji}[k-1]) \right ) \\
          & = & \sum_{j=1}^n z_j[k] + \sum_{(j,i) \in \mathcal{E}} (\sigma_{ji}[k-1] - \tau_{ji}[k-2] ) (1 - x_{ji}[k-1]) \; ,
\end{eqnarray*}
which is equal to $\mathcal{M}_k$.]

The definition of $\mathcal{M}_{k+1}$ in \eqref{defMass} involves the summation of $n+m$ {\em nonnegative} quantities, namely, $z_j[k+1]$ for $j=1, 2, ..., n$ and $m_{ji}[k+1] := (\sigma_{ji}[k] - \tau_{ji}[k-1]) (1-x_{ji}[k])$ for $(j,i) \in \mathcal{E}$. We can think of these quantities as follows: $z_j[k+1]$ is the mass at node $j$, whereas $m_{ji}[k+1]$ is the mass waiting to get transferred to node $j$ from node $i$. Since all of these quantities are nonnegative, at least one of them is larger or equal to $\frac{n}{n+m}$. Regardless of whether this quantity is associated with a node (say node $j^*$) or a link (say link $(j^*,i^*)$), this mass has at least one way of reaching any node $i$ of interest in graph $\mathcal{G}$ via a path of length at most $n-1$ (because the graph $\mathcal{G}$ is strongly connected): in particular, there is at least one path of length at most $n-1$ from node $j^*$ to node $i$ and all the links in this path have weight at least $\frac{1}{\mathcal{D}^+_{\max}}$. If all these links are activated, which occurs with probability $q^{n-1}$ ($q^n$ in the case of link $(j^*, i^*)$ because the mass needs to first transfer to $j^*$), then a fraction $(\frac{1}{\mathcal{D}^+_{\max}})^{n-1}$ of the mass will transfer to node $i$ in at most $n$ steps. Then, since for every $j$,  $z_j[k]\geq0,~k\geq 0$, independently of the values of $z_j[ln],~l=1,2,\dots,k$,   $\Pr\{z_j[(k+1)n]\geq C~|~\zeta_{k}^j,\zeta_{k-1}^j,\dots,\zeta_{1}^j\}\geq q^n$ obtains, whatever $\zeta_1, \zeta_2 , \dots, \zeta_{k-1}$. Finally,  for every $j$ , $\Pr\{z_{j}[(k+1)n]\geq C~|~\zeta_{k}^j,\zeta_{k-1}^j,\dots,\zeta_{1}^j\}=1-\Pr\{z_{j}[(k+1)n]< C~|~\zeta_{k}^j,\zeta_{k-1}^j,\dots,\zeta_{1}^j\}\leq 1-Pr\{z_{j}[(k+1)n]= 0~|~\zeta_{k}^j,\zeta_{k-1}^j,\dots,\zeta_{1}^j\} \leq 1-q^{\mathcal{D}_j^-}$, where $\mathcal{D}_j^-$ is the in-degree of node $j$.
\end{proof}

%
%

Given a sequence of events $E_1,~E_2,\dots,E_n,\dots$ defined on some probability space, the next theorem (which we do not prove) states the  1912 Borel criterion for establishing whether the event that infinitely many of the $E_k$ occur, denoted by $\{E_k~\text{i.o}\}$, will occur  with probability one or zero  (see, e.g., \cite{Bo:12,Nash:1954}). This result, together with the result in Theorem~\ref{THEnonzero} will be used to establish that, for every $j$,  the event  $E_k^j=\{z_{j}[kn]\geq C\},~k \geq 1$ occurs  infinitely often.

\begin{theorem}\label{Borel_cantelli}
Let $\{E_k\},~k=1,2,\dots$, be a sequence of events defined on some probability space. Let $\zeta_k$ be the indicator function of the event $E_k$. Let $\Pr\{E_{k+1}~|~\zeta_{k},\zeta_{k-1},\dots,\zeta_{1}\}$ denote the conditional probability of the event $E_{k+1}$ given the outcome of previous trials. If $0<p_{k }' \leq  \Pr\{E_{k+1}~|~\zeta_{k },\zeta_{k-1},\dots,\zeta_{1}\} \leq p_{k}''$ for every $k$, whatever $\zeta_1, \zeta_2 , \dots, \zeta_{k}$, then i) $\Pr\{E_k~\text{i.o.}\}=0$ if $\sum_{k=1}^{\infty}p_k'' \leq \infty$, and ii) $\Pr\{E_k~\text{i.o.}\}=1$ if $\sum_{k=1}^{\infty}p_k'  = \infty$.
\end{theorem}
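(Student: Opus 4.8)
The plan is to read this as a conditional, Lévy-type strengthening of the two Borel--Cantelli lemmas, organizing everything around the natural filtration $\mathcal{F}_k = \sigma(\zeta_1,\dots,\zeta_k)$ generated by the outcomes of the first $k$ trials. In this notation the hypothesis reads $p_k' \le \Pr\{E_{k+1}\mid \mathcal{F}_k\} \le p_k''$ (holding for every realization, i.e., ``whatever $\zeta_1,\dots,\zeta_k$''), and the two assertions split into a \emph{convergence} half and a \emph{divergence} half. Since altering finitely many events cannot change the event $\{E_k~\text{i.o.}\}$, I am free to work with the tail $E_2,E_3,\dots$, for which the conditional bounds are supplied.

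For part (i) I would simply undo the conditioning. Taking expectations and using the tower property gives $\Pr\{E_{k+1}\} = \Ex\big[\Pr\{E_{k+1}\mid \mathcal{F}_k\}\big] \le p_k''$, so that $\sum_{k}\Pr\{E_{k+1}\} \le \sum_k p_k'' < \infty$. The conclusion $\Pr\{E_k~\text{i.o.}\}=0$ then follows at once from the first Borel--Cantelli lemma already used in the proof of Theorem~\ref{THEperfectlyaligned} (\cite[Theorem 7.3.10]{Grimmett:1992}). This half needs neither independence nor the lower bounds $p_k'$.

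Part (ii) is the crux. Here the goal is to show that the complementary event $\{E_k~\text{finitely often}\}=\bigcup_N\bigcap_{k\ge N}E_{k+1}^c$ has probability zero, for which it suffices to prove $\Pr\{\bigcap_{k\ge N}E_{k+1}^c\}=0$ for each fixed $N$. The device is a finite product bound obtained by peeling off one event at a time with the tower property: since the partial intersection $E_{N+1}^c\cap\cdots\cap E_M^c$ has an $\mathcal{F}_M$-measurable indicator and $\Pr\{E_{M+1}^c\mid\mathcal{F}_M\}=1-\Pr\{E_{M+1}\mid\mathcal{F}_M\}\le 1-p_M'$, one obtains
\begin{align}
\Pr\Big\{\bigcap_{k=N}^{M}E_{k+1}^c\Big\} &=\Ex\Big[\mathbf{1}_{\bigcap_{k=N}^{M-1}E_{k+1}^c}\,\Pr\{E_{M+1}^c\mid\mathcal{F}_M\}\Big] \nonumber\\
&\le (1-p_M')\,\Pr\Big\{\bigcap_{k=N}^{M-1}E_{k+1}^c\Big\}, \nonumber
\end{align}
and iterating this recursion down to the base case yields $\Pr\{\bigcap_{k=N}^{M}E_{k+1}^c\}\le \prod_{k=N}^{M}(1-p_k')$.

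Finally I would invoke the elementary inequality $1-x\le e^{-x}$ to get $\prod_{k=N}^{M}(1-p_k')\le \exp\big(-\sum_{k=N}^{M}p_k'\big)$, which tends to $0$ as $M\to\infty$ precisely because $\sum_k p_k'=\infty$ forces every tail sum to diverge. Hence $\Pr\{\bigcap_{k\ge N}E_{k+1}^c\}=0$ for every $N$, so $\{E_k~\text{i.o.}\}$ has probability one. The only delicate point—and the step I expect to require the most care—is the measurability bookkeeping in the recursion above: one must verify that each partial intersection is $\mathcal{F}_M$-measurable so that the assumed conditional bound applies uniformly over all outcome histories, after which the argument collapses to the deterministic product estimate. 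No independence of the $E_k$ is needed anywhere, only the uniform conditional sandwiching, which is exactly what distinguishes this Borel criterion from the classical independent version.
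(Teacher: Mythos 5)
Your proposal is correct, but note that the paper itself offers \emph{no} proof of this statement: it is introduced as ``the next theorem (which we do not prove)'' and is attributed to Borel's 1912 criterion via the citations \cite{Bo:12,Nash:1954}. So there is nothing in the paper to compare against line by line; what you have supplied is a complete, self-contained argument where the authors rely on an external reference. Your argument is the standard modern proof of the conditional (L\'evy-type) Borel--Cantelli criterion and it is sound: part (i) follows by the tower property plus the first Borel--Cantelli lemma (and, as you observe, uses only the upper bounds $p_k''$, with the statement's ``$\sum_k p_k'' \leq \infty$'' evidently a typo for $<\infty$); part (ii) follows from the peeling recursion $\Pr\{\bigcap_{k=N}^{M}E_{k+1}^c\}\le\prod_{k=N}^{M}(1-p_k')\le\exp(-\sum_{k=N}^{M}p_k')\to 0$, which is valid because each partial intersection $\bigcap_{k=N}^{M-1}E_{k+1}^c$ is indeed $\sigma(\zeta_1,\dots,\zeta_M)$-measurable and the hypothesis bounds the conditional probability uniformly over all histories. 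The only cosmetic caveat is that the hypothesis as stated conditions on $\zeta_k,\dots,\zeta_1$ and hence only constrains $E_2,E_3,\dots$; your observation that discarding finitely many events does not change $\{E_k~\text{i.o.}\}$ disposes of this. This is exactly the level of generality needed for the application in the paper (the events $E_k^j=\{z_j[kn]\ge C\}$ are not independent, only conditionally bounded below by $q^n$), so your proof would serve as a legitimate replacement for the citation.
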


\begin{theorem}
Consider a (possibly directed) strongly connected graph $\mathcal{G} = (\mathcal{V},\mathcal{E})$ and the iteration in \eqref{iter_2_gen_1}--\eqref{iter_2_gen_3}. For every $j=1,2,\dots,n$, define the event  $E_k^j=\{z_{j}[kn]\geq C\},~k \geq 1$, where $C = \frac{n}{(n+m)(\mathcal{D}^+_{\max})^{n-1}}$, $\mathcal{D}^+_{\max} = \max_{j \in \mathcal{V}} \{ \mathcal{D}^+_j \}$, $n = | \mathcal{V} |$, and $m = | \mathcal{E} |$.  Then, $\Pr\{E_k~\text{i.o.}\}=1$.
\end{theorem}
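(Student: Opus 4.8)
The plan is to obtain this statement as an immediate consequence of Theorem~\ref{THEnonzero} combined with the Borel criterion stated in Theorem~\ref{Borel_cantelli}. I would fix an arbitrary node $j \in \{1,2,\dots,n\}$ and regard $E_1^j, E_2^j, \dots$ as the sequence of events to which the criterion is applied, with $\zeta_k^j$ the corresponding indicators introduced in Theorem~\ref{THEnonzero}. Because $n$ is finite, establishing the claim for each fixed $j$ suffices.

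The key input is the uniform conditional lower bound already established: Theorem~\ref{THEnonzero} shows that, whatever the outcomes $\zeta_1^j, \dots, \zeta_k^j$ of the previous trials,
\begin{align}
\Pr\{E_{k+1}^j \mid \zeta_k^j, \zeta_{k-1}^j, \dots, \zeta_1^j\} = \Pr\{z_j[(k+1)n] \geq C \mid \zeta_k^j, \dots, \zeta_1^j\} \geq q^n. \nonumber
\end{align}
I would therefore take the \emph{constant} sequence $p_k' = q^n$ as the lower bound required by Theorem~\ref{Borel_cantelli}. Since $0 < q \leq 1$, we have $q^n > 0$, so the hypothesis $0 < p_k' \leq \Pr\{E_{k+1}^j \mid \zeta_k^j, \dots, \zeta_1^j\}$ holds for every $k$; moreover $\sum_{k=1}^{\infty} p_k' = \sum_{k=1}^{\infty} q^n = \infty$ because the summand is a strictly positive constant. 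Part (ii) of Theorem~\ref{Borel_cantelli} then yields $\Pr\{E_k^j \text{ i.o.}\} = 1$. Note that the first event $E_1^j$ is irrelevant to the ``infinitely often'' conclusion, so no separate base case is needed.

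Since $j$ was arbitrary, this holds for each of the finitely many nodes, and by finite intersection the event that \emph{every} $E_k^j$ occurs infinitely often (for all $j$ simultaneously) also has probability one. I do not expect any genuine obstacle here: all of the analytical work --- the invariance of the total mass $\mathcal{M}_k = n$, the reachability argument guaranteeing that a fraction $(\mathcal{D}^+_{\max})^{-(n-1)}$ of the mass arrives at node $j$ within $n$ steps, and the resulting lower bound $q^n$ --- has already been carried out in Theorem~\ref{THEnonzero}. The only point that must be verified is that the quantifier structure matches the criterion, namely that the bound $q^n$ is genuinely uniform over the history $\zeta_1^j, \dots, \zeta_k^j$ (it is), so that $p_k'$ can be taken constant and its series consequently diverges.
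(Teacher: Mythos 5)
Your proposal is correct and matches the paper's own proof essentially verbatim: both invoke the uniform conditional lower bound $q^n$ from Theorem~\ref{THEnonzero}, set $p_k' = q^n$, observe that $\sum_k p_k' = \infty$, and apply part (ii) of the Borel criterion in Theorem~\ref{Borel_cantelli} to conclude $\Pr\{E_k^j~\text{i.o.}\}=1$ for each $j$. The extra remark about intersecting over the finitely many nodes is a harmless (and correct) addition that the paper leaves implicit.
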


\begin{proof}
Theorem~\ref{THEnonzero} established that, for  every $j$, $\Pr\{z_{j}[(k+1)n]\geq C~|~\zeta_{k}^j,\zeta_{k-1}^j,\dots,\zeta_{1}^j\}\geq q^n$. Define $p_{k}' =q^n$, then it follows that $\sum_{k=1}^{\infty}p_k'  = \infty$, and by the second assertion in Theorem ~\ref{Borel_cantelli}, we conclude that, for every $j$, $\Pr\{E_k^j~\text{i.o.}\}=1$.
\end{proof}

The final piece is to establish  that whenever $z_j[k]\geq C$, which occurs infinitely often, each node will be able to calculate an estimate of   $\overline{v}$ by calculating the ratio $y_j[k]/z_j[k]$ and this estimate will converge to $1/\alpha$ as $k$ goes to infinity.

\begin{theorem}
For each $j$, let $k=t_1,t_2,\dots  $ be an increase sequence of  time steps for which $z_j[k]>C$. Then, almost surely 
\begin{align}
\lim_{n \rightarrow \infty}\left|\frac{y_j[t_n]}{z_j[t_n]}-\frac{1}{\alpha}\right|=0.
\end{align}
\end{theorem}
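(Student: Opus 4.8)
The plan is to reduce the claim to the two almost-sure facts already established: (i) Theorem~\ref{THEperfectlyaligned}, which gives that for every fixed $j$ the quantity $v_j[k]=z_j[k]-\alpha y_j[k]\to 0$ almost surely; and (ii) the preceding theorem, which guarantees that the (random) set of times $\{t_1,t_2,\dots\}$ on which $z_j[k]>C$ is almost surely infinite, so that $t_n\to\infty$ as $n\to\infty$. Nothing new about the dynamics needs to be computed; the entire content is carried by combining these two results with a uniform lower bound on the denominator.

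First I would record the elementary algebraic identity. Since $\alpha\neq 0$ (indeed $1/\alpha=\sum_j y_0(j)/\sum_j z_0(j)=\overline{v}$ is exactly the target value), whenever $z_j[t_n]>0$ one may write
\begin{align}
\frac{y_j[t_n]}{z_j[t_n]}-\frac{1}{\alpha}=\frac{\alpha y_j[t_n]-z_j[t_n]}{\alpha\,z_j[t_n]}=\frac{-v_j[t_n]}{\alpha\,z_j[t_n]}. \nonumber
\end{align}
Taking absolute values and using the defining property $z_j[t_n]>C>0$ of the subsequence yields the bound
\begin{align}
\left|\frac{y_j[t_n]}{z_j[t_n]}-\frac{1}{\alpha}\right|=\frac{|v_j[t_n]|}{|\alpha|\,z_j[t_n]}<\frac{|v_j[t_n]|}{|\alpha|\,C}. \nonumber
\end{align}
The role of $C$ is precisely to keep the denominator from collapsing, so that smallness of $v_j$ translates into smallness of the ratio error.

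Next I would combine the two almost-sure events. Let $\Omega_0$ be the intersection of the event $\{v_j[k]\to 0\}$ and the event $\{z_j[kn]\geq C\ \text{i.o.}\}$; by the two cited results, $\Pr(\Omega_0)=1$. Fix a sample point in $\Omega_0$. On this path $v_j[k]\to 0$ along the full sequence, hence along the subsequence $t_n$ as well (a subsequence of a convergent sequence converges to the same limit), while $t_n\to\infty$ because the subsequence is infinite. Therefore $|v_j[t_n]|/(|\alpha|\,C)\to 0$, and the displayed bound forces $\left|y_j[t_n]/z_j[t_n]-1/\alpha\right|\to 0$. Since this holds for every sample point in $\Omega_0$ and $\Pr(\Omega_0)=1$, the convergence is almost sure, which is the assertion.

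The only point requiring care is the interplay between the randomness of the times $t_n$ and the almost-sure convergence of $v_j[k]$, so I would flag it explicitly. This is harmless because Theorem~\ref{THEperfectlyaligned} delivers convergence of the \emph{full} sequence $v_j[k]$---not merely along some deterministic subsequence---so it transfers automatically to any data-dependent subsequence $t_n$ once we restrict to a sample path on which both events hold. No independence, martingale, or extra measurability argument beyond intersecting two probability-one events is needed; consequently there is no substantial obstacle, and the proof is essentially the two-line bound above together with the subsequence remark.
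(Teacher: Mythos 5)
Your proof is correct and follows essentially the same route as the paper's: bound the denominator below by $C$, invoke Theorem~\ref{THEperfectlyaligned} for $|\alpha y_j[k]-z_j[k]|\to 0$ almost surely, and pass to the (almost surely infinite) subsequence $t_n$. Your version is marginally cleaner in that you take absolute values directly rather than sandwiching with the upper bound $z_j[t_n]\leq n$, and you make the intersection of the two probability-one events explicit, but the substance is identical.
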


\begin{proof}
Since $z_j[k]\geq C$ for $k=t_1,t_2,\dots $,  it follows that  $\frac{y_j[t_n]}{z_j[t_n]}-\frac{1}{\alpha}\leq \frac{\alpha y_j[t_n]-z_j[t_n]}{\alpha C} $. Also, in the proof of Theorem \ref{THEnonzero}, we established that $\mathcal{M}_k=n, k \geq 0$, from where it follows that   $z_j[t_n] \leq n$, therefore $\frac{y_j[t_n]}{z_j[t_n]}-\frac{1}{\alpha}\geq  \frac{\alpha y_j[t_n]-z_j[t_n]}{\alpha n}$.  In Theorem \ref{THEperfectlyaligned}, we established that $|\alpha y_j[k]-z_j[k]|\rightarrow 0$ almost surely, which implies that  the subsequence $|\alpha y_j[t_n]-z_j[t_n]|\rightarrow 0$ almost surely, then  since $C<n$, we have that 
\begin{align}
\lim_{n \rightarrow \infty}  \left|\frac{y_j[t_n]}{z_j[t_n]}-\frac{1}{\alpha}\right|\leq \lim_{n \rightarrow \infty} \left| \frac{\alpha y_j[t_n]-z_j[t_n]}{\alpha C} \right|=0
\end{align}
almost surely.
\end{proof}

\section{Concluding Remarks} \label{concluding_remarks}
In this paper, we proposed a method to ensure robustness of a class of linear-iterative distributed algorithms against unreliable communication links that may drop packets. We used statistical-moment analysis and the Borel-Cantelli lemmas to establish the correctness of the proposed robustified algorithm. In Part II of this paper, we establish similar convergence properties by recasting the problem as a finite  inhomogeneous Markov and using  coefficients of ergodicity commonly to used in analyzing this type of Markov chains.


\bibliographystyle{IEEEtran}

\bibliography{distributed,distributed2}

\end{document}